%% file: Generalized_resistance_geometry.tex
\documentclass[onefignum,onetabnum]{siamart171218}

\usepackage{physics}
\usepackage{bm}
\usepackage{mathtools}
\usepackage{booktabs, multirow}
\usepackage{amsmath, amssymb, amsfonts, latexsym}
\usepackage{amscd}  
\usepackage{tikz-cd}
\usetikzlibrary{fit, backgrounds}
\usepackage{graphicx} 
\usepackage{enumitem}
\usepackage{comment}

\usepackage{tabularx}
\usepackage{array}
\usepackage{multirow}
\usepackage{booktabs}
\usepackage{tabulary}
\usepackage{booktabs}

\newcolumntype{Y}{>{\centering\arraybackslash}X}

\newsiamremark{example}{Example}

\usepackage{xcolor}
\usepackage{colortbl}
\usepackage{ulem} 

\setlist[enumerate,1]{label=\arabic*., ref=\arabic*., font=\normalfont}

\usepackage{tikz}
\usetikzlibrary{calc,arrows.meta}

\input{Generalized_resistance_geometry_shared.tex}

\ifpdf
\hypersetup{
  pdftitle={Summary},
  pdfauthor={Y. Kajiura}
}
\fi

\myexternaldocument{ex_supplement}

\renewcommand{\thesection}{\arabic{section}}
\renewcommand{\thesubsection}{\thesection.\arabic{subsection}}
\renewcommand{\thesubsubsection}{\thesubsection.\arabic{subsubsection}}

\makeatletter

\renewcommand{\tableofcontents}{%
  \par\bigskip
  \phantomsection
  \begin{center}\Large\bfseries Contents\end{center}%
  \par\medskip
  \@starttoc{toc}%
}

\newcommand{\mapsfrom}{\mathrel{\reflectbox{$\mapsto$}}}  

\begin{document}

\maketitle


\begin{abstract}
    We develop a generalized resistance geometry for unsigned directed graphs and signed undirected graphs based on Kron reduction and effective resistance. For unsigned strongly connected weight-balanced directed graphs, we establish a generalized Fiedler--Bapat identity involving an associated signed undirected Laplacian constructed from the symmetrized pseudoinverse. We identify this Laplacian as the sum of the symmetrized directed Laplacian and a positive-semidefinite correction induced by graph asymmetry. This decomposition yields an effective-resistance comparison that extends a previous normality-based result, and we prove that the construction commutes with Kron reduction. For general unsigned strongly connected directed graphs satisfying a positivity condition, we define resistance curvature and resistance radius through a weight-balanced representation. Motivated by the associated signed undirected graphs, we introduce a class whose effective resistances form a metric and show that the resulting resistance matrices are precisely the strict negative type metric matrices. Within this framework, we characterize the unique solution of the maximum graph-variance problem and develop generalized resistive embeddings into Euclidean space.
\end{abstract}

\begin{keywords}
  directed graph, 
  Kron reduction, 
  effective resistance, 
  Fiedler--Bapat identity, 
  generalized resistance metric, 
  maximum-variance problem, 
  resistive embedding.
\end{keywords}


\section{Introduction}
\label{sec:introduction}

\subsection{Background}
\label{subsec:background}
    Kron reduction and effective resistance provide fundamental connections among spectral graph theory, electrical networks, and model reduction~\cite{Dorfler2013,Doyle_Snell_1984,Klein1993,Kron1939TAN}. For connected unsigned undirected graphs, effective resistance is a metric defined through the Laplacian pseudoinverse, while Kron reduction eliminates nodes without changing the effective resistances among the retained nodes~\cite{Dorfler2013,Klein1993}. The resulting resistance matrix also admits a geometric interpretation through Euclidean embeddings and simplex geometry, with applications to graph-variance optimization and related graph invariants~\cite{Devriendt2022graph,Fiedler2011_2}.
    
    These notions have recently been extended to directed graphs. For unsigned strongly connected directed graphs, reachability-based Kron reduction preserves directed effective resistance~\cite{Sugiyama2023graph}. In the weight-balanced case, the symmetrized pseudoinverse of the directed Laplacian is positive-semidefinite and represents effective resistance as a symmetric node metric \cite{Fontan2023pseudoinverse,Sugiyama2023graph}. Related studies have also investigated directed graph symmetrization and structural properties of Laplacian pseudoinverses \cite{Fitch2019EffectiveResistance,Fontan2021pseudoinverse,Fontan2023pseudoinverse}.
    
    In the classical unsigned undirected setting, the Fiedler--Bapat identity connects the resistance matrix with the graph Laplacian and gives rise to the resistance curvature and resistance radius \cite{Devriendt2022graph,Fiedler1962}.
    These quantities characterize the maximum graph-variance problem and admit natural interpretations in the associated resistive simplex.
    Extending this framework is nontrivial because directed Laplacians are asymmetric, while signed undirected Laplacians may induce effective resistances that violate the triangle inequality. In this paper, we address these difficulties through an associated signed undirected Laplacian for weight-balanced directed graphs and use it to extend resistance invariants, metric geometry, graph-variance optimization, and resistive embeddings beyond the classical setting.

\subsection{Contribution}
\label{subsec:contribution}
\begin{itemize}[leftmargin=1em,labelsep=0.5em]
    \item
        \textbf{Generalized Fiedler--Bapat identity for directed graphs.}
        We establish a generalized Fiedler--Bapat identity for strongly connected weight-balanced (SCWB) directed graphs, relating the resistance matrix to an associated signed undirected Laplacian constructed from the symmetrized pseudoinverse of the directed Laplacian.
        Furthermore, for general strongly connected (SC) directed graphs satisfying a positivity condition, we define resistance curvature and resistance radius through a weight-balanced representation.

    \item
        \textbf{Structural properties and Kron reduction compatibility.}
        We show that the associated signed undirected Laplacian decomposes into the symmetrized directed Laplacian and a positive-semidefinite correction. This decomposition yields a Loewner-order comparison between the corresponding pseudoinverses and, consequently, a comparison of their effective resistances.
        Crucially, we prove that the construction of the associated signed undirected Laplacian commutes with Kron reduction and summarize the resulting relations in a commutative diagram.

    \item
        \textbf{Generalized resistance metrics and geometric consequences.}
        We introduce a class of signed undirected Laplacians whose effective resistances form a metric and show that the induced resistance matrices are precisely the strict negative type metric matrices~\cite{Blumenthal1953theory}. We further characterize the maximum graph-variance problem and construct generalized resistive embeddings, extending the classical resistance geometry~\cite{Devriendt2022graph}.
\end{itemize}

\subsection{Outline}
\label{subsec:outline}
The remainder of this paper is organized as follows.
Section~2 reviews preliminaries on Laplacians, Kron reduction, effective resistance, graph variance, and metric geometry.
Section~3 develops the generalized Fiedler--Bapat identity and studies structural properties and resistance invariants under weight-balancing and Kron reduction. Section~4 extends the framework to signed undirected Laplacians and develops generalized resistance metrics, maximum graph-variance, and resistive embeddings.

\subsection{Notation}
\label{subsec:notation}

    Let $\bm{1}, \bm{0}$ be vectors of appropriate dimension whose components are all 1 and all 0, respectively. 
    Let $\bm{1}_i$ be the vector of appropriate dimension with 1 at position $i$ and $0$ at other positions. 
    For a matrix $A \in \mathbb{R}^{n \times n}$, when all components are non-negative, we write $A \geq O$. 
    We define the vector consisting of the diagonal elements of $A$ as $\operatorname{diagvec}(A) := [A_{11}, \dots, A_{nn}]^{\top} \in \mathbb{R}^n$. The eigenvalues of $A$ are written as  $\operatorname{sp}(A) = \{ \lambda_1(A), \dots, \lambda_n(A) \}$. 
    A symmetric matrix $A \in \mathbb{R}^{n \times n}$ is positive-semidefinite (PSD) if $\bm{x}^{\top}A\bm{x}\geq0$ for every $\bm{x}\in\mathbb{R}^n$, and we write $A \succeq O$. Given a finite set $X$, let $|X|$ be its cardinality, 
    and define the index set as $[n] = \{1,\ldots,n\}$ for $n \in \mathbb{N}$.
    For $\alpha,\beta \subset [n]$, let $A[\alpha ,\beta] \in \mathbb{R}^{|\alpha| \times |\beta|}$ denote the submatrix of $A$ obtained by the rows indexed by $\alpha$ and the columns indexed by $\beta$. When $\alpha = \beta$, we write this as $A[\alpha]$.
    Similarly, for a vector $\bm{v} \in \mathbb{R}^n$, 
    let $\bm{v}[\alpha] \in \mathbb{R}^{|\alpha|}$ denote the subvector obtained from the elements indexed by $\alpha$. 
    The \textit{Schur complement} of $A \in \mathbb{R}^{n \times n}$ with respect to a node set $\alpha$ is the $( |\alpha| \times |\alpha|)$-dimensional matrix, defined as
      \begin{equation*}\label{eq:Schur_complement}
        A/ \alpha^c := A / A[\alpha^c, \alpha^c] = A[\alpha, \alpha] - A[\alpha, \alpha^c] A[\alpha^c, \alpha^c]^{-1} A[\alpha^c, \alpha]
      \end{equation*}
    if $A[\alpha^c, \alpha^c]$ is invertible. 
    When we denote $\alpha^c = \{1, \dots, |\alpha^c|\}$, then $A / \alpha^c$ corresponds to the repeated Schur complements: 
    $A / \alpha^c = ((A/ \{ |\alpha^c| \}) / \cdots ) / \{1\}$ (Thm.I.3 of \cite{Zhang2005SchurComplement}).

    Let $\mathcal{G} = ([n], \mathcal{E}, \mathcal{A})$ be a graph without self-loops, 
    where $[n],\ \mathcal{E}$ and $\mathcal{A} \in \mathbb{R}^{n \times n}$ 
    denote the node set, the edge set, and the weighted adjacency matrix, respectively. 
    For off-diagonal elements, $\mathcal{A}_{ij} \neq 0$ means that there exists an edge from $i$ to $j$. 
    A graph is called \textit{unsigned} if all edge weights are non-negative; otherwise, it is called \textit{signed}. 
    The out-degree matrix is defined as 
    $\mathcal{D} := \text{diag} (\sum_{j=1}^n \mathcal{A}_{1j}, \dots, \sum_{j=1}^n \mathcal{A}_{nj})$ 
    where  $\diag(d_1, \dots, d_n)$ denotes the associated diagonal matrix.
    The Laplacian matrix is defined as $\mathcal{L} := \mathcal{D} - \mathcal{A}$. We denote a graph by $\mathcal{G}(\mathcal{L})$, in the sense that the structure of the graph is determined by the Laplacian.

    A directed graph $\mathcal{G}(\mathcal{L})$ is \textit{strongly connected (SC)} if there is a path between any pair of nodes, which is equivalent to the irreducibility of its Laplacian $\mathcal{L}$. Here, $X \in \mathbb{R}^{n \times n}$ is irreducible if for any $i \neq j$, there exists a sequence $k_1 = i, \dots, k_m = j$ such that $X_{k_1 k_2}, \dots, X_{k_{m-1} k_m}$ are all nonzero.

\section{Preliminaries}
\label{sec:preliminaries}
    To provide the necessary background for the generalizations developed in this paper, we review the Laplacian, Kron reduction, and effective resistance for unsigned directed graphs, as well as the maximum graph-variance problem and metric geometry in the classical unsigned undirected setting. These concepts have been extensively studied for unsigned undirected graphs \cite{Devriendt2022graph,Dorfler2013} and have more recently been extended to directed and signed graphs \cite{Fitch2019EffectiveResistance,Fontan2023pseudoinverse,Sugiyama2023graph,Young2016}. Unless otherwise stated, we assume that all graphs under consideration are connected and unsigned.

\subsection{Laplacian for graphs}
    We first recall the Laplacian matrix of a directed graph $\mathcal{G}(\mathcal{L}) = ([n], \mathcal{E}, \mathcal{A})$. 
    Its Laplacian satisfies $\mathcal{L}_{ij} \leq 0 \ (i, j \in [n],\ i \neq j)$, $\mathcal{L}_{ii} = -\sum_{j \neq i} \mathcal{L}_{ij} \ (\forall i \in [n])$, and $\mathcal{L}\bm{1} = \bm{0}$. 
    If there is a directed edge from $i$ to $j$, then $\mathcal{L}_{ij} = -\mathcal{A}_{ij} < 0$, and $\mathcal{L}_{ii}$ is equal to the weighted out-degree of node $i$. 
    As a special case, when the graph is undirected, $\mathcal{L}$ is symmetric. 
    In this case, the graph is connected if and only if $\mathcal{L}$ is PSD with its kernel being $\operatorname{span}(\bm{1})$ \cite{GodsilRoyle2001AGT}.
    
    In the remainder of this subsection, consider the case where $\mathcal{G}(\mathcal{L})$ is an SC-directed graph. 
    In this case, the eigenvalue 0 of $\mathcal{L}$ is simple and the other eigenvalues have positive real parts (Thm.I\hspace{-1.2pt}I of \cite{Altafini2019Investigating}). 
    Let the singular values of $\mathcal{L}$ be $s_i > 0 \ (\forall i \in [n-1])$ and $s_n = 0$. 
    Then, there exist orthogonal matrices 
    $U=[\bm{u}_1 \cdots \bm{u}_n]$ and $V=[\bm{v}_1 \cdots \bm{v}_n]$, 
    and $\Sigma=\operatorname{diag}(s_1, \dots, s_{n-1},0)$ 
    such that $\mathcal{L}=U\Sigma V^{\top}=\sum_{i=1}^{n-1} s_i\,\bm{u}_i\bm{v}_i^{\top}$ (SVD). 
    Combining $\mathcal{L}\bm{1}=\sum_{i=1}^{n-1} s_i (\bm{v}_i^{\top}\bm{1})\bm{u}_i = \bm{0}$ with the linear independence of $\bm{u}_1,\dots,\bm{u}_{n-1}$, 
    we have $\bm{v}_i^{\top} \bm{1} = 0 \ (\forall i \in [n-1])$, 
    meaning we can choose $\bm{v}_n = \bm{1}/\sqrt{n} $. 
    The pseudoinverse Laplacian is expressed as $\mathcal{L}^{\dagger} = \sum_{i=1}^{n-1} (1 / s_i) \bm{v}_i \bm{u}_i^{\top}$ and satisfies the following relations:

    \begin{equation}
        \label{eq:LLdagger_LdaggerL}
        \mathcal{L}\mathcal{L}^{\dagger} 
        = \sum_{i=1}^{n-1} \bm{u}_i \bm{u}_i^{\top} 
        = I_n - \bm{u}_n \bm{u}_n^{\top} 
        ,\quad
        \mathcal{L}^{\dagger} \mathcal{L}
        = \sum_{i=1}^{n-1} \bm{v}_i \bm{v}_i^{\top}
        = I_n - \bm{1} \bm{1}^{\top} / n.
    \end{equation}

    A directed graph $\mathcal{G}(\mathcal{L})$ is said to be \textit{weight-balanced (WB)} if 
    the in-degree and out-degree are equal at each node, which can be written as $\mathcal{L}\bm{1} = \mathcal{L}^{\top} \bm{1} = \bm{0}$. 
    Here we summarize the Laplacian properties of strongly connected weight-balanced (SCWB) directed graphs.
    The symmetrized Laplacian is defined as
    $\mathcal{L}_s := (\mathcal{L} + \mathcal{L}^{\top})/2$, 
    which is PSD with its kernel being $\operatorname{span}(\bm{1})$ (Cor.I of \cite{Altafini2019Investigating}). 
    Let $\mathcal{L}^{\dagger}_s$ be the symmetrized pseudoinverse Laplacian: $\mathcal{L}^{\dagger}_s := (\mathcal{L}^{\dagger} + (\mathcal{L}^{\dagger})^{\top})/2$.
    Here, note that $\mathcal{L}^{\dagger}_s \neq (\mathcal{L}_s)^{\dagger}$ (Rem.I\hspace{-1.2pt}V.6 of \cite{Fontan2023pseudoinverse}). 
    The following relations hold for any $\gamma \neq 0$ 
    (Lem.I\hspace{-1.2pt}V, Thm.I\hspace{-1.2pt}V of \cite{Fontan2021pseudoinverse}):
    \begin{equation}
        \label{eq:Ldaggers}
        \mathcal{L}^{\dagger} = (\mathcal{L} + \tfrac{\gamma }{n} \bm{1}\bm{1}^{\top})^{-1} - \tfrac{1}{n \gamma} \bm{1}\bm{1}^{\top},\quad
        \mathcal{L}^{\dagger}_s 
        = \left(\mathcal{L} + \tfrac{\gamma}{n} \bm{1}\bm{1}^{\top} \right)^{-1} \mathcal{L}_s \left( \mathcal{L} + \tfrac{\gamma}{n} \bm{1} \bm{1}^{\top} \right)^{-\top}.
    \end{equation} 
    Since $\mathcal{L}_s$ is PSD with its kernel being $\operatorname{span}(\bm{1})$, \cref{eq:Ldaggers} shows that $\mathcal{L}^{\dagger}_s$ is congruent to $\mathcal{L}_s$ through an invertible matrix. Moreover, $(\mathcal{L} + \tfrac{\gamma}{n} \bm{1}\bm{1}^{\top})^{\top}\bm{1} = \gamma\bm{1}$. Hence, $\mathcal{L}^{\dagger}_s$ is PSD with its kernel being $\operatorname{span}(\bm{1})$. 
    Since an SCWB-directed graph satisfies  $\ker(\mathcal{L}^{\top})=\operatorname{span}(\bm{1})$, we can choose $\bm{u}_n=\bm{1}/\sqrt{n}$. Hence, by \cref{eq:LLdagger_LdaggerL}, $\mathcal{L}\mathcal{L}^{\dagger} = I_n - \bm{1}\bm{1}^{\top} / n = \mathcal{L}^{\dagger} \mathcal{L}$.
    
    Let $(\mathcal{L}^{\dagger}_s)^{\dagger}$ be the pseudoinverse of $\mathcal{L}^{\dagger}_s$, which is symmetric PSD with its kernel being $\operatorname{span}(\bm{1})$. Thus, $(\mathcal{L}^{\dagger}_s)^{\dagger}$ can be thought of as the Laplacian of some connected undirected graph. Each node in $\mathcal{G}((\mathcal{L}^{\dagger}_s)^{\dagger})$ has positive degree, since $(\mathcal{L}^{\dagger}_s)^{\dagger}_{ii} = \bm{1}_i^{\top} (\mathcal{L}^{\dagger}_s)^{\dagger} \bm{1}_i > 0 \ (\forall i \in [n])$. However, its off-diagonal entries are not necessarily non-positive. Therefore, $\mathcal{G}((\mathcal{L}^{\dagger}_s)^{\dagger})$ is, in general, a signed undirected graph.
    \begin{example}
        \label{eq:exmaple_SCWB_directed_to_signed_undirected}
        Consider the following Laplacian of an SCWB-directed graph:
        \begin{equation*}
            \mathcal{L}
            =
            \begin{bmatrix}
              3 & -1 & -2 & 0 \\
              0 & 1 & 0 & -1 \\
              0 & 0 & 2 & -2 \\
              -3 & 0 & 0 & 3
            \end{bmatrix}
            \implies
            (\mathcal{L}^{\dagger}_s)^{\dagger}
            =
            \dfrac{1}{9}
            \begin{bmatrix}
              36 & -6 & -12 & -18 \\
              -6 & 10 & \mathbf{2} & -6 \\
              -12 & \mathbf{2} & 22 & -12 \\
              -18 & -6 & -12 & 36
            \end{bmatrix}
            \neq 
            \mathcal{L}_s.
        \end{equation*}
        The undirected edge $(2,3)$ in $\mathcal{G}((\mathcal{L}^{\dagger}_s)^{\dagger})$ has a negative weight, since the corresponding off-diagonal Laplacian entry is $(\mathcal{L}^{\dagger}_s)^{\dagger}_{23}=(\mathcal{L}^{\dagger}_s)^{\dagger}_{32} = 2/9 > 0$.
    \end{example}

    \cref{tab:Laplacian_connected_graphs} summarizes the properties of the Laplacian matrix when connected graphs are classified into four types based on the sign and direction of their edge weights. \cref{tab:SCWB_Laplacian_compare} summarizes the Laplacian matrices from unsigned SCWB-directed graphs.

    \begin{table}[!b]
        \centering
        \caption{Properties of the Laplacian matrix of a connected graph.}
        \label{tab:Laplacian_connected_graphs}
        \small
        \renewcommand{\arraystretch}{1.35}
        \begin{tabulary}{\textwidth}{c|c|c}
            \toprule[0.5pt] 
            & Unsigned & Signed \\ 
            \hline
            Undirected
            &
            $
            \mathcal{L}\succeq O,\ \ker(\mathcal{L})=\operatorname{span}(\bm{1}) \
            \cite{GodsilRoyle2001AGT}
            $
            &
            $
            \begin{gathered}
                \mathcal{L}\succeq O \text{ is not guaranteed}, \\
                \ker(\mathcal{L}) \supseteq \operatorname{span}(\bm{1}) \ \cite{Altafini2019Investigating,Fontan2023pseudoinverse}
            \end{gathered}
            $
            \\ \hline
            SC-directed
            &
            $
            \begin{gathered}
                \operatorname{Re}(\lambda)>0 \ (\forall \lambda\in\operatorname{sp}(\mathcal{L})\backslash \{0\} ), \\
                \ker(\mathcal{L})=\operatorname{span}(\bm{1}) \ \cite{Altafini2019Investigating}
            \end{gathered}
            $
            &
            $
            \begin{gathered}
                \operatorname{Re}(\lambda)>0 \ (\forall \lambda\in\operatorname{sp}(\mathcal{L}) \backslash \{0\})
                \text{ is not} \\
                \text{ guaranteed},\
                \ker(\mathcal{L}) \supseteq \operatorname{span}(\bm{1}) \ \cite{Altafini2019Investigating,Fontan2023pseudoinverse}
            \end{gathered}
            $
            \\
            \bottomrule[0.5pt] 
        \end{tabulary}
    \end{table}
    
    \begin{table}[!b]
        \centering
        \caption{A comparison of matrices from the Laplacian of an unsigned SCWB-directed graph.}
        \label{tab:SCWB_Laplacian_compare}
        \small
        \renewcommand{\arraystretch}{1}
        \begin{tabularx}{\textwidth}{cc|c|c|c|c}
            \toprule[0.5pt] 
            \multicolumn{2}{c|}{Type} & Kernel & Spectrum & Diagonal & Off-diagonal \\ 
            \hline
            $\mathcal{L}$ 
            & SCWB-directed & & Thm.I\hspace{-1.2pt}I of \cite{Altafini2019Investigating} & & \multirow{2}{*}{Non-positive} 
            \\ 
            \cline{4-4}
            $\mathcal{L}_s$ 
            & $(\mathcal{L} + \mathcal{L}^{\top})/2$ & & \multirow{3}{*}{\shortstack{$(\mathcal{L}^{\dagger}_s)^{\dagger} \succeq \mathcal{L}_s \succeq O$ \\ (see \cref{thm:Ldaggersdagger_decomposition}) }} & & 
            \\ 
            \cline{6-6}
            $\mathcal{L}^{\dagger}_s$ & $(\mathcal{L}^{\dagger} + (\mathcal{L}^{\dagger})^{\top})/2$ & \multirow{-2}{*}{$\operatorname{span}(\bm{1})$} & & \multirow{-2}{*}{Positive} & \multirow{2}{*}{Mixed} 
            \\ 
            $(\mathcal{L}^{\dagger}_s)^{\dagger}$ 
            & Pseudoinverse of $\mathcal{L}^{\dagger}_s$ & & & & 
            \\ 
            \bottomrule[0.5pt] 
        \end{tabularx}
    \end{table}

\subsection{Kron Reduction}
\label{subsec:KR}
\textit{Kron reduction} is a standard technique for reducing the size of a graph while preserving node connectivity, as well as effective resistance \cite{Dorfler2013, Kron1939TAN}. This method has been widely studied for circuit analysis and unsigned undirected graphs.
For an unsigned undirected graph $\mathcal{G}(\mathcal{L})$, the \textit{Kron-reduced Laplacian} $\mathcal{L}_{\text{red}} \in \mathbb{R}^{|\alpha| \times |\alpha|}$ with respect to the node set $\alpha \subset [n] \ (|\alpha| \geq 2)$ is defined as
\begin{equation}
    \mathcal{L}_{\text{red}} := \mathcal{L} / \alpha^c = \mathcal{L}[\alpha, \alpha] - \mathcal{L}[\alpha, \alpha^c] \mathcal{L}[\alpha^c, \alpha^c]^{-1}  \mathcal{L}[\alpha^c, \alpha].
\end{equation}
The matrix $\mathcal{L}_{\text{red}}$ preserves the Laplacian structure of an unsigned undirected graph: 
$\mathcal{L}_{\text{red}}$ is PSD with its kernel being $\operatorname{span}(\bm{1})$ and $(\mathcal{L}_{\text{red}})_{ij} \leq 0 \ (i \neq j)$ are satisfied. 
The graph $\mathcal{G}_{\text{red}} (:= \mathcal{G}(\mathcal{L}_{\text{red}}))$ is called the \textit{Kron-reduced graph}.

In recent research, Kron reduction has been extended to directed graphs and signed graphs \cite{Fontan2023pseudoinverse, Sugiyama2023graph, Young2016}. This paper mainly builds on the theory of Kron reduction and effective resistance for directed graphs developed by Sugiyama and Sato in \cite{Sugiyama2023graph}. They have designed a Kron reduction technique for unsigned directed graphs based on a node set property called ``reachability".
\begin{definition}[reachability, Def.I\hspace{-1.2pt}I\hspace{-1.2pt}I.2 of \cite{Sugiyama2023graph}]\label{def:reachable_subset}
    Let $\mathcal{G} = ([n], \mathcal{E}, \mathcal{A})$ be a directed graph. 
    The subset of nodes $\alpha \subset [n] \ (|\alpha| \geq 2)$ is said to be reachable if for any $i \in \alpha^c$, 
    there exists a node $j \in \alpha $ and a path in $\mathcal{G}$ from $i$ to $j$.
\end{definition}

When $\alpha$ is a reachable subset, 
$\mathcal{L}_{\text{red}}$ is well defined and the Laplacian structure of a directed graph, namely, $\mathcal{L} \bm{1} = \bm{0}$ and $\mathcal{L}_{ij} \leq 0 \ (i, j \in [n], i \neq j)$, 
are preserved after Kron reduction (Lem.I\hspace{-1.2pt}I\hspace{-1.2pt}I.3,4 of \cite{Sugiyama2023graph}).
Note also that Kron reduction keeps the WB property
(Thm.I\hspace{-1.2pt}I\hspace{-1.2pt}I.12 of \cite{Sugiyama2023graph}) 
and the SC property of the Laplacian. 

\subsection{Effective resistance}
\label{subsec:ER}
The \textit{effective resistance} has been widely studied in unsigned undirected graphs \cite{Devriendt2022graph, Dorfler2013}. 
Algebraically, the effective resistance between nodes is defined by the pseudoinverse Laplacian $\mathcal{L}^{\dagger}$:
\begin{equation}\label{eq:ER}
    R_{\mathcal{G}}(a, b) := (\bm{1}_a - \bm{1}_b)^{\top} \mathcal{L}^{\dagger} (\bm{1}_a - \bm{1}_b) \quad \forall a, b \in [n].
\end{equation}
Effective resistance is a metric and is invariant under Kron reduction (see \cite{Devriendt2022graph}).

Effective resistance has been extended to directed graphs in various ways.
In particular, Sugiyama and Sato in \cite{Sugiyama2023graph} proposed a definition based on random walks on directed graphs. 
The effective resistance $R_{\mathcal{G}}(a,b)$ from $a$ to $b$ is defined as
\begin{equation}\label{eq:def_ER}
    R_{\mathcal{G}}(a, b) := 1 / (\mathcal{L} / \{a, b\}^c)_{aa} \quad \text{if } \{a, b\} \subset [n] \text{ is reachable}
\end{equation}
(Def.I\hspace{-1.2pt}V.3 of \cite{Sugiyama2023graph}), and is asymmetric with respect to the node indices and is invariant under Kron reduction (Lem.I\hspace{-1.2pt}V.4 of \cite{Sugiyama2023graph}).
In an SC-directed graph, effective resistance is defined between any node pair based on subset reachability. When the graph is additionally weight-balanced, the effective resistance $R_{\mathcal{G}}(a, b)$ and the resistance matrix $\Omega := (R_{\mathcal{G}}(a, b))_{a, b \in [n]}$ are expressed as
\begin{equation}
    \label{eq:ER_SCWB}
    \begin{gathered}
    R_{\mathcal{G}}(a,b) = (\bm{1}_a - \bm{1}_b)^{\top} \mathcal{L}_s^{\dagger}(\bm{1}_a - \bm{1}_b) = (\mathcal{L}^{\dagger}_s)_{aa} + (\mathcal{L}^{\dagger}_s)_{bb} -2 (\mathcal{L}^{\dagger}_s)_{ab},\\
    \Omega = \bm{1} \bm{\zeta}^{\top} + \bm{\zeta} \bm{1}^{\top} -2\mathcal{L}^{\dagger}_s \quad \text{ where } \bm{\zeta} = \operatorname{diagvec}(\mathcal{L}^{\dagger}_s)
    \end{gathered}
\end{equation}
(Thm.I\hspace{-1.2pt}V.9 of \cite{Sugiyama2023graph}). Moreover, $R_\mathcal{G}$ is the metric between nodes (Thm.I\hspace{-1.2pt}V.11 of \cite{Sugiyama2023graph}). 

For an SC-directed graph $\mathcal{G}(\mathcal{L})$, there exists a matrix $M = \operatorname{diag}(m_1, ..., m_n) > O$ 
such that $\mathcal{L}_{\text{WB}} := M \mathcal{L}$ is the Laplacian of an SCWB-directed graph (Thm.I\hspace{-1.2pt}I of \cite{Altafini2019Investigating}). The effective resistance in $\mathcal{G}(\mathcal{L})$ and that in $\mathcal{G}_{\text{WB}} (:= \mathcal{G}(\mathcal{L}_{\text{WB}}))$ have the following relationship (Thm.I\hspace{-1.2pt}V.16 of \cite{Sugiyama2023graph}):
\begin{equation}
    \label{eq:ER_onlySC}
    R_{\mathcal{G}}(a, b) = m_a (\bm{1}_a - \bm{1}_b)^{\top} (\mathcal{L}_{\text{WB}})^{\dagger}_s (\bm{1}_a - \bm{1}_b) = m_a R_{\mathcal{G}_{\text{WB}}}(a, b)
    \quad \forall a, b \in [n].
\end{equation}

\subsection{Maximum variance problem}
\label{sec:maximum_variance_problem}
Let $\mathcal{G} = ([n], \mathcal{E}, \mathcal{A})$ be a graph with a metric $d$ between its nodes. 
The \textit{graph-variance} is defined as $\operatorname{var}_d (\bm{f}) := \bm{f}^{\top} D \bm{f}/2$ for any $\bm{f} \in \Delta_n$. 
Here, $D := (d(i, j))_{i, j \in [n]}$ is a distance matrix 
and $\Delta_n := \{\bm{f} \geq \bm{0} \mid \bm{f}^{\top} \bm{1} = 1 \}$ is the probability simplex on $\mathbb{R}^n$. 
Graph-variance corresponds to the mean squared distance between two nodes chosen according to the probability distribution $\bm{f} = (f_i)_{i \in [n]}$ on the nodes (Def.V.1 of \cite{Devriendt2022graph}).

The \textit{maximum variance problem} (see Chap.V of \cite{Devriendt2022graph}) is defined as 
\begin{equation}\label{eq:maxmium_variance_problem}
    \begin{aligned}
    & \underset{\bm{f} \in \Delta_n}{\text{maximize}} && \operatorname{var}_d (\bm{f}).
\end{aligned}
\end{equation}
Since $\Delta_n$ is compact and $\operatorname{var}_d$ is continuous,
the maximum is attained at some $\bm{f}^* \in \Delta_n$, called the \textit{maximum variance distribution}.
A necessary condition for $\bm{f} \in \Delta_n$ to be the maximum variance distribution is given in Prop.V.13 of \cite{Devriendt2022graph}:
\begin{equation}
    \label{eq:necesaary_condition_2}
    D[\mathcal{V}, \mathcal{V}] \bm{f}[\mathcal{V}] = (\bm{1}^{\top} D[\mathcal{V}, \mathcal{V}]^{\dagger} \bm{1})^{-1} \bm{1}, 
\end{equation}
where $\mathcal{V} := \{ i \mid f_i > 0 \}$ is the support of $\bm{f}$, termed the \textit{maximum variance support}. 

\subsection{Classification of metric spaces}
\label{subsec:classification}
Here, we summarize and classify metric spaces.
A metric space $(X, d)$ is \textit{negative type} 
if its distance matrix satisfies
$\bm{f}^{\top} D \bm{f} \leq 0 \text{ for all } \bm{f} \in \operatorname{span}(\bm{1})^{\perp}$. 
If the inequality is strict for all
$\bm{f} \in \operatorname{span}(\bm{1})^{\perp} \setminus \{ \bm{0} \}$, 
it is called \textit{strict negative type} (Def.I\hspace{-1.2pt}V.8 of \cite{Devriendt2022graph}). 
It is known that negative type metric spaces have an embedding 
into the Euclidean space $\mathbb{R}^{n-1}$ (\cite{Blumenthal1953theory}, Prop.I\hspace{-1.2pt}V.11 of \cite{Devriendt2022graph}). 
Here, 
$B = [\bm{b}_1 \ \cdots \ \bm{b}_n] \in \mathbb{R}^{(n-1) \times n}$ and 
$E = (\| \bm{b}_i - \bm{b}_j \|^2) \in \mathbb{R}^{n \times n}$ 
are matrices consisting of the coordinates of $n$ points and the squared Euclidean distance matrix, respectively. 
The \textit{centered Gram matrix} $G$ is defined as 
$G := (I_n - \bm{1}\bm{1}^{\top}/n) B^{\top} B (I_n - \bm{1}\bm{1}^{\top}/n)$, 
and the following relations hold:
\begin{equation}
    \label{eq:gram_euclid}
    \begin{gathered}
        G = -\frac{1}{2} (I_n - \bm{1}\bm{1}^{\top}/n) E (I_n - \bm{1}\bm{1}^{\top} /n), \\
        E = \bm{1} \bm{\xi}^{\top} + \bm{\xi} \bm{1}^{\top} - 2G \ \text{ where } \  \bm{\xi} := \operatorname{diagvec}(G).
    \end{gathered}
\end{equation}
$G$ is PSD with its kernel containing (being) $\operatorname{span}(\bm{1})$ if and only if $G$ is a centered Gram matrix (of simplex) (Prop.I\hspace{-1.2pt}V.13, 15 of \cite{Devriendt2022graph}). 
The effective resistance in an unsigned undirected graph becomes a metric, known as \textit{resistance metric}. 
A brief summary and the inclusion relations of metric spaces are shown in \cref{fig:classification}.

\begin{figure}[t]
  \centering
  \begin{minipage}[t]{0.40\textwidth}
    \vspace{0pt} 
    \centering
    \begin{tikzpicture}[scale=0.62] 
      \begin{scope}[every node/.style={font=\scriptsize}]
        \fill[white!30, opacity=0.3] (0,0) ellipse (4cm and 2.5cm);
        \draw[black, thick] (0,0) ellipse (4cm and 2.5cm);
        \node[align=center] at (0,1.8) {\textbf{Negative type}};

        \fill[white!30, opacity=0.3] (0,-0.5) ellipse (2.7cm and 1.7cm);
        \draw[black, thick] (0,-0.5) ellipse (2.7cm and 1.7cm);
        \node[align=center] at (0,0.5) {\textbf{Strict} \\ \textbf{negative type}};

        \fill[white!30, opacity=0.3] (0,-1.0) ellipse (1.5cm and 1.0cm);
        \draw[black, thick] (0,-1.0) ellipse (1.5cm and 1.0cm);
        \node[align=center] at (0,-1.0) {\textbf{Resistance} \\ \textbf{metric space}};
      \end{scope}
    \end{tikzpicture}
  \end{minipage}%
  \hspace{0.02\textwidth}%
  \begin{minipage}[t]{0.58\textwidth}
    \vspace{0pt} 
    \raggedright
    \small
    
    \textbf{Characterizations}

    \medskip
    \textbf{Negative type:}
    $\forall \bm{f} \in \operatorname{span}(\bm{1})^{\perp},\ \bm{f}^{\top} D \bm{f} \leq 0$.
    \\
    $\exists \{ \varphi(\bm{1}_1), \dots, \varphi(\bm{1}_n)\}$: embedded points in $\mathbb{R}^{n-1}$.

    \medskip
    \textbf{Strict negative type:}
    $\forall \bm{f} \in \operatorname{span}(\bm{1})^{\perp}\backslash\{\bm{0}\},\ \bm{f}^{\top} D \bm{f} < 0$.
    \\
    Embedded $n$ points form a simplex in $\mathbb{R}^{n-1}$.

    \medskip
    \textbf{Resistance metric:}
    $D = \Omega,\ G = \mathcal{L}^{\dagger},\ G^{\dagger} = \mathcal{L}$.
    \\
    Embedded $n$ points form a hyperacute simplex \cite{Devriendt2022graph}.
  \end{minipage}

  \caption{The hierarchical relations among three classes of metric spaces: Negative type metric $\supsetneq$ Strict negative type metric $\supsetneq$ Resistance metric space. For a detailed explanation, see \cite{Devriendt2022graph, Fiedler2011_2}.}
  \label{fig:classification}
\end{figure}

\section{Main results}
\label{sec:main_results}
In this section, we establish a generalized Fiedler--Bapat identity for SCWB-directed graphs.
We then compare the symmetrized Laplacian with the associated signed undirected Laplacian $(\mathcal{L}^{\dagger}_s)^{\dagger}$, identify the positive-semidefinite correction induced by graph directionality, and derive a corresponding effective-resistance comparison. 
We next prove that the map $\mathcal{L} \mapsto (\mathcal{L}^{\dagger}_s)^{\dagger}$ and weight-balancing both commute with Kron reduction.
For SC-directed graphs satisfying $\bm{1}^{\top}\Omega^{-1}\bm{1}>0$, we define the resistance curvature and resistance radius through a weight-balanced representation. Finally, we establish the monotonicity of the resistance radius under Kron reduction in the SCWB-directed setting.

\subsection{Generalized Fiedler--Bapat identity}
\label{subsec:FBI}

The FBI is an inverse matrix identity that relates the Laplacian to the resistance matrix for an unsigned undirected graph. It was derived by M.~Fiedler in the context of simplex geometry \cite{Fiedler2011} and is thoroughly explained in Chap.I\hspace{-1.2pt}I\hspace{-1.2pt}I of \cite{Devriendt2022graph}. Here, we generalize the FBI and propose a matrix identity in the SC-directed setting. First, we present the generalized FBI in the SCWB-directed setting. We focus on three points:
\begin{itemize}
    \item 
    \textbf{Strongly connected:} 
    When $\mathcal{G}(\mathcal{L})$ is SC-directed, 
    the effective resistance is defined between any node pair; thus the resistance matrix $\Omega$ is well-defined.
    \item 
    \textbf{Weight-balanced:}
    In an SCWB-directed graph, 
    the effective resistance is a metric that can be expressed by $\mathcal{L}^{\dagger}_s$ \cref{eq:ER_SCWB}. 
    The matrix $\mathcal{L}^{\dagger}_s$ is PSD with its kernel being $\operatorname{span}(\bm{1})$
    (Thm.I\hspace{-1.2pt}V.4 of \cite{Fontan2023pseudoinverse}) and $\Omega$ is symmetric. 
    \item 
    \textbf{Comparing the expressions of the effective resistance:}
    We compare two formulas for the effective resistance: \cref{eq:ER} for undirected graphs and \cref{eq:ER_SCWB} for SCWB-directed graphs.  
    This suggests that $(\mathcal{L}^{\dagger}_s)^{\dagger}$ plays the role as a Laplacian matrix in the right-hand side of the FBI for SCWB-directed setting.
\end{itemize}

\begin{theorem}[FBI for strongly connected weight-balanced directed graphs]\label{thm:FBI_generalized}
  Let $\mathcal{L}$ be the Laplacian corresponding to an SCWB-directed graph, and let $\bm{\zeta} := \operatorname{diagvec}(\mathcal{L}^{\dagger}_s)$.  
  The resistance matrix $\Omega$ and Laplacian matrix $\mathcal{L}$ satisfy
  \begin{equation}\label{eq:FBI_generalized}
    \begin{bmatrix}
      0 &\bm{1}^{\top} \\
      \bm{1} & \Omega
    \end{bmatrix}^{-1}
    =
    -\dfrac{1}{2}
    \begin{bmatrix}
      4\sigma^2 & -2\bm{p}^{\top} \\ 
      -2\bm{p} & (\mathcal{L}^{\dagger}_s)^{\dagger}
    \end{bmatrix}
    \ \text{where} \
    \begin{cases}
      \bm{p} = \dfrac{1}{2} (\mathcal{L}^{\dagger}_s)^{\dagger} \bm{\zeta} + \dfrac{1}{n} \bm{1}, \\
      \sigma^2 = \dfrac{1}{4} \bm{\zeta}^{\top} (\mathcal{L}^{\dagger}_s)^{\dagger} \bm{\zeta} + \dfrac{1}{n} \bm{1}^{\top} \bm{\zeta}.
    \end{cases}
  \end{equation}
\end{theorem}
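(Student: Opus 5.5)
The plan is to verify directly that the product of the two block matrices equals the identity. The key structural input is that $P:=\mathcal{L}^{\dagger}_s$ is symmetric PSD with kernel $\operatorname{span}(\bm{1})$. Consequently its Moore--Penrose pseudoinverse $Q:=(\mathcal{L}^{\dagger}_s)^{\dagger}$ is symmetric, $Q\bm{1}=\bm{0}$, and
\[
PQ=QP=I_n-\tfrac{1}{n}\bm{1}\bm{1}^{\top}=:J .
\]
This is exactly the situation of the classical undirected FBI, with $\mathcal{L}^{\dagger}$ replaced by $P$ and $\mathcal{L}$ replaced by $Q$. The argument therefore mirrors the proof in Chap.~III of \cite{Devriendt2022graph}, using only the representation $\Omega=\bm{1}\bm{\zeta}^{\top}+\bm{\zeta}\bm{1}^{\top}-2P$ from \cref{eq:ER_SCWB}.

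I would compute the product
\[
\begin{bmatrix}0&\bm{1}^{\top}\\ \bm{1}&\Omega\end{bmatrix}
\cdot\left(-\tfrac12\right)
\begin{bmatrix}4\sigma^2&-2\bm{p}^{\top}\\ -2\bm{p}&Q\end{bmatrix}
\]
blockwise. Since the matrix on the left is square, a one-sided identity suffices, and it simultaneously establishes invertibility. The four blocks are handled as follows.

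\begin{enumerate}
\item \emph{Top-left entry.} It equals $\bm{1}^{\top}\bm{p}=\tfrac12\bm{1}^{\top}Q\bm{\zeta}+1=1$, because $\bm{1}^{\top}Q=\bm{0}^{\top}$.
\item \emph{Top-right block.} It equals $-\tfrac12\bm{1}^{\top}Q=\bm{0}^{\top}$.
\item \emph{Bottom-right block.} It equals $\bm{1}\bm{p}^{\top}-\tfrac12\Omega Q$. Using $\bm{1}^{\top}Q=\bm{0}^{\top}$, we get $\Omega Q=\bm{1}\bm{\zeta}^{\top}Q-2PQ=\bm{1}\bm{\zeta}^{\top}Q-2J$. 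Hence the block is
\[
\tfrac12\bm{1}\bm{\zeta}^{\top}Q+\tfrac1n\bm{1}\bm{1}^{\top}-\tfrac12\bm{1}\bm{\zeta}^{\top}Q+J=I_n .
\]
\item \emph{Bottom-left block.} It equals $-2\sigma^2\bm{1}+\Omega\bm{p}$. Expanding,
\[
\Omega\bm{p}=\tfrac12\Omega Q\bm{\zeta}+\tfrac1n\Omega\bm{1},
\]
with
\[
\tfrac12\Omega Q\bm{\zeta}=\tfrac12(\bm{\zeta}^{\top}Q\bm{\zeta})\bm{1}-J\bm{\zeta},
\qquad
\tfrac1n\Omega\bm{1}=\tfrac1n(\bm{\zeta}^{\top}\bm{1})\bm{1}+\bm{\zeta},
\]
where the second identity uses $P\bm{1}=\bm{0}$. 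Since $J\bm{\zeta}=\bm{\zeta}-\tfrac1n(\bm{1}^{\top}\bm{\zeta})\bm{1}$, the sum is $\bigl(\tfrac12\bm{\zeta}^{\top}Q\bm{\zeta}+\tfrac2n\bm{1}^{\top}\bm{\zeta}\bigr)\bm{1}=2\sigma^2\bm{1}$. The block therefore vanishes, which confirms the stated $\sigma^2$.
\end{enumerate}

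The only substantive step is justifying $PQ=J$, since everything else is bookkeeping. This follows from the spectral decomposition of the symmetric PSD matrix $P$: because $\ker P=\operatorname{span}(\bm{1})$, the product $PQ$ is the orthogonal projector onto $\operatorname{range}(P)=\bm{1}^{\perp}$. The kernel property of $P$ is recalled in the preliminaries (the congruence argument following \cref{eq:Ldaggers}, or Thm.~IV.4 of \cite{Fontan2023pseudoinverse}).

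The main potential pitfall is the bottom-left block, where the constants $\tfrac1n$ and $\tfrac2n$ must combine to match $\sigma^2$. If the bookkeeping fails to match, I would recheck the $J\bm{\zeta}$ term there.
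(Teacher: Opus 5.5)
Your proposal is correct, but it reaches the identity by a different route than the paper. You multiply the bordered matrix by the claimed inverse and check the four blocks. For this you use only $\Omega=\bm{1}\bm{\zeta}^{\top}+\bm{\zeta}\bm{1}^{\top}-2\mathcal{L}^{\dagger}_s$, the symmetry of $(\mathcal{L}^{\dagger}_s)^{\dagger}$, $(\mathcal{L}^{\dagger}_s)^{\dagger}\bm{1}=\bm{0}$, and $\mathcal{L}^{\dagger}_s(\mathcal{L}^{\dagger}_s)^{\dagger}=I_n-\bm{1}\bm{1}^{\top}/n$. Each block computation is right, and a one-sided inverse of a square matrix gives invertibility at no extra cost.

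The paper instead argues constructively. It first proves $\Omega$ is invertible, via a congruence and Sylvester's law of inertia, using $\bm{1}^{\top}\Omega\bm{1}>0$ and negative definiteness of $\Omega$ on $\operatorname{span}(\bm{1})^{\perp}$. It then shows the bordered matrix is nonsingular by contradiction. Next it writes the inverse with unknown blocks and solves the resulting equations to \emph{derive} $\bm{p}$ and $\sigma^2$. Finally it identifies the lower-right block with $(\mathcal{L}^{\dagger}_s)^{\dagger}$ by a uniqueness argument, which again relies on $\Omega$ being invertible.

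Your check is shorter and more general, since it never uses positive semidefiniteness. It holds verbatim for any symmetric matrix with kernel exactly $\operatorname{span}(\bm{1})$ in place of $\mathcal{L}^{\dagger}_s$. The paper's route, in exchange, explains where the formulas come from. It also produces two by-products that are later cited as consequences of this theorem: $\sigma^2>0$, and invertibility of $\Omega$. The latter is used in \cref{cor:FBI_generalized}, in the SC-directed extension via $\det\Omega=\det M\cdot\det\Omega_{\mathrm{WB}}$, and in Section~4.

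You can recover both in two lines. First, $\sigma^2>0$ because $(\mathcal{L}^{\dagger}_s)^{\dagger}\succeq O$ and $\bm{1}^{\top}\bm{\zeta}=\operatorname{tr}(\mathcal{L}^{\dagger}_s)>0$. Second, by Cramer's rule the $(0,0)$ entry $-2\sigma^2$ of the inverse equals $\det\Omega\big/\det\bigl[\begin{smallmatrix}0&\bm{1}^{\top}\\ \bm{1}&\Omega\end{smallmatrix}\bigr]$, so $\det\Omega\neq 0$.
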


\begin{proof}
  First, we show that $\Omega$ is invertible. From $\cref{eq:ER_SCWB}$ and the fact that $\mathcal{L}^{\dagger}_s$ is PSD with its kernel being $\operatorname{span}(\bm{1})$ (Thm.I\hspace{-1.2pt}V.4 of \cite{Fontan2023pseudoinverse}), 
  we have that $\bm{1}^{\top} \Omega \bm{1} > 0$ and 
  $\bm{f}^{\top} \Omega \bm{f} = -2\bm{f}^{\top} \mathcal{L}^{\dagger}_s \bm{f} < 0$ for all $\bm{f} \in \operatorname{span}(\bm{1})^{\perp} \backslash \{ \bm{0}\}$. 
  Choose an orthogonal matrix $P= [\bm{1}/\sqrt{n}  \mid  U ]$ 
  where the columns of $U \in \mathbb{R}^{n \times (n-1)}$ form an orthonormal basis of $\operatorname{span}(\mathbf 1)^{\perp}$. 
  Since $\Omega$ is a symmetric matrix, 
  \begin{equation*}
    \exists a \in \mathbb{R},\ \exists \bm{b} \in \mathbb{R}^{n-1},\ \exists C \in \mathbb{R}^{(n-1) \times (n-1)} \text{ s.t. }
    P^{\top}\Omega P = \begin{bmatrix} a & \bm{b}^{\top} \\  \bm{b} & C \end{bmatrix}  \eqqcolon {\Omega}^{\prime}.
  \end{equation*}
  For the blocks of $\Omega^{\prime}$, note that $ a = \bm{1}^{\top} \Omega \bm{1} / n > 0$ 
  and 
  $\bm{y}^{\top} C \bm{y} = (U\bm{y})^{\top} \Omega (U \bm{y}) < 0$ for all $\bm{y} \in \mathbb{R}^{n-1} \backslash \{ \bm{0}\}$, 
  that is, $C$ is negative-definite. Moreover, 
  \begin{equation*} 
    S^{\top} {\Omega}^{\prime}S 
    = 
    \begin{bmatrix} 
        a - \bm{b}^{\top}C^{-1} \bm{b} & \bm{0}^{\top} \\ 
        \bm{0} & C 
    \end{bmatrix}
    =: \Omega^{\prime\prime}
    \ \text{ where } \
    S := 
    \begin{bmatrix} 
        1 & \bm{0}^{\top} \\ 
        -C^{-1} \bm{b} & I_{n-1} 
    \end{bmatrix}
  \end{equation*}
  holds. Note that $a - \bm{b}^{\top} C^{-1} \bm{b} \geq a > 0$. 
  Sylvester's law of inertia shows that the matrix $\Omega$ has the same inertia as the block-diagonal matrix $\Omega^{\prime\prime}$ and is therefore invertible.

  Next, we prove that the block matrix
  $
    \bigl[ \begin{smallmatrix}
     0 & \bm{1}^{\top} \\
     \bm{1} & \Omega 
    \end{smallmatrix} \bigr]
  $
  is invertible. Suppose that this matrix is singular. Then, 
  \begin{equation*}
    \label{eq:FBI_generalized_1}
     \exists
    \begin{bmatrix}
      -c \\
      \bm{f}
    \end{bmatrix}
    \neq \bm{0}
    \text{ s.t. }
    \begin{bmatrix}
      0 & \bm{1}^{\top} \\
      \bm{1} & \Omega
    \end{bmatrix}
    \begin{bmatrix}
      -c \\
      \bm{f}
    \end{bmatrix}
    =
    \begin{bmatrix}
      0 \\
      \bm{0}
    \end{bmatrix}
    \Longleftrightarrow
    \begin{cases}
      \bm{1}^{\top} \bm{f} = 0, \\
      \Omega \bm{f} = c \bm{1}.
    \end{cases}
  \end{equation*}
  Setting $c = 0$ yields $\bm{f} = \bm{0}$ by the invertibility of $\Omega$, a contradiction.
  If instead $c \neq 0$, then $\bm{f} \in \operatorname{span}(\bm{1})^{\perp} \backslash \{ \bm{0} \}$. This leads to
  \begin{equation*}
    0 = c \bm{f}^{\top} \bm{1} = \bm{f}^{\top} (c \bm{1}) = \bm{f}^{\top} \Omega \bm{f} 
    \underset{\text{\cref{eq:ER_SCWB}}}{=} -2 \bm{f}^{\top} \mathcal{L}^{\dagger}_s \bm{f} < 0, 
  \end{equation*}
  which is again a contradiction. Therefore, the matrix 
  $
    \bigl[ \begin{smallmatrix}
     0 & \bm{1}^{\top} \\
     \bm{1} & \Omega 
    \end{smallmatrix} \bigr]
  $
  is invertible. 

  Consider the matrix on the right-hand side of \cref{eq:FBI_generalized}. 
  Since the matrix on the left-hand side of \cref{eq:FBI_generalized} is invertible, 
  there exist $\bm{p} \in \mathbb{R}^n,\ \sigma^2 \in \mathbb{R}$ and $ \mathcal{X} \in \mathbb{R}^{n \times n}$ such that 
  \begin{equation}\label{eq:FBI_generalized_2}
    \begin{bmatrix}
      0 & \bm{1}^{\top} \\
      \bm{1} & \Omega
    \end{bmatrix}
    \begin{bmatrix}
      4\sigma^2 & -2 \bm{p}^{\top} \\
      -2 \bm{p} & \mathcal{X}
    \end{bmatrix}
    =
    -2I_{n+1}
    \Longleftrightarrow 
    \begin{cases}
      \bm{1}^{\top} \bm{p} = 1, \
      2\sigma^2 \bm{1} = \Omega \bm{p}, \\
      \Omega \mathcal{X} = -2 I_n + 2 \bm{1} \bm{p}^{\top}, \
      \mathcal{X} \bm{1} = \bm{0}.
    \end{cases}
  \end{equation}
    Combining $\bm{1}^{\top} \bm{p} = 1$, $2\sigma^2 \bm{1} = \Omega \bm{p}$ and the relation \cref{eq:ER_SCWB}, we obtain
    \begin{equation*}
        2\sigma^2 \bm{1} = (\bm{1} \bm{\zeta}^{\top} + \bm{\zeta} \bm{1}^{\top} - 2 \mathcal{L}^{\dagger}_s )\bm{p} \notag 
        \implies
        (2 \sigma^2 - \bm{\zeta}^{\top} \bm{p} - \bm{\zeta}^{\top} \bm{1} / n) \bm{1}
        = 
        (I_n - \bm{1} \bm{1}^{\top} / n) \bm{\zeta} - 2 \mathcal{L}^{\dagger}_s \bm{p}.
    \end{equation*}
    The left-hand side above is contained in $\operatorname{span}(\bm{1})$, while the right-hand side above is contained in $\operatorname{span}(\bm{1})^{\perp}$. This implies that $2\sigma^2 = (\bm{p} + \bm{1}/n)^{\top} \bm{\zeta}$ and $2\mathcal{L}^{\dagger}_s \bm{p} = (I_n - \bm{1}\bm{1}^{\top} / n) \bm{\zeta}$ hold. Left-multiplying the latter equation by the matrix $(\mathcal{L}^{\dagger}_s)^{\dagger}$ yields the following concrete expressions for the vector $\bm{p}$ and the scalar $\sigma^2$:
    \begin{equation}\label{eq:concrete_p_sigma2}
        \bm{p} = \dfrac{1}{2}(\mathcal{L}^{\dagger}_s)^{\dagger} \bm{\zeta} + \frac{1}{n} \bm{1},\quad
        \sigma^2 = \dfrac{1}{4} \bm{\zeta}^{\top} (\mathcal{L}^{\dagger}_s)^{\dagger} \bm{\zeta} + \dfrac{1}{n} \bm{1}^{\top} \bm{\zeta}.
    \end{equation}
    Given that $\mathcal{L}^{\dagger}_s$ is PSD with its kernel being $\operatorname{span}(\bm{1})$ (Thm.I\hspace{-1.2pt}V.4 of \cite{Fontan2023pseudoinverse}), we have $\sigma^2 > 0$. 

    Finally, we show that the matrix $\mathcal{X}$ is $(\mathcal{L}^{\dagger}_s)^{\dagger}$. 
    Since $\Omega \mathcal{X} = -2 I_n + 2 \bm{1} \bm{p}^{\top}$ and $\mathcal{X}\bm{1} = \bm{0}$ from \cref{eq:FBI_generalized_2}, and since $\Omega$ is invertible, the matrix $\mathcal{X}$ is uniquely determined by these relations. 
    Therefore, it suffices to verify that $(\mathcal{L}^{\dagger}_s)^{\dagger}$ satisfies the same equations.
    Using \cref{eq:ER_SCWB}, we have $\Omega = \bm{1} \bm{\zeta}^{\top} + \bm{\zeta} \bm{1}^{\top} - 2 \mathcal{L}^{\dagger}_s$. 
    Moreover, since $\mathcal{L}^{\dagger}_s$ is symmetric PSD with its kernel being $\operatorname{span}(\bm{1})$, its pseudoinverse $(\mathcal{L}^{\dagger}_s)^{\dagger}$ satisfies $(\mathcal{L}^{\dagger}_s)^{\dagger}\bm{1}=\bm{0}$ and $\mathcal{L}^{\dagger}_s(\mathcal{L}^{\dagger}_s)^{\dagger} = I_n-\bm{1} \bm{1}^{\top}/n$. 
    Hence,
    $\Omega (\mathcal{L}^{\dagger}_s)^{\dagger}
    =
    \bigl(\bm{1} \bm{\zeta}^{\top} + \bm{\zeta} \bm{1}^{\top} - 2 \mathcal{L}^{\dagger}_s \bigr)(\mathcal{L}^{\dagger}_s)^{\dagger}=
    \bm{1}\bigl((\mathcal{L}^{\dagger}_s)^{\dagger}\bm{\zeta}\bigr)^{\top}
    -2\bigl(I_n-\bm{1}\bm{1}^{\top}/n\bigr)$.
    By \eqref{eq:concrete_p_sigma2}, we obtain $\Omega (\mathcal{L}^{\dagger}_s)^{\dagger} = -2I_n+2\bm{1}\bm{p}^{\top}$. 
\end{proof}

The FBI involves two resistance-based invariants \cite{Devriendt2022graph}: the node-wise \textit{resistance curvature} $\bm{p}=(p_i)_{i\in[n]}$ and the scalar \textit{resistance radius} $\sigma^2$, which measures the global graph size. These quantities govern the maximum variance problem and resistive embeddings. For detailed treatments in the unsigned undirected setting, see \cite{Devriendt2022graph, Dawkins2024NodeRC, Devriendt2024resistancedistance}.

From the proof of \cref{thm:FBI_generalized}, we obtain the following relations. 
\begin{corollary}\label{cor:FBI_generalized}
  The following relations among $\mathcal{L}, \Omega, \bm{p}$ and $\sigma^2$ for an SCWB-directed graph hold:
    \begin{align}
      & \label{eq:FBI_generalized_relations_1}
      \bm{1}^{\top} \bm{p} = 1, \quad \Omega \bm{p} = 2 \sigma^2 \bm{1},\ \quad \Omega (\mathcal{L}^{\dagger}_s)^{\dagger} = -2 I_n + 2 \bm{1} \bm{p}^{\top}. \\
      & \label{eq:FBI_generalized_relations_2}
      \Omega^{-1} = -\frac{1}{2}(\mathcal{L}^{\dagger}_s)^{\dagger} + \frac{ \bm{p} \bm{p}^{\top}}{2 \sigma^2},\ \quad (\mathcal{L}^{\dagger}_s)^{\dagger} = -2 \Omega^{-1} + 2 \frac{ \Omega^{-1} \bm{1} \bm{1}^{\top} \Omega^{-1}}{\bm{1}^{\top} \Omega^{-1}\bm{1}}. \\
      & \label{eq:FBI_generalized_relations_3}
      \bm{p} = \frac{\Omega^{-1} \bm{1}}{\bm{1}^{\top} \Omega^{-1} \bm{1}},\ \quad \sigma^2 = \frac{1}{2}(\bm{1}^{\top} \Omega^{-1} \bm{1})^{-1},\ \quad \sigma^2 = \frac{1}{2} \bm{p}^{\top} \Omega \bm{p}.
    \end{align}
\end{corollary}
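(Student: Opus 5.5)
The plan is to read everything off the block inverse identity \cref{eq:FBI_generalized} of \cref{thm:FBI_generalized} and its proof. The first relation group \cref{eq:FBI_generalized_relations_1} is exactly the system \cref{eq:FBI_generalized_2}, with $\mathcal{X}=(\mathcal{L}^{\dagger}_s)^{\dagger}$ as identified at the end of that proof. So $\bm{1}^{\top}\bm{p}=1$, $\Omega\bm{p}=2\sigma^2\bm{1}$ and $\Omega(\mathcal{L}^{\dagger}_s)^{\dagger}=-2I_n+2\bm{1}\bm{p}^{\top}$ need nothing further. Since the inverse in \cref{eq:FBI_generalized} is also a right inverse, multiplying in the other order gives additional identities for free, for instance $(\mathcal{L}^{\dagger}_s)^{\dagger}\Omega=-2I_n+2\bm{p}\bm{1}^{\top}$.

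For \cref{eq:FBI_generalized_relations_3}, I use that $\Omega$ is invertible, which was shown in the proof of \cref{thm:FBI_generalized}. Then $\Omega\bm{p}=2\sigma^2\bm{1}$ gives $\bm{p}=2\sigma^2\Omega^{-1}\bm{1}$. Applying $\bm{1}^{\top}$ and using $\bm{1}^{\top}\bm{p}=1$ gives $1=2\sigma^2\,\bm{1}^{\top}\Omega^{-1}\bm{1}$. This shows at once that $\bm{1}^{\top}\Omega^{-1}\bm{1}\neq0$, indeed that it is positive because $\sigma^2>0$, and that $\sigma^2=\tfrac12(\bm{1}^{\top}\Omega^{-1}\bm{1})^{-1}$. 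Substituting back gives $\bm{p}=\Omega^{-1}\bm{1}/(\bm{1}^{\top}\Omega^{-1}\bm{1})$. Finally, $\bm{p}^{\top}\Omega\bm{p}=2\sigma^2\bm{p}^{\top}\bm{1}=2\sigma^2$.

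For \cref{eq:FBI_generalized_relations_2}, I left-multiply $\Omega(\mathcal{L}^{\dagger}_s)^{\dagger}=-2I_n+2\bm{1}\bm{p}^{\top}$ by $\Omega^{-1}$. This gives $(\mathcal{L}^{\dagger}_s)^{\dagger}=-2\Omega^{-1}+2\Omega^{-1}\bm{1}\bm{p}^{\top}$. Substituting the expression for $\bm{p}$ from \cref{eq:FBI_generalized_relations_3} yields the second identity in \cref{eq:FBI_generalized_relations_2}; symmetry of $\Omega^{-1}$ gives $\bm{p}^{\top}=\bm{1}^{\top}\Omega^{-1}/(\bm{1}^{\top}\Omega^{-1}\bm{1})$. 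For the first identity, I rewrite $\Omega^{-1}\bm{1}=\bm{p}/(2\sigma^2)$ and $\bm{1}^{\top}\Omega^{-1}\bm{1}=1/(2\sigma^2)$. The rank-one term is then $\Omega^{-1}\bm{1}\bm{1}^{\top}\Omega^{-1}/(\bm{1}^{\top}\Omega^{-1}\bm{1})=\bm{p}\bm{p}^{\top}/(2\sigma^2)$, and solving for $\Omega^{-1}$ gives $\Omega^{-1}=-\tfrac12(\mathcal{L}^{\dagger}_s)^{\dagger}+\bm{p}\bm{p}^{\top}/(2\sigma^2)$.

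There is no real obstacle here. The only point needing care is the order of the steps: the division by $\bm{1}^{\top}\Omega^{-1}\bm{1}$ must be justified before it is used. That is why I derive \cref{eq:FBI_generalized_relations_3} before \cref{eq:FBI_generalized_relations_2}, obtaining nonvanishing from $2\sigma^2\,\bm{1}^{\top}\Omega^{-1}\bm{1}=1$.
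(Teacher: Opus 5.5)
Your proposal is correct and is essentially the paper's argument: the paper only states that these relations follow from the proof of \cref{thm:FBI_generalized}, namely the block system \cref{eq:FBI_generalized_2} with $\mathcal{X}=(\mathcal{L}^{\dagger}_s)^{\dagger}$ together with the invertibility of $\Omega$ and $\sigma^2>0$, and that is exactly what you use. Deriving \cref{eq:FBI_generalized_relations_3} first, so that $2\sigma^2\,\bm{1}^{\top}\Omega^{-1}\bm{1}=1$ justifies dividing by $\bm{1}^{\top}\Omega^{-1}\bm{1}$, correctly supplies the details the paper leaves implicit.
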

This corollary is similar to Cor.I\hspace{-1.2pt}I\hspace{-1.2pt}I.7,8,10 of \cite{Devriendt2022graph} 
and can be proven from \cref{thm:FBI_generalized}. 

Consider an SC-directed (not necessarily WB) graph $\mathcal{G}(\mathcal{L})$. With a matrix $M = \operatorname{diag}(m_1, \dots, m_n) > O$, we can obtain the Laplacian matrix for an SCWB-directed graph (Thm.I\hspace{-1.2pt}I of \cite{Altafini2019Investigating}). 
The matrix $\mathcal{L}_{\text{WB}} := M \mathcal{L}$ satisfies $\mathcal{L}_{\text{WB}} \bm{1} = \mathcal{L}_{\text{WB}}^{\top} \bm{1} = \bm{0}$. 
We can apply the FBI shown in \cref{thm:FBI_generalized} to the graph $\mathcal{G}(\mathcal{L}_{\text{WB}})$ as follows:
\begin{equation}\label{eq:FBI_simple_SC}
    \begin{bmatrix}
      0 & \bm{1}^{\top} \\
      \bm{1} & \Omega_{\text{WB}}
    \end{bmatrix}^{-1}
    =
    -\frac{1}{2}
    \begin{bmatrix}
      4 \sigma_{\text{WB}}^2 & -2 \bm{p}_{\text{WB}}^{\top} \\
      -2 \bm{p}_{\text{WB}} & ((\mathcal{L}_{\text{WB}})^{\dagger}_s)^{\dagger}
    \end{bmatrix} 
    \text{ where }
    \begin{cases}
        \mathcal{L}_{\text{WB}} = M\mathcal{L}, \\
        \Omega_{\text{WB}} = M^{-1} \Omega & (\because \ref{eq:ER_onlySC})
    \end{cases}
\end{equation}
Here, $\Omega_{\text{WB}},\ \bm{p}_{\text{WB}}$, and $\sigma_{\text{WB}}^2$ are the resistance matrix, 
the resistance curvature, and the resistance radius, respectively, in the weight-balanced graph $\mathcal{G}(\mathcal{L}_{\text{WB}})$. 

\subsection{Comparison of two undirected Laplacians}
\label{subsec:comparison_of_two_undirected_Laplacians}

In this subsection, we examine the relationship between an unsigned SCWB-directed Laplacian $\mathcal{L}$ and its associated signed undirected Laplacian $(\mathcal{L}^{\dagger}_s)^{\dagger}$. Throughout this subsection, we write $\mathcal{L}_s := (\mathcal{L}+\mathcal{L}^{\top})/2$ and $\mathcal{K} := (\mathcal{L}-\mathcal{L}^{\top})/2$.

We first record a representation of the relevant pseudoinverses obtained by restricting the Laplacians to $\operatorname{span}(\bm{1})^{\perp}$.

\begin{lemma}
    \label{lem:lem_for_relationship_between_L_and_Ldaggersdagger}
    Let $\mathcal{L}$ be an unsigned SCWB-directed Laplacian. Given a matrix $U \in \mathbb{R}^{n \times (n-1)}$ whose orthonormal columns span $\bm{1}^{\perp}$, define $\bar{\mathcal{L}} := U^{\top}\mathcal{L}U$ and $\bar{\mathcal{L}}_s := U^{\top} \mathcal{L}_s U$. Then, $\bar{\mathcal{L}}$ and $\bar{\mathcal{L}}_s$ are invertible, and 
    \begin{equation}
        \label{eq:eq_for_relationship_between_L_and_Ldaggersdagger}
        \mathcal{L}^{\dagger} = U \bar{\mathcal{L}}^{-1}U^{\top},\quad (\mathcal{L}_s)^{\dagger} = U (\bar{\mathcal{L}}_s)^{-1} U^{\top}.
    \end{equation}
\end{lemma}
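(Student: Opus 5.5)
The plan is to work throughout in the orthonormal basis $P=[\bm{1}/\sqrt{n}\mid U]$, exactly as in the proof of \cref{thm:FBI_generalized}. Since $\mathcal{L}$ is SCWB, we have $\mathcal{L}\bm{1}=\mathcal{L}^{\top}\bm{1}=\bm{0}$, so both $\mathcal{L}$ and $\mathcal{L}_s$ map $\operatorname{span}(\bm{1})^{\perp}$ into itself and annihilate $\bm{1}$. Hence
\begin{equation*}
P^{\top}\mathcal{L}P=\begin{bmatrix}0&\bm{0}^{\top}\\ \bm{0}&\bar{\mathcal{L}}\end{bmatrix},\qquad \mathcal{L}=U\bar{\mathcal{L}}U^{\top},
\end{equation*}
and the analogous formulas hold for $\mathcal{L}_s$. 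The off-diagonal blocks vanish because $\bm{1}^{\top}\mathcal{L}U=\bm{0}^{\top}$ and $U^{\top}\mathcal{L}\bm{1}=\bm{0}$.

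\textbf{Invertibility.} For $\bar{\mathcal{L}}_s$: if $\bm{y}\neq\bm{0}$, then $U\bm{y}\in\operatorname{span}(\bm{1})^{\perp}\setminus\{\bm{0}\}$. Because $\mathcal{L}_s$ is PSD with kernel $\operatorname{span}(\bm{1})$ (Cor.I of \cite{Altafini2019Investigating}), this gives $\bm{y}^{\top}\bar{\mathcal{L}}_s\bm{y}>0$, so $\bar{\mathcal{L}}_s$ is positive definite. For $\bar{\mathcal{L}}$: if $\bar{\mathcal{L}}\bm{y}=\bm{0}$, then $\mathcal{L}U\bm{y}=U\bar{\mathcal{L}}\bm{y}=\bm{0}$, so $U\bm{y}\in\ker\mathcal{L}=\operatorname{span}(\bm{1})$. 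Since $U\bm{y}$ is also orthogonal to $\bm{1}$, it must vanish, and hence $\bm{y}=\bm{0}$. Alternatively, $\bm{y}^{\top}\bar{\mathcal{L}}\bm{y}=\bm{y}^{\top}\bar{\mathcal{L}}_s\bm{y}>0$ shows directly that $\bar{\mathcal{L}}\bm{y}\neq\bm{0}$.

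\textbf{Pseudoinverse formulas.} Set $X:=U\bar{\mathcal{L}}^{-1}U^{\top}$ and verify the four Moore--Penrose conditions, using $U^{\top}U=I_{n-1}$ and $UU^{\top}=I_n-\bm{1}\bm{1}^{\top}/n$. We get $\mathcal{L}X=U\bar{\mathcal{L}}\bar{\mathcal{L}}^{-1}U^{\top}=UU^{\top}$ and likewise $X\mathcal{L}=UU^{\top}$. Both products are symmetric, and then $\mathcal{L}X\mathcal{L}=UU^{\top}U\bar{\mathcal{L}}U^{\top}=\mathcal{L}$ and $X\mathcal{L}X=X$. Uniqueness of the Moore--Penrose inverse gives $\mathcal{L}^{\dagger}=X$. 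The same computation with $\mathcal{L}_s$ in place of $\mathcal{L}$ gives the second identity. As a consistency check, this matches \cref{eq:LLdagger_LdaggerL} with $\bm{u}_n=\bm{1}/\sqrt{n}$.

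The only step needing care is the block decomposition $\mathcal{L}=U\bar{\mathcal{L}}U^{\top}$. It relies on weight balance: without $\mathcal{L}^{\top}\bm{1}=\bm{0}$, the block $\bm{1}^{\top}\mathcal{L}U$ would not vanish. Everything after that is routine.
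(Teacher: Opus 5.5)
Your proof is correct and follows essentially the same route as the paper: weight balance gives the factorization $\mathcal{L}=U\bar{\mathcal{L}}U^{\top}$, invertibility comes from the kernel argument for $\bar{\mathcal{L}}$ and positive definiteness for $\bar{\mathcal{L}}_s$, and the formulas follow by checking the four Moore--Penrose conditions with $UU^{\top}=I_n-\bm{1}\bm{1}^{\top}/n$. Your alternative invertibility argument via $\bm{y}^{\top}\bar{\mathcal{L}}\bm{y}=\bm{y}^{\top}\bar{\mathcal{L}}_s\bm{y}>0$ is also valid, and computing $\mathcal{L}X=UU^{\top}$ directly is slightly cleaner than the paper's appeal to \cref{eq:LLdagger_LdaggerL}.
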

\begin{proof}
    We note that $\mathcal{L}_s$ is PSD with its kernel being $\operatorname{span}(\bm{1})$. Moreover, since $\mathcal{L}\bm{1} = \mathcal{L}^{\top} \bm{1} = \bm{0}$ and $U^{\top}U = I_{n-1},\ UU^{\top} = I_n - \bm{1}\bm{1}^{\top}/n$, we obtain $\mathcal{L} = U \bar{\mathcal{L}}U^{\top}$ and $\mathcal{L}_s = U \bar{\mathcal{L}}_sU^{\top}$. 
    
    Suppose $\bar{\mathcal{L}}\bm{y} = U^{\top} \mathcal{L} U \bm{y} = \bm{0}$ for some $\bm{y} \in \mathbb{R}^{n-1}$. Left-multiplying by $U$ gives $\mathcal{L}U\bm{y} = \bm{0}$, so $U\bm{y} \in \operatorname{span}(\bm{1})$. Since $U\bm{y} \in \operatorname{span}(\bm{1})^{\perp}$ by construction, it follows that $U\bm{y} \in \operatorname{span}(\bm{1}) \cap \operatorname{span}(\bm{1})^{\perp} = \{\bm{0}\}$, which yields $\bm{y} = \bm{0}$. Hence, $\bar{\mathcal{L}}$ is invertible. Furthermore, the invertibility of $\bar{\mathcal{L}}_s$ follows from the fact that $\bar{\mathcal{L}}_s$ is positive-definite; this is derived from $\bm{y}^{\top} \bar{\mathcal{L}}_s \bm{y} = (U \bm{y})^{\top} \mathcal{L}_s (U \bm{y}) > 0 \ (\forall\bm{y} \neq \bm{0})$.

    Next, it can be verified that $U \bar{\mathcal{L}}^{-1} U^{\top} (=: \mathcal{X})$ is the pseudoinverse of $\mathcal{L}$ as follows.
    \begin{equation*}
        \begin{gathered}
            \mathcal{L}\mathcal{X} = (\mathcal{L}\mathcal{X})^{\top} = \mathcal{L}\mathcal{L}^{\dagger}
            \underset{\text{\cref{eq:LLdagger_LdaggerL}}}{=} 
            I_n - \bm{1}\bm{1}^{\top}/n, 
            \quad
            \mathcal{X}\mathcal{L} = (\mathcal{X}\mathcal{L})^{\top} = \mathcal{L}^{\dagger}\mathcal{L}
            \underset{\text{\cref{eq:LLdagger_LdaggerL}}}{=} I_n - \bm{1}\bm{1}^{\top}/n, \\
            \mathcal{L}\mathcal{X}\mathcal{L} = ( I_n - \bm{1}\bm{1}^{\top}/n )\mathcal{L} = \mathcal{L}, 
            \quad
            \mathcal{X}\mathcal{L}\mathcal{X} = \mathcal{X} ( I_n - \bm{1}\bm{1}^{\top}/n ) = \mathcal{X}.
        \end{gathered}
    \end{equation*}
    Similarly, $U (\bar{\mathcal{L}}_s)^{-1} U^{\top}$ is the pseudoinverse of $\mathcal{L}_s$.
\end{proof}

\cref{lem:lem_for_relationship_between_L_and_Ldaggersdagger} clarifies the precise contribution of the asymmetric part $\mathcal{K}$ of $\mathcal{L}$.

\begin{theorem}[Decomposition of the associated signed undirected Laplacian]
    \label{thm:Ldaggersdagger_decomposition}
    Let $\mathcal{L}$ be an unsigned SCWB-directed Laplacian. Then, 
    \begin{equation}
        \label{eq:Ldaggersdagger_decomposition}
        (\mathcal{L}^{\dagger}_s)^{\dagger} = \mathcal{L}(\mathcal{L}_s)^{\dagger}\mathcal{L}^{\top} = \mathcal{L}_s + \mathcal{K}(\mathcal{L}_s)^{\dagger}\mathcal{K}^{\top} \succeq \mathcal{L}_s.
    \end{equation}
    Moreover, the equality in $(\mathcal{L}^{\dagger}_s)^{\dagger} \succeq \mathcal{L}_s$ holds if and only if $\mathcal{L}$ is symmetric.
\end{theorem}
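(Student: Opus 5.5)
The plan is to work in the reduced coordinates of \cref{lem:lem_for_relationship_between_L_and_Ldaggersdagger}. There everything becomes an identity between invertible $(n-1)\times(n-1)$ matrices, and we lift back with $U$. Fix $U$ with orthonormal columns spanning $\operatorname{span}(\bm{1})^{\perp}$, and write $\bar{\mathcal{L}} = U^{\top}\mathcal{L}U$. Then $\bar{\mathcal{L}}_s = (\bar{\mathcal{L}}+\bar{\mathcal{L}}^{\top})/2$ is the symmetric part of $\bar{\mathcal{L}}$, and $\bar{\mathcal{K}} := U^{\top}\mathcal{K}U$ is its skew part. Since $\mathcal{L}\bm{1}=\mathcal{L}^{\top}\bm{1}=\bm{0}$, we also have $\mathcal{K}\bm{1}=\mathcal{K}^{\top}\bm{1}=\bm{0}$, and hence $\mathcal{K}=U\bar{\mathcal{K}}U^{\top}$.

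\textbf{Step 1: compute $\mathcal{L}^{\dagger}_s$.} By \cref{eq:eq_for_relationship_between_L_and_Ldaggersdagger},
$\mathcal{L}^{\dagger}_s = U\,\tfrac12(\bar{\mathcal{L}}^{-1}+\bar{\mathcal{L}}^{-\top})\,U^{\top}$. Using $\bar{\mathcal{L}}^{-1}+\bar{\mathcal{L}}^{-\top} = \bar{\mathcal{L}}^{-1}(\bar{\mathcal{L}}^{\top}+\bar{\mathcal{L}})\bar{\mathcal{L}}^{-\top}$, the middle factor becomes $\bar{\mathcal{L}}^{-1}\bar{\mathcal{L}}_s\bar{\mathcal{L}}^{-\top}$. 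This is positive definite, since $\bar{\mathcal{L}}_s$ is positive definite.

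\textbf{Step 2: compute its pseudoinverse.} For any invertible symmetric $Y$, the pseudoinverse of $UYU^{\top}$ is $UY^{-1}U^{\top}$. This is checked by the four Penrose conditions exactly as in \cref{lem:lem_for_relationship_between_L_and_Ldaggersdagger}, using $U^{\top}U=I_{n-1}$. Hence
\[(\mathcal{L}^{\dagger}_s)^{\dagger} = U\bar{\mathcal{L}}^{\top}\bar{\mathcal{L}}_s^{-1}\bar{\mathcal{L}}U^{\top}.\]
Read directly, this lifts to $\mathcal{L}^{\top}(\mathcal{L}_s)^{\dagger}\mathcal{L}$, again by the lemma and $\mathcal{L}=U\bar{\mathcal{L}}U^{\top}$.

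\textbf{Step 3: rewrite in the stated form.} To reach the stated form $\mathcal{L}(\mathcal{L}_s)^{\dagger}\mathcal{L}^{\top}$, write $\bar{\mathcal{L}} = \bar{\mathcal{L}}_s+\bar{\mathcal{K}}$ and $\bar{\mathcal{L}}^{\top} = \bar{\mathcal{L}}_s-\bar{\mathcal{K}}$. Expanding gives
\[\bar{\mathcal{L}}^{\top}\bar{\mathcal{L}}_s^{-1}\bar{\mathcal{L}} = \bar{\mathcal{L}}_s + \bar{\mathcal{K}} - \bar{\mathcal{K}} - \bar{\mathcal{K}}\bar{\mathcal{L}}_s^{-1}\bar{\mathcal{K}} = \bar{\mathcal{L}}_s + \bar{\mathcal{K}}\bar{\mathcal{L}}_s^{-1}\bar{\mathcal{K}}^{\top},\]
using $\bar{\mathcal{K}}^{\top}=-\bar{\mathcal{K}}$. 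The identical expansion of $\bar{\mathcal{L}}\bar{\mathcal{L}}_s^{-1}\bar{\mathcal{L}}^{\top}$ gives the same result, since the cross terms again cancel. So the two orderings agree, and lifting with $U$ (and the lemma for $(\mathcal{L}_s)^{\dagger}$) yields both equalities in \cref{eq:Ldaggersdagger_decomposition}. The term $\mathcal{K}(\mathcal{L}_s)^{\dagger}\mathcal{K}^{\top}$ is a congruence of a PSD matrix, so it is PSD, which gives the Loewner inequality.

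\textbf{Step 4: equality case.} If $\mathcal{L}$ is symmetric, then $\mathcal{K}=O$ and equality holds. Conversely, suppose equality holds. Then $\bar{\mathcal{K}}\bar{\mathcal{L}}_s^{-1}\bar{\mathcal{K}}^{\top}=O$. Because $\bar{\mathcal{L}}_s^{-1}$ is positive definite, $\bm{x}^{\top}\bar{\mathcal{K}}\bar{\mathcal{L}}_s^{-1}\bar{\mathcal{K}}^{\top}\bm{x}=0$ for all $\bm{x}$ forces $\bar{\mathcal{K}}^{\top}\bm{x}=\bm{0}$ for all $\bm{x}$, so $\bar{\mathcal{K}}=O$. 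Therefore $\mathcal{K}=U\bar{\mathcal{K}}U^{\top}=O$, i.e.\ $\mathcal{L}$ is symmetric.

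The main point needing care is Step 2: the pseudoinverse of $UYU^{\top}$ must be handled legitimately, and the ordering $\mathcal{L}^{\top}(\cdot)\mathcal{L}$ versus $\mathcal{L}(\cdot)\mathcal{L}^{\top}$ must be tracked. The skew-symmetric cancellation in Step 3 resolves the ordering issue. Everything else is bookkeeping.
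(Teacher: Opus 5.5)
Your proof is correct and follows essentially the same route as the paper: reduce via \cref{lem:lem_for_relationship_between_L_and_Ldaggersdagger} to invertible $(n-1)\times(n-1)$ blocks, invert the middle factor, lift back with $U$, expand using the skew part, and use definiteness for the equality case. The differences are cosmetic. You factor the middle term as $\bar{\mathcal{L}}^{-1}\bar{\mathcal{L}}_s\bar{\mathcal{L}}^{-\top}$ and then need the skew cancellation to switch orderings, whereas the paper uses $\bar{\mathcal{L}}^{-\top}\bar{\mathcal{L}}_s\bar{\mathcal{L}}^{-1}$ and lands directly on $\mathcal{L}(\mathcal{L}_s)^{\dagger}\mathcal{L}^{\top}$. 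You also settle the equality case in reduced coordinates, where $\bar{\mathcal{L}}_s^{-1}$ is positive definite, rather than through $\ker((\mathcal{L}_s)^{\dagger})=\operatorname{span}(\bm{1})$ and $\mathcal{K}\bm{1}=\bm{0}$.
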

\begin{proof}
    By definition, $\bar{\mathcal{L}}_s = U^{\top}\mathcal{L}_s U = (\bar{\mathcal{L}} + \bar{\mathcal{L}}^{\top})/2$. Since $\bar{\mathcal{L}}$ is invertible and $\bar{\mathcal{L}}_s$ is positive-definite, 
    we have that the matrix $(\bar{\mathcal{L}}^{-1}+\bar{\mathcal{L}}^{-\top})/2 = \bar{\mathcal{L}}^{-\top} \bar{\mathcal{L}}_s \bar{\mathcal{L}}^{-1}$ is also positive-definite. 
    It follows from \cref{lem:lem_for_relationship_between_L_and_Ldaggersdagger} that
    \begin{equation*}
        \mathcal{L}^{\dagger}_s = U \biggl( \frac{\bar{\mathcal{L}}^{-1} + \bar{\mathcal{L}}^{-\top}}{2} \biggr) U^{\top} \implies
        (\mathcal{L}^{\dagger}_s)^{\dagger}
        =
        U \biggl( \frac{\bar{\mathcal{L}}^{-1} + \bar{\mathcal{L}}^{-\top}}{2} \biggr)^{-1} U^{\top}
        =
        U \bar{\mathcal{L}} (\bar{\mathcal{L}}_s)^{-1} \bar{\mathcal{L}}^{\top} U^{\top}.
    \end{equation*}
    On the other hand, the same lemma gives
    \begin{equation*}
        \mathcal{L} (\mathcal{L}_s)^{\dagger} \mathcal{L}^{\top}
        =
        (U \bar{\mathcal{L}} U^{\top}) (U (\bar{\mathcal{L}}_s)^{-1} U^{\top}) (U \bar{\mathcal{L}}^{\top} U^{\top})\\
        =
        U \bar{\mathcal{L}} (\bar{\mathcal{L}}_s)^{-1} \bar{\mathcal{L}}^{\top}
        U^{\top}.
    \end{equation*}
    Comparing the two expressions yields $(\mathcal{L}^{\dagger}_s)^{\dagger} = \mathcal{L} (\mathcal{L}_s)^{\dagger} \mathcal{L}^{\top}$. 
    
    Next, recall that $\mathcal{L} = \mathcal{L}_s + \mathcal{K}$ and $\mathcal{L}^{\top} = \mathcal{L}_s - \mathcal{K}$. We obtain
    \begin{align*}
        \mathcal{L} (\mathcal{L}_s)^{\dagger} \mathcal{L}^{\top}
        =
        (\mathcal{L}_s+\mathcal{K}) (\mathcal{L}_s)^{\dagger} (\mathcal{L}_s-\mathcal{K})
        =
        \mathcal{L}_s - \mathcal{K} (\mathcal{L}_s)^{\dagger} \mathcal{K}
        =
        \mathcal{L}_s + \mathcal{K} (\mathcal{L}_s)^{\dagger} \mathcal{K}^{\top}.
    \end{align*}
    
    For every $\bm{x} \in \mathbb{R}^n$, we have $\bm{x}^{\top} \mathcal{K}(\mathcal{L}_s)^{\dagger}\mathcal{K}^{\top}\bm{x} = (\mathcal{K}^{\top}\bm{x})^{\top} (\mathcal{L}_s)^{\dagger} (\mathcal{K}^{\top}\bm{x}) \geq 0$. Hence, $\mathcal{K} (\mathcal{L}_s)^{\dagger} \mathcal{K}^{\top} \succeq O$, which proves $(\mathcal{L}^{\dagger}_s)^{\dagger} \succeq\mathcal{L}_s$.

    We finally characterize the equality condition. Suppose that $\mathcal{K}(\mathcal{L}_s)^{\dagger}\mathcal{K}^{\top} = O$.  $(\mathcal{K}^{\top}\bm{x})^{\top} (\mathcal{L}_s)^{\dagger} (\mathcal{K}^{\top}\bm{x}) = 0 \ (\forall \bm{x} \in \mathbb{R}^n)$ implies that 
    $\mathcal{K}^{\top}\bm{x} \in \ker((\mathcal{L}_s)^{\dagger})$, and therefore $\operatorname{Im}(\mathcal{K}^{\top}) \subseteq \ker((\mathcal{L}_s)^{\dagger}) =\operatorname{span}(\bm{1})$. Meanwhile, $\mathcal{K}\bm{1}=\bm{0}$ implies $\operatorname{Im}(\mathcal{K}^{\top}) \subseteq \operatorname{span}(\bm{1})^{\perp}$. Therefore, $\mathcal{K} = O$, and consequently $\mathcal{L}$ is symmetric. The converse is immediate. 
\end{proof}

The Loewner inequality above immediately yields a comparison of the effective resistances associated with the two unsigned SCWB-directed graphs.

\begin{corollary}
    \label{cor:comparison_of_effective_resistances}
    Let $\mathcal{L}$ be an unsigned SCWB-directed Laplacian. Then, 
    \begin{equation}
        \label{eq:comparison_of_effective_resistances}
        R_{\mathcal{G}(\mathcal{L})}(i,j) 
        = R_{\mathcal{G}((\mathcal{L}^{\dagger}_s)^{\dagger})}(i,j) 
        \leq 
        R_{\mathcal{G}(\mathcal{L}_s)}(i,j)
        \quad \forall i, j \in [n],\ i \neq j.
    \end{equation}
    Moreover, the equality holds for every pair $i,j\in[n]$ if and only if $\mathcal{L}$ is symmetric.
\end{corollary}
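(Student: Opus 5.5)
The plan is to express all three effective resistances as quadratic forms in $\bm{e}_{ij} := \bm{1}_i - \bm{1}_j$ and then compare pseudoinverses using the Loewner inequality of \cref{thm:Ldaggersdagger_decomposition}.

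\textbf{The equality.} By \cref{eq:ER_SCWB}, $R_{\mathcal{G}(\mathcal{L})}(i,j) = \bm{e}_{ij}^{\top}\mathcal{L}^{\dagger}_s\bm{e}_{ij}$. The matrix $(\mathcal{L}^{\dagger}_s)^{\dagger}$ is symmetric PSD with kernel $\operatorname{span}(\bm{1})$, so we define its effective resistance by the undirected formula \cref{eq:ER}. This gives $\bm{e}_{ij}^{\top}((\mathcal{L}^{\dagger}_s)^{\dagger})^{\dagger}\bm{e}_{ij}$. Since the pseudoinverse is an involution, $((\mathcal{L}^{\dagger}_s)^{\dagger})^{\dagger} = \mathcal{L}^{\dagger}_s$, and the first equality follows. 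Likewise, $\mathcal{L}_s$ is an unsigned connected undirected Laplacian, so $R_{\mathcal{G}(\mathcal{L}_s)}(i,j) = \bm{e}_{ij}^{\top}(\mathcal{L}_s)^{\dagger}\bm{e}_{ij}$.

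\textbf{The inequality.} It suffices to show $\mathcal{L}^{\dagger}_s \preceq (\mathcal{L}_s)^{\dagger}$. I would work in the reduced coordinates of \cref{lem:lem_for_relationship_between_L_and_Ldaggersdagger}. There $(\mathcal{L}^{\dagger}_s)^{\dagger} = U A U^{\top}$ and $\mathcal{L}_s = U B U^{\top}$ with $A := \bar{\mathcal{L}}(\bar{\mathcal{L}}_s)^{-1}\bar{\mathcal{L}}^{\top}$ and $B := \bar{\mathcal{L}}_s$, both positive-definite of size $n-1$. Because $U$ has orthonormal columns, \cref{thm:Ldaggersdagger_decomposition} gives $A \succeq B \succ O$. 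Operator antitonicity of inversion on positive-definite matrices then yields $A^{-1} \preceq B^{-1}$. Finally, $\mathcal{L}^{\dagger}_s = UA^{-1}U^{\top}$ (as in the proof of \cref{thm:Ldaggersdagger_decomposition}) and $(\mathcal{L}_s)^{\dagger} = UB^{-1}U^{\top}$, so conjugating by $U$ gives the Loewner inequality between the pseudoinverses. Evaluating at $\bm{e}_{ij}$ proves \cref{eq:comparison_of_effective_resistances}.

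\textbf{The equality case.} If $\mathcal{L}$ is symmetric, then $\mathcal{K}=O$, so $(\mathcal{L}^{\dagger}_s)^{\dagger} = \mathcal{L}_s$ by \cref{thm:Ldaggersdagger_decomposition} and the resistances coincide. Conversely, suppose equality holds for all pairs $i,j$. Set $\Delta := (\mathcal{L}_s)^{\dagger} - \mathcal{L}^{\dagger}_s \succeq O$; by assumption $\bm{e}_{ij}^{\top}\Delta\bm{e}_{ij}=0$ for all $i,j$. Since $\Delta$ is PSD, this forces $\Delta\bm{e}_{ij}=\bm{0}$. The vectors $\bm{e}_{ij}$ span $\operatorname{span}(\bm{1})^{\perp}$, and $\Delta\bm{1}=\bm{0}$, so $\Delta=O$. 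Taking pseudoinverses gives $(\mathcal{L}^{\dagger}_s)^{\dagger} = \mathcal{L}_s$, and the equality clause of \cref{thm:Ldaggersdagger_decomposition} shows that $\mathcal{L}$ is symmetric.

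The main obstacle is transferring the Loewner order through pseudoinverses, which is not order-reversing in general. The reduction via $U$ to invertible positive-definite blocks on $\operatorname{span}(\bm{1})^{\perp}$ handles this; this is the step to check carefully.
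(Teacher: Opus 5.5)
Your proposal is correct and takes essentially the same route as the paper. Both compress to $\operatorname{span}(\bm{1})^{\perp}$ via $U$ and use antitonicity of inversion on positive-definite matrices to obtain $\mathcal{L}^{\dagger}_s \preceq (\mathcal{L}_s)^{\dagger}$, then settle the equality case by the PSD-kernel argument on $(\mathcal{L}_s)^{\dagger}-\mathcal{L}^{\dagger}_s$ together with the equality clause of \cref{thm:Ldaggersdagger_decomposition}; your justification of the first equality via $((\mathcal{L}^{\dagger}_s)^{\dagger})^{\dagger}=\mathcal{L}^{\dagger}_s$ just spells out what the paper states in one line.
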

\begin{proof}
    The equality on the left simply reflects the fact that the two graphs $\mathcal{G}(\mathcal{L})$ and $\mathcal{G}((\mathcal{L}^{\dagger}_s)^{\dagger})$ have the same effective resistance.

    We now demonstrate the inequality on the right. Given a matrix $U \in \mathbb{R}^{n \times (n-1)}$ whose orthonormal columns span $\bm{1}^{\perp}$, we have $U^{\top} (\mathcal{L}^{\dagger}_s)^{\dagger}U \succeq U^{\top} \mathcal{L}_s U$. Since inversion reverses the Loewner order for positive-definite matrices, $(U^{\top} (\mathcal{L}^{\dagger}_s)^{\dagger}U)^{-1} \preceq (U^{\top} \mathcal{L}_s U)^{-1}$. Therefore, 
    \begin{equation*}
        \mathcal{L}^{\dagger}_s = U (U^{\top} (\mathcal{L}^{\dagger}_s)^{\dagger}U)^{-1} U^{\top} \preceq U (U^{\top} \mathcal{L}_s U)^{-1} U^{\top} = (\mathcal{L}_s)^{\dagger}.
    \end{equation*}
    Evaluating this matrix inequality at $\bm{1}_i - \bm{1}_j$ gives 
    \begin{align*}
        R_{\mathcal{G}(\mathcal{L}_s)}(i,j) - R_{\mathcal{G}((\mathcal{L}^{\dagger}_s)^{\dagger})}(i,j) 
        = (\bm{1}_i - \bm{1}_j)^{\top} ( (\mathcal{L}_s)^{\dagger} -\mathcal{L}^{\dagger}_s ) (\bm{1}_i - \bm{1}_j)
        \geq 0.
    \end{align*}

    Suppose that equality holds for every pair of $i,j \in [n]$. Let $\mathcal{L}_d := (\mathcal{L}_s)^{\dagger} - \mathcal{L}^{\dagger}_s$. Then, $(\bm{1}_i-\bm{1}_j)^{\top} \mathcal{L}_d (\bm{1}_i-\bm{1}_j) = 0 \ (\forall i, j \in [n])$. Since $\mathcal{L}_d \succeq O$, this implies $\mathcal{L}_d (\bm{1}_i - \bm{1}_j) = \bm{0}$ for every $i \neq j$. The vectors $\{\bm{1}_i-\bm{1}_j\}$ span $\operatorname{span}(\bm{1})^{\perp}$, while $\mathcal{L}_d\bm{1} = \bm{0}$. Hence, $\mathcal{L}_d = O$, that is, $(\mathcal{L}_s)^{\dagger} = \mathcal{L}^{\dagger}_s$. Taking pseudoinverses and then applying the equality condition in \cref{thm:Ldaggersdagger_decomposition} shows that $\mathcal{L}$ is symmetric.
    The converse is immediate.
\end{proof}

\begin{remark}[Relation to the existing result of \cite{Fontan2023pseudoinverse}]
    \label{rem:comparison_with_Fontan_Altafini}
    For a normal signed Laplacian $\mathcal{L}$ such that $-\mathcal{L}$ is eventually exponentially positive (EEP)\footnote{
        Since $-\mathcal{L}$ is an irreducible Metzler matrix for an unsigned SC-directed graph, $e^{-\mathcal{L}t} > O$ holds for every $t > 0$. Hence, the EEP assumption is automatically satisfied in the present unsigned setting.
    }, Fontan and Altafini established the corresponding effective-resistance inequality; see Lem.V.4 of \cite{Fontan2023pseudoinverse}.
    
    In the unsigned SC-directed setting, normality implies that the associated graph is weight-balanced. Indeed, normality gives $\|\mathcal{L}^{\top}\bm{1}\|_2^2 = \bm{1}^{\top} \mathcal{L} \mathcal{L}^{\top} \bm{1} =
    \bm{1}^{\top} \mathcal{L}^{\top} \mathcal{L} \bm{1} = \bm{0}$. Hence, $\mathcal{L}^{\top}\bm{1}=\bm{0}$. Therefore, unsigned normal SC-directed Laplacians form a subclass of unsigned SCWB-directed Laplacians. Moreover, this inclusion is strict. Consider the unsigned SCWB-directed Laplacian in \cref{eq:exmaple_SCWB_directed_to_signed_undirected}. A direct calculation gives
    \begin{equation*}
        \mathcal{L}\mathcal{L}^{\top}
        -
        \mathcal{L}^{\top}\mathcal{L}
        =
        \begin{bmatrix}
            -4 &  2 &  2 &  0 \\
             2 &  0 &  0 & -2 \\
             2 &  0 &  0 & -2 \\
             0 & -2 & -2 &  4
        \end{bmatrix}
        \neq O.
    \end{equation*}
    Hence, $\mathcal{L}$ is not normal. This proves that the class of unsigned normal SC-directed Laplacians is a proper subclass of the class of unsigned SCWB-directed Laplacians.

    Consequently, \cref{cor:comparison_of_effective_resistances} extends the normality-based effective-resistance comparison of Lem.V.4 of \cite{Fontan2023pseudoinverse} to all unsigned SCWB-directed Laplacians.
\end{remark}

The decomposition in \cref{thm:Ldaggersdagger_decomposition} not only extends the effective-resistance comparison beyond the normal case,
but also provides an explicit matrix representation of the contribution of graph directionality: $(\mathcal{L}^{\dagger}_s)^{\dagger}-\mathcal{L}_s = \mathcal{K}(\mathcal{L}_s)^{\dagger}\mathcal{K}^{\top} \succeq O$.


\subsection{Commutative diagram}
For SCWB-directed graphs, effective resistance is invariant under Kron reduction (Lem.I\hspace{-1.2pt}V.4 of \cite{Sugiyama2023graph}).
Together with the generalized FBI, this yields a commutative relation between two operations on the Laplacian.

\begin{theorem}
  \label{thm:com_Ldagger_s_dagger_KR}
  Let $\mathcal{L}$ 
  be the Laplacian corresponding to an SCWB-directed graph. 
  For any set of nodes $\alpha$ such that $\alpha \subsetneq [n],\ |\alpha| \geq 2$, the following equality holds:
  \begin{equation}\label{eq:com_Ldagger_s_dagger_KR}
    (\mathcal{L}^{\dagger}_s)^{\dagger} / \alpha^c
    =
    ((\mathcal{L} /\alpha^c)^{\dagger}_s)^{\dagger}.
  \end{equation}
  In other words, taking the pseudoinverse of the symmetrized pseudoinverse, $\mathcal{L} \mapsto (\mathcal{L}^{\dagger}_s)^{\dagger}$, commutes with Kron reduction, $\mathcal{L} \mapsto \mathcal{L} / \alpha^c$.
\end{theorem}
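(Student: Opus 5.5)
The idea is to show that both sides of \cref{eq:com_Ldagger_s_dagger_KR} are symmetric Laplacian-type matrices on $\alpha$ with the same effective resistances. We then recover each from its resistance matrix via the generalized FBI (\cref{cor:FBI_generalized}).

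\textbf{Right-hand side.} Let $\mathcal{L}_{\text{red}} := \mathcal{L}/\alpha^c$. Every $\alpha$ is reachable in an SC graph, and Kron reduction preserves the SC and WB properties, so $\mathcal{L}_{\text{red}}$ is an SCWB-directed Laplacian. \cref{thm:FBI_generalized} and \cref{cor:FBI_generalized} therefore apply and give
\[
((\mathcal{L}_{\text{red}})^{\dagger}_s)^{\dagger} = -2\Omega_{\text{red}}^{-1} + 2\,\frac{\Omega_{\text{red}}^{-1}\bm{1}\bm{1}^{\top}\Omega_{\text{red}}^{-1}}{\bm{1}^{\top}\Omega_{\text{red}}^{-1}\bm{1}}.
\]
Effective resistance is invariant under Kron reduction (Lem.IV.4 of \cite{Sugiyama2023graph}), so $\Omega_{\text{red}} = \Omega[\alpha]$. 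Hence the right-hand side is a function of $\Omega[\alpha]$ alone.

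\textbf{Left-hand side.} Set $\mathcal{M} := (\mathcal{L}^{\dagger}_s)^{\dagger}$. It is symmetric PSD with kernel $\operatorname{span}(\bm{1})$.
\begin{itemize}
\item $\mathcal{M}[\alpha^c]$ is positive definite: a vector $\bm{x}$ supported on $\alpha^c$ with $\bm{x}^{\top}\mathcal{M}\bm{x}=0$ must lie in $\operatorname{span}(\bm{1})$, which forces $\bm{x}=\bm{0}$ because $\alpha\neq\emptyset$. So $\mathcal{M}/\alpha^c$ is well defined.
\item By the standard Schur-complement facts for symmetric PSD matrices, $\mathcal{M}/\alpha^c$ is symmetric PSD with kernel $\operatorname{span}(\bm{1})$.
\item Its pseudoinverse is $(\mathcal{M}/\alpha^c)^{\dagger} = (I-\bm{1}\bm{1}^{\top}/|\alpha|)\,\mathcal{M}^{\dagger}[\alpha]\,(I-\bm{1}\bm{1}^{\top}/|\alpha|)$. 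This is the classical identity behind resistance invariance under Kron reduction for undirected Laplacians, and the proof in \cite{Dorfler2013,Devriendt2022graph} uses only symmetry, positive semidefiniteness and $\ker=\operatorname{span}(\bm{1})$, not the sign pattern. I would verify it directly with the block formula for $\mathcal{M}$ restricted to $\operatorname{span}(\bm{1})^{\perp}$.
\end{itemize}
Since $\mathcal{M}^{\dagger} = \mathcal{L}^{\dagger}_s$, the resistances of $\mathcal{M}/\alpha^c$ are $(\bm{1}_a-\bm{1}_b)^{\top}\mathcal{L}^{\dagger}_s(\bm{1}_a-\bm{1}_b) = \Omega_{ab}$ for $a,b\in\alpha$. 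Its resistance matrix is therefore also $\Omega[\alpha]$.

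\textbf{Conclusion.} For any symmetric PSD $X$ with kernel $\operatorname{span}(\bm{1})$, $X$ is determined by $X^{\dagger}$, and $X^{\dagger}$ is recovered from its resistance matrix $\Omega_X$ by double centering, $X^{\dagger} = -\tfrac12 (I-\bm{1}\bm{1}^{\top}/m)\,\Omega_X\,(I-\bm{1}\bm{1}^{\top}/m)$, exactly as in \cref{eq:gram_euclid}. Both sides of \cref{eq:com_Ldagger_s_dagger_KR} are such matrices with resistance matrix $\Omega[\alpha]$, so they coincide. This argument does not even need to invoke the FBI for the left side.

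\textbf{Main obstacle.} The delicate step is the pseudoinverse–Schur complement identity for the \emph{signed} matrix $\mathcal{M}$. The classical statements are phrased for unsigned Laplacians, so I would reprove it carefully. The route is to write $\mathcal{M}+\tfrac{1}{n}\bm{1}\bm{1}^{\top}$, which is positive definite, and use that the inverse of a Schur complement equals the corresponding principal block of the inverse. This must be combined with the rank-one correction from \cref{eq:Ldaggers}, with $\gamma = 1$, to handle the $\bm{1}\bm{1}^{\top}$ terms.
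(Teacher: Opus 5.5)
Your proof is correct, but it takes a different route from the paper.

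\textbf{The paper's route.} The paper inverts the FBI, so that $-\tfrac12\bigl[\begin{smallmatrix}0&\bm{1}^{\top}\\ \bm{1}&\Omega\end{smallmatrix}\bigr]$ is the inverse of the bordered matrix $\bigl[\begin{smallmatrix}4\sigma^2&-2\bm{p}^{\top}\\ -2\bm{p}&(\mathcal{L}^{\dagger}_s)^{\dagger}\end{smallmatrix}\bigr]$. It restricts to the indices $\{0\}\cup\alpha$, replaces $\Omega[\alpha]$ by $\Omega_{\mathrm{red}}$, and invokes the FBI of the reduced graph. It then applies $A^{-1}[\beta]=(A/\beta^c)^{-1}$ to this invertible bordered matrix. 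The bottom-right block of the resulting Schur complement is the claimed identity.

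\textbf{Your route.} You argue by uniqueness. Both sides lie in $\mathfrak{L}^{\pm}_{\mathrm{simplex}}(|\alpha|)$ with resistance matrix $\Omega[\alpha]$, and such a matrix is recovered from its resistances by double centering followed by a pseudoinverse. This dispenses with the FBI entirely. It also isolates the real content: both Kron reductions preserve resistances, and $(\mathcal{L}^{\dagger}_s)^{\dagger}$ is the symmetric Laplacian carrying those resistances.

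\textbf{Trade-offs.} The cost of your route is that it needs resistance invariance under Kron reduction for the signed matrix $\mathcal{M}$. The paper's bordering trick sidesteps this, since every matrix involved there is genuinely invertible; the paper only quotes that invariance later, in Section 4, from Fontan--Altafini. The paper's route also yields, from the other blocks of the same Schur complement, the update formulas for $\bm{p}$ and $\sigma^2$ under Kron reduction. It needs these for the monotonicity of the resistance radius, and your route does not produce them.

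\textbf{Verifying your key lemma.} The simplest verification is a lifting argument. For $\bm{w}\perp\bm{1}$, solve $(\mathcal{M}/\alpha^c)\bm{v}=\bm{w}$ and define $\bm{u}$ by $\bm{u}[\alpha]=\bm{v}$ and $\bm{u}[\alpha^c]=-\mathcal{M}[\alpha^c]^{-1}\mathcal{M}[\alpha^c,\alpha]\bm{v}$. Then $\mathcal{M}\bm{u}$ is the zero-extension $\tilde{\bm{w}}$ of $\bm{w}$, so $\bm{w}^{\top}(\mathcal{M}/\alpha^c)^{\dagger}\bm{w}=\bm{w}^{\top}\bm{v}=\tilde{\bm{w}}^{\top}\mathcal{M}^{\dagger}\tilde{\bm{w}}$. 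If you prefer the $\mathcal{M}+\tfrac1n\bm{1}\bm{1}^{\top}$ route, note that its Schur complement is $\mathcal{M}/\alpha^c$ plus a rank-one term. That term lies along a vector that is generally not parallel to $\bm{1}$, although its inner product with $\bm{1}$ equals $n$. So one extra step is needed there to pass to the pseudoinverse on $\operatorname{span}(\bm{1})^{\perp}$.
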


\begin{proof}
    Taking inverses in the FBI \cref{eq:FBI_generalized}, we obtain
    \begin{equation}\label{eq:FBI_inverse_1}
        -\frac{1}{2}
        \begin{bmatrix}
            0 &\bm{1}^{\top} \\
            \bm{1} & \Omega
        \end{bmatrix}
        =
        \begin{bmatrix}
            4\sigma^2 & -2\bm{p}^{\top} \\ 
            -2\bm{p} & (\mathcal{L}^{\dagger}_s)^{\dagger}
        \end{bmatrix}^{-1}
        \text{ where } 
        \begin{cases}
            \bm{p} = \dfrac{1}{2} (\mathcal{L}^{\dagger}_s)^{\dagger} \bm{\zeta} + \dfrac{1}{n} \bm{1}, \\
            \sigma^2 = \dfrac{1}{4} \bm{\zeta}^{\top} (\mathcal{L}^{\dagger}_s)^{\dagger} \bm{\zeta} + \dfrac{1}{n} \bm{1}^{\top} \bm{\zeta}.
        \end{cases}
    \end{equation}
    Here, the augmented matrix is indexed by $\{0\}\cup[n]$, where the index $0$ corresponds to the first row and column. 
    We calculate the submatrix corresponding to $\alpha^+ := \{0\} \cup \alpha$ on the right-hand side of \cref{eq:FBI_inverse_1}:
    \begin{equation}\label{eq:FBI_inverse_2}
        \begin{aligned}
            &
            \begin{bmatrix}
              4\sigma^2 & -2\bm{p}^{\top} \\
              -2\bm{p} & (\mathcal{L}^{\dagger}_s)^{\dagger}
            \end{bmatrix}^{-1}
            [\alpha^+, \alpha^+]
            \underset{\text{\cref{eq:FBI_inverse_1}}}{=} 
            -\frac{1}{2}
            \begin{bmatrix}
              0 & \bm{1}^{\top} \\
              \bm{1} & \Omega
            \end{bmatrix}
            [\alpha^+, \alpha^+] 
            =
            -\dfrac{1}{2}
            \begin{bmatrix}
              0 & \bm{1}[\alpha]^{\top} \\
              \bm{1}[\alpha] & \Omega[\alpha, \alpha]
            \end{bmatrix} 
            \\
            &
            =
            -\dfrac{1}{2}
            \begin{bmatrix}
              0 & \bm{1}[\alpha]^{\top} \\
              \bm{1}[\alpha] & \Omega_{\text{red}}
            \end{bmatrix} 
            =
            \begin{bmatrix}
                4\sigma^2_{\text{red}} & -2\bm{p}_{\text{red}}^{\top} \\
                -2\bm{p}_{\text{red}} & ((\mathcal{L} / \alpha^c )^{\dagger}_s)^{\dagger}
            \end{bmatrix}^{-1}
            .
        \end{aligned}
    \end{equation}
    The third equality holds since Kron reduction preserves the effective resistance (Lem.I\hspace{-1.2pt}V.4 of \cite{Sugiyama2023graph}). 
    The last equation is the FBI itself in the Kron-reduced graph $\mathcal{G}_{\text{red}} ( =\mathcal{G}(\mathcal{L}/ \alpha^c))$. 
    Here, $\bm{p}_{\text{red}} \in \mathbb{R}^{|\alpha|}$ and ${\sigma}_{\text{red}}^2 \in \mathbb{R}$ are the resistance curvature and resistance radius in $\mathcal{G}_{\text{red}}$, respectively. 

    Next, we calculate the inverse matrix of the leftmost side of \cref{eq:FBI_inverse_2}:
    \begin{align*}
    &
    \Biggl[
        \begin{bmatrix}
            4\sigma^2 & -2\bm{p}^{\top} \\
            -2\bm{p} & (\mathcal{L}^{\dagger}_s)^{\dagger}
        \end{bmatrix}^{-1}
        [\alpha^+, \alpha^+] 
    \Biggr]^{-1} 
    =
    \begin{bmatrix}
      4\sigma^2 & -2\bm{p}^{\top} \\
      -2\bm{p} & (\mathcal{L}^{\dagger}_s)^{\dagger}
    \end{bmatrix}
    \biggl/ 
    \begin{bmatrix}
      4\sigma^2 & -2\bm{p}^{\top} \\
      -2\bm{p} & (\mathcal{L}^{\dagger}_s)^{\dagger}
    \end{bmatrix} [\alpha^c, \alpha^c]
    \\
    &= 
    \begin{bmatrix}
      4\sigma^2 & -2\bm{p}[\alpha]^{\top} \\
      -2\bm{p}[\alpha] & (\mathcal{L}^{\dagger}_s)^{\dagger}[\alpha, \alpha]
    \end{bmatrix}
    -
    \begin{bmatrix}
      -2\bm{p}[\alpha^c]^{\top} \\
      (\mathcal{L}^{\dagger}_s)^{\dagger}[\alpha, \alpha^c]
    \end{bmatrix}
    (\mathcal{L}^{\dagger}_s)^{\dagger}[\alpha^c, \alpha^c]^{-1}
    \begin{bmatrix}
      -2\bm{p}[\alpha^c] &
      (\mathcal{L}^{\dagger}_s)^{\dagger}[\alpha^c, \alpha]
    \end{bmatrix}
    .
    \end{align*} 
    We use the following Schur-complement identity: $A^{-1}[\beta] = (A / \beta^c)^{-1} \  (\forall\beta \subset [n])$ (Thm.I.2 of \cite{Zhang2005SchurComplement}) in the first equality. The last expression above corresponds to the right-hand side of the FBI in $\mathcal{G}_{\text{red}}$. Focusing on the bottom right block, we have
    \begin{equation*}\label{eq:commutative_diagram}
        ((\mathcal{L} / \alpha^c)^{\dagger}_s)^{\dagger}
        = 
        (\mathcal{L}^{\dagger}_s)^{\dagger}[\alpha, \alpha] 
        - 
        (\mathcal{L}^{\dagger}_s)^{\dagger}[\alpha, \alpha^c] 
        (\mathcal{L}^{\dagger}_s)^{\dagger}[\alpha^c, \alpha^c]^{-1}
        (\mathcal{L}^{\dagger}_s)^{\dagger}[\alpha^c, \alpha] \\
        =
        (\mathcal{L}^{\dagger}_s)^{\dagger} / \alpha^c.
    \end{equation*}
    This shows that the two algebraic operations on $\mathcal{L}$, namely
    taking the pseudoinverse of the symmetrized pseudoinverse, $\mathcal{L} \mapsto (\mathcal{L}^{\dagger}_s)^{\dagger}$, and Kron reduction, $\mathcal{L} \mapsto \mathcal{L} / \alpha^c$, are commutative, as required.
\end{proof}

This commutativity is not apparent from the FBI for undirected graphs (Thm.I\hspace{-1.2pt}I\hspace{-1.2pt}I.6 of \cite{Devriendt2022graph}) due to the fact that $(\mathcal{L}^{\dagger}_s)^{\dagger} - \mathcal{L} = (\mathcal{L}^{\dagger}_s)^{\dagger} - \mathcal{L}_s = O$ in that case. 
Furthermore, we note that taking the symmetrized pseudoinverse, $\mathcal{L} \mapsto \mathcal{L}^{\dagger}_s$, and Kron reduction, $\mathcal{L} \mapsto \mathcal{L} / \alpha^c$, do not commute. In other words, 
\begin{equation}
    \label{eq:Not_commutaitve_diagram}
    \mathcal{L}^{\dagger}_s / \alpha^c \neq (\mathcal{L} / \alpha^c )^{\dagger}_s \quad \text{in general}.
\end{equation}

\begin{example}
    For the Laplacian in \cref{eq:exmaple_SCWB_directed_to_signed_undirected}, 
    let $\alpha = \{1, 3, 4\}$ be the set of nodes to be retained by Kron reduction. Then the two matrices are as follows:
    \begin{align*}
        \mathcal{L}^{\dagger}_s / \alpha^c
        =
        \frac{1}{330}
        \begin{bmatrix}
            41 & -27 & -14 \\
            -27 & 54 & -27 \\
            -14 & -27 & 41
        \end{bmatrix}
        \neq
        (\mathcal{L} / \alpha^c )^{\dagger}_s
        = \frac{1}{54}
        \begin{bmatrix}
            7 & -5 & -2 \\
            -5 & 10 & -5 \\
            -2 & -5 & 7
        \end{bmatrix}
    \end{align*}
\end{example}

Finally, we show that weight-balancing with a positive diagonal matrix and Kron reduction commute for SC-directed graphs.

\begin{lemma}
\label{lem:commutative_diagram_Kron_redution_and_weight_balanced}
    Let $\mathcal{L}$ be the Laplacian of an SC-directed graph, and let $M$ be a positive diagonal matrix such that $M\mathcal{L}$ is weight-balanced. For any set of nodes $\alpha$ such that $\alpha \subsetneq [n],\ |\alpha|\geq 2$, the following equality holds:
    \begin{equation}
        \label{eq:com_weight_balanced_KR}
    (M \mathcal{L}) / \alpha^c = M[\alpha](\mathcal{L} /\alpha^c).
  \end{equation}
  In other words, making the Laplacian weight-balanced, $\mathcal{L} \mapsto M\mathcal{L}$, commutes with Kron reduction, $\mathcal{L} \mapsto \mathcal{L}/ \alpha^c$.
\end{lemma}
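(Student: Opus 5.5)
The identity follows from writing out the Schur complement directly, using that $M$ is diagonal and so acts row-wise. We partition indices into $\alpha$ and $\alpha^c$. Because $M$ is diagonal, every block of $M\mathcal{L}$ factors as
\[
(M\mathcal{L})[\beta,\gamma] = M[\beta]\,\mathcal{L}[\beta,\gamma]
\qquad \text{for } \beta,\gamma\in\{\alpha,\alpha^c\}.
\]
In particular $(M\mathcal{L})[\alpha^c,\alpha^c] = M[\alpha^c]\,\mathcal{L}[\alpha^c,\alpha^c]$.

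First we check that both Schur complements are well defined. Since $\mathcal{G}(\mathcal{L})$ is SC, every proper subset $\alpha$ with $|\alpha|\ge 2$ is reachable. By the results of \cite{Sugiyama2023graph} recalled in \cref{subsec:KR}, $\mathcal{L}[\alpha^c,\alpha^c]$ is therefore invertible. (Alternatively, it is an irreducibly diagonally dominant principal submatrix of a Laplacian with strict dominance in some row.) $M[\alpha^c]$ is a positive diagonal matrix, so $(M\mathcal{L})[\alpha^c,\alpha^c]$ is invertible too, and
\[
(M\mathcal{L})[\alpha^c,\alpha^c]^{-1} = \mathcal{L}[\alpha^c,\alpha^c]^{-1} M[\alpha^c]^{-1}.
\]

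Next we substitute this into the definition of the Schur complement:
\[
(M\mathcal{L})/\alpha^c
= M[\alpha]\mathcal{L}[\alpha,\alpha]
- M[\alpha]\mathcal{L}[\alpha,\alpha^c]\,\mathcal{L}[\alpha^c,\alpha^c]^{-1}M[\alpha^c]^{-1}\,M[\alpha^c]\mathcal{L}[\alpha^c,\alpha].
\]
The factors $M[\alpha^c]^{-1}M[\alpha^c]$ cancel. Pulling $M[\alpha]$ out on the left gives $M[\alpha](\mathcal{L}/\alpha^c)$, which is \cref{eq:com_weight_balanced_KR}.

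For the interpretation as commutation, we also note that $M[\alpha]$ is a valid balancing matrix for the reduced graph. The matrix $M\mathcal{L}$ is SCWB and $\alpha$ is reachable in it, so by Thm.III.12 of \cite{Sugiyama2023graph} its Kron reduction $(M\mathcal{L})/\alpha^c$ is weight-balanced. Hence $M[\alpha]$ weight-balances $\mathcal{L}/\alpha^c$.

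The only step requiring care is the invertibility of $\mathcal{L}[\alpha^c,\alpha^c]$, and it is settled by citing reachability. The rest is a one-line algebraic cancellation.
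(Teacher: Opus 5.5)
Your proof is correct and follows the paper's argument step for step: invertibility of $\mathcal{L}[\alpha^c,\alpha^c]$ from reachability in \cite{Sugiyama2023graph}, the row-wise factorization of the blocks of $M\mathcal{L}$, cancellation of $M[\alpha^c]^{-1}M[\alpha^c]$, and the closing remark that $M[\alpha]$ weight-balances $\mathcal{L}/\alpha^c$ because $(M\mathcal{L})/\alpha^c$ is weight-balanced. Only your parenthetical alternative is loose: $\mathcal{L}[\alpha^c,\alpha^c]$ need not be irreducible, so one would argue on its irreducible diagonal blocks, each of which has a strictly dominant row by strong connectivity. This does not affect the proof, since the cited reachability result already gives invertibility.
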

\begin{proof}
    Since the submatrix $\mathcal{L}[\alpha^c, \alpha^c]$ is invertible (see the proof of Lem.I\hspace{-1.2pt}I\hspace{-1.2pt}I.3 of \cite{Sugiyama2023graph}), it can be shown by direct calculation that the two sides are equal.
    \begin{align*}
        (M \mathcal{L}) / \alpha^c 
        &=
        \begin{bmatrix}
            M[\alpha] \mathcal{L}[\alpha, \alpha] &  M[\alpha] \mathcal{L}[\alpha, \alpha^c]\\
            M[\alpha^c] \mathcal{L}[\alpha^c, \alpha] & M[\alpha^c] \mathcal{L}[\alpha^c, \alpha^c]
        \end{bmatrix}
        \biggl/  \alpha^c \\
        &= M[\alpha] \mathcal{L}[\alpha, \alpha] - M[\alpha] \mathcal{L}[\alpha, \alpha^c] (M[\alpha^c] \mathcal{L}[\alpha^c, \alpha^c])^{-1} M[\alpha^c] \mathcal{L}[\alpha^c, \alpha] \\
        &= M[\alpha] (\mathcal{L}[\alpha, \alpha] - \mathcal{L}[\alpha, \alpha^c]  \mathcal{L}[\alpha^c, \alpha^c]^{-1} \mathcal{L}[\alpha^c, \alpha]) \\
        &= M[\alpha] (\mathcal{L}/ \alpha^c).
    \end{align*}
    Since $(M\mathcal{L})/\alpha^c$ is an SCWB-directed Laplacian, the above equality shows that the matrix $M[\alpha]$ weight-balances $\mathcal{L}/\alpha^c$.
\end{proof}

This lemma shows that the order of weight-balancing and Kron reduction does not affect the resulting reduced SCWB-directed Laplacian. This commutativity is useful because properties and formulas established for weight-balanced directed graphs can be transferred to general SC-directed graphs before or after Kron reduction (discussed in more detail in the next subsection).

\subsection{Resistance curvature and resistance radius}
\label{subsec:resistance_curvature_resistance_radius}
We now extend the concepts of resistance curvature and resistance radius to the class of SC-directed graphs. For any SC-directed graph $\mathcal{G}(\mathcal{L})$, there exists a corresponding weight-balanced Laplacian $\mathcal{L}_{\text{WB}}$,
as mentioned in \Cref{subsec:ER}.
The FBI for SC-directed graphs \cref{eq:FBI_simple_SC} establishes a formal relationship between their Laplacians and resistance matrices, providing a consistent foundation for defining these geometric invariants in the SC-directed setting.

\begin{definition}
    \label{def:resistance_curvature_radius_SC}
    Let $\mathcal{G}(\mathcal{L})$ be an SC-directed graph and let $\Omega$ be the resistance matrix.
    Let $M$ be a positive diagonal matrix such that $M\mathcal{L}$ is an SCWB-directed Laplacian, and define $\Omega_{\mathrm{WB}} := M^{-1} \Omega$. 
    Suppose that $\bm{1}^{\top}\Omega^{-1}\bm{1} > 0$. 
    Then, the resistance curvature $\bm{p}$ and resistance radius $\sigma^2$ of $\mathcal{G}(\mathcal{L})$ are defined by
      \begin{equation*}
        \bm{p} 
        := \frac{\Omega^{-1} \bm{1}}{\bm{1}^{\top} \Omega^{-1} \bm{1}} 
        = \frac{\Omega_{\text{WB}}^{-1} M^{-1} \bm{1}}{\bm{1}^{\top} \Omega_{\text{WB}}^{-1} M^{-1} \bm{1}}, \quad
        \sigma^2 
        := \frac{1}{2} (\bm{1}^{\top} \Omega^{-1} \bm{1})^{-1}
        = \frac{1}{2}(\bm{1}^{\top} \Omega_{\text{WB}}^{-1} M^{-1} \bm{1})^{-1}.
    \end{equation*}
\end{definition}
\begin{remark}
    The positivity condition is automatically satisfied when the graph is SCWB-directed. Indeed, in that case, $\bm{1}^{\top}\Omega^{-1}\bm{1}=1/(2\sigma^2)>0$ follows from the generalized Fiedler--Bapat identity.
    
    For a general SC-directed graph, however, this positivity is not automatic. Let $\bm{p}_{\mathrm{WB}}$ and $\sigma_{\mathrm{WB}}^2$ denote the resistance curvature and resistance radius of $\mathcal{G}(\mathcal{L}_{\mathrm{WB}})$.
    Since $\Omega = M\Omega_{\mathrm{WB}}$, we have $\bm{1}^{\top}\Omega^{-1}\bm{1} = \bm{p}_{\mathrm{WB}}^{\top}M^{-1}\bm{1} / 2\sigma_{\mathrm{WB}}^2$. Therefore, the condition in Definition~\ref{def:resistance_curvature_radius_SC}
    is equivalent to $\bm{p}_{\mathrm{WB}}^{\top}M^{-1}\bm{1}>0$. In particular, the positive diagonal scaling $M$ may cause the positive and
    negative components of $\bm{p}_{\mathrm{WB}}$ to cancel, so the quantity $\bm{1}^{\top}\Omega^{-1}\bm{1}$ may be zero or negative.
\end{remark}

The resulting quantities are graph invariants that encode the effect of the scaling matrix $M$. Note that $\Omega$ is invertible since $\Omega_{\text{WB}}$ is invertible, as proven in \cref{thm:FBI_generalized}, and thus $\det \Omega = \det M \cdot \det \Omega_{\text{WB}} \neq 0$. Since $\Omega$ is uniquely determined by $\mathcal{G}(\mathcal{L})$, the quantities $\bm{p}$ and $\sigma^2$ are uniquely determined whenever $\bm{1}^{\top}\Omega^{-1}\bm{1}>0$. The equivalent expressions involving $M$ and $\Omega_{\mathrm{WB}}$ are independent of the choice of the positive scalar normalization of $M$. Furthermore, the relationship between resistance curvature and resistance radius in $\mathcal{G}(\mathcal{L})$ and $\mathcal{G}(\mathcal{L}_{\text{WB}})$ is as follows.

\begin{lemma}
    \label{lem:resistance_curvature/radius_WB}
    Let $\mathcal{G}(\mathcal{L})$ be an SC-directed graph with resistance matrix $\Omega$, and let $M$ be a positive diagonal matrix such that
    $\mathcal{L}_{\mathrm{WB}}:=M\mathcal{L}$ is an SCWB-directed Laplacian.
    Assume that $\bm{1}^{\top}\Omega^{-1}\bm{1}>0$. Let $\bm{p}$ and $\sigma^2$ be the resistance curvature and resistance radius
    of $\mathcal{G}(\mathcal{L})$, respectively, as defined in \cref{def:resistance_curvature_radius_SC}.
    Furthermore, let $\bm{p}_{\mathrm{WB}}$ and $\sigma_{\mathrm{WB}}^2$ be those of $\mathcal{G}(\mathcal{L}_{\mathrm{WB}})$. Then, 
    \begin{equation}
        \label{eq:resistance_curvature/radius_WB}
        \begin{cases}
            \displaystyle \bm{p}_{\text{WB}} = \frac{\Omega^{-1} M \Omega \bm{p}}{\bm{1}^{\top} \Omega^{-1} M \Omega \bm{p}} \\
            \displaystyle \sigma^2_{\text{WB}} = \frac{\sigma^2}{\bm{1}^{\top} \Omega^{-1} M \Omega \bm{p}}
        \end{cases}, \quad
        \begin{cases}
            \displaystyle \bm{p} = \frac{\Omega^{-1} M^{-1} \Omega \bm{p}_{\text{WB}}}{\bm{1}^{\top} \Omega^{-1} M^{-1} \Omega \bm{p}_{\text{WB}}} \\
            \displaystyle \sigma^2 = \frac{\sigma^2_{\text{WB}}}{\bm{1}^{\top} \Omega^{-1} M^{-1} \Omega \bm{p}_{\text{WB}}}
        \end{cases}.
    \end{equation}
\end{lemma}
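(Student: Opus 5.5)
We prove the four formulas by direct algebra from the definitions, using $\Omega = M\Omega_{\mathrm{WB}}$ from \cref{eq:ER_onlySC} (equivalently \cref{eq:FBI_simple_SC}). From \cref{def:resistance_curvature_radius_SC} we have
\[
\bm{p} = 2\sigma^2\,\Omega^{-1}\bm{1}, \qquad \text{equivalently} \qquad \Omega\bm{p} = 2\sigma^2\bm{1}.
\]
Since $\mathcal{G}(\mathcal{L}_{\mathrm{WB}})$ is SCWB, \cref{cor:FBI_generalized} gives
\[
\bm{p}_{\mathrm{WB}} = 2\sigma^2_{\mathrm{WB}}\,\Omega_{\mathrm{WB}}^{-1}\bm{1}, \qquad \Omega_{\mathrm{WB}}\bm{p}_{\mathrm{WB}} = 2\sigma^2_{\mathrm{WB}}\bm{1}, \qquad \bm{1}^{\top}\bm{p}_{\mathrm{WB}}=1,
\]
and also $\bm{1}^{\top}\bm{p}=1$ by the definition of $\bm{p}$. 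Invertibility of $\Omega$ and $\Omega_{\mathrm{WB}}$ is already established, by \cref{thm:FBI_generalized} and the remark following the definition.

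\textbf{First pair of formulas.} We express $\Omega_{\mathrm{WB}}^{-1}\bm{1}$ through $\bm{p}$. Since $\Omega_{\mathrm{WB}} = M^{-1}\Omega$, we get $\Omega_{\mathrm{WB}}^{-1} = \Omega^{-1}M$. Substituting $\bm{1} = \Omega\bm{p}/(2\sigma^2)$ gives
\[
\Omega_{\mathrm{WB}}^{-1}\bm{1} = \frac{\Omega^{-1}M\Omega\bm{p}}{2\sigma^2}.
\]
Multiplying by $2\sigma^2_{\mathrm{WB}}$ yields
\[
\bm{p}_{\mathrm{WB}} = \frac{\sigma^2_{\mathrm{WB}}}{\sigma^2}\,\Omega^{-1}M\Omega\bm{p}.
\]
Imposing $\bm{1}^{\top}\bm{p}_{\mathrm{WB}}=1$ forces
\[
\frac{\sigma^2_{\mathrm{WB}}}{\sigma^2} = \frac{1}{\bm{1}^{\top}\Omega^{-1}M\Omega\bm{p}}.
\]
This gives both formulas in the first case at once. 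The denominator is nonzero, because it equals $\sigma^2/\sigma^2_{\mathrm{WB}}$, and both radii are positive: $\sigma^2_{\mathrm{WB}}>0$ by \cref{thm:FBI_generalized}, and $\sigma^2>0$ by the positivity assumption $\bm{1}^{\top}\Omega^{-1}\bm{1}>0$.

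\textbf{Second pair of formulas.} These follow symmetrically. Use $\Omega^{-1} = \Omega_{\mathrm{WB}}^{-1}M^{-1}$. It is convenient to rewrite this as $\Omega^{-1} = \Omega^{-1}M^{-1}\Omega\,\Omega^{-1}$, which is trivial but places the product in the form of the statement. Then substitute $\bm{1} = \Omega_{\mathrm{WB}}\bm{p}_{\mathrm{WB}}/(2\sigma^2_{\mathrm{WB}})$. This gives
\[
\Omega^{-1}\bm{1} = \frac{\Omega^{-1}\Omega_{\mathrm{WB}}\bm{p}_{\mathrm{WB}}}{2\sigma^2_{\mathrm{WB}}} = \frac{\Omega^{-1}M^{-1}\Omega\bm{p}_{\mathrm{WB}}}{2\sigma^2_{\mathrm{WB}}},
\]
using $\Omega_{\mathrm{WB}} = M^{-1}\Omega$. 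Multiplying by $2\sigma^2$ and normalizing with $\bm{1}^{\top}\bm{p}=1$ yields the second case in the same way as the first.

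\textbf{Expected obstacle.} The only delicate point is bookkeeping with the non-commuting factors $M$ and $\Omega$. One must check that $M^{-1}\Omega$, not $\Omega M^{-1}$, is the correct WB resistance matrix. This holds because \cref{eq:ER_onlySC} scales row $a$ by $m_a$. Once that is fixed, each formula is a one-line substitution.
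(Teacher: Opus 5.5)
Your proof is correct and follows the paper's own argument. Both substitute $\bm{1} = \Omega\bm{p}/(2\sigma^2)$ (respectively $\bm{1} = \Omega_{\mathrm{WB}}\bm{p}_{\mathrm{WB}}/(2\sigma^2_{\mathrm{WB}})$) together with $\Omega_{\mathrm{WB}}^{-1} = \Omega^{-1}M$ into the curvature and radius formulas. The only difference is cosmetic: you normalize via $\bm{1}^{\top}\bm{p}_{\mathrm{WB}}=1$ instead of plugging into the quotient formulas directly.

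One slip should be deleted. The ``trivial'' rewrite $\Omega^{-1} = \Omega^{-1}M^{-1}\Omega\,\Omega^{-1}$ is false, since its right-hand side equals $\Omega^{-1}M^{-1}$. However, your displayed computation of $\Omega^{-1}\bm{1}$ never uses it, so the argument is unaffected.
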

\begin{proof}
    Using the relationship $\Omega \bm{p}= 2 \sigma^2 \bm{1}$ from \cref{def:resistance_curvature_radius_SC} 
    and $\Omega_{\text{WB}} \bm{p}_{\text{WB}} = 2\sigma^2_{\text{WB}} \bm{1}$ from \cref{cor:FBI_generalized}, we obtain
    \begin{align*}
        &
        \bm{p}_{\text{WB}}
        = \frac{\Omega_{\text{WB}}^{-1} \bm{1}}{\bm{1}^{\top} \Omega_{\text{WB}}^{-1} \bm{1}}
        = \frac{\Omega^{-1} M \ ( \frac{1}{2\sigma^2} \Omega \bm{p})}{\bm{1}^{\top}\Omega^{-1} M \ (\frac{1}{2\sigma^2} \Omega \bm{p})}
        = \frac{\Omega^{-1} M \Omega \bm{p}}{\bm{1}^{\top} \Omega^{-1} M \Omega \bm{p}}, \\
        &
        \sigma^2_{\text{WB}}
        = \frac{1}{2 (\bm{1}^{\top} \Omega_{\text{WB}}^{-1} \bm{1})}
        = \frac{1}{2 (\bm{1}^{\top} \Omega^{-1} M \ (\frac{1}{2\sigma^2} \Omega \bm{p}))} 
        = \frac{\sigma^2}{\bm{1}^{\top} \Omega^{-1} M \Omega \bm{p}}.
    \end{align*}
    The proof is similar for the inverse relationship.
\end{proof}

Through the relationships above, we can investigate the changes in resistance curvature and resistance radius between $\mathcal{G}(\mathcal{L})$ and $\mathcal{G}(\mathcal{L}_{\text{WB}})$. 
Furthermore, these changes before and after Kron reduction in an SCWB-directed graph can also be obtained from the proof of \cref{thm:com_Ldagger_s_dagger_KR} as follows:
\begin{equation} 
    \label{eq:change_resistance_curvature/radius}
    \begin{gathered}
      \bm{p}_{\text{WB.red}} = \bm{p}_{\text{WB}}[\alpha] - (\mathcal{L}^{\dagger}_s)^{\dagger}[\alpha, \alpha^c] (\mathcal{L}^{\dagger}_s)^{\dagger}[\alpha^c, \alpha^c]^{-1} \bm{p}_{\text{WB}}[\alpha^c], \\
      \sigma_{\text{WB.red}}^2 = \sigma^2_{\text{WB}} - \bm{p}_{\text{WB}}[\alpha^c]^{\top} (\mathcal{L}^{\dagger}_s)^{\dagger}[\alpha^c, \alpha^c]^{-1} \bm{p}_{\text{WB}}[\alpha^c].
    \end{gathered}
\end{equation}
By combining \cref{lem:resistance_curvature/radius_WB} and \cref{eq:change_resistance_curvature/radius}, we can track the change in resistance curvature and resistance radius before and after Kron reduction even in SC-directed graphs.

We first examine how resistance curvature and resistance radius change when the graph is SCWB-directed. According to equation \cref{eq:FBI_generalized_relations_1}, the sum of the node-wise resistance curvatures is 1. This property is inherited after Kron reduction:
\begin{align*}
  \bm{1}[\alpha]^{\top} \bm{p}_{\text{red}}
  &= \bm{1}[\alpha]^{\top} ( \bm{p}[\alpha] - (\mathcal{L}^{\dagger}_s)^{\dagger}[\alpha, \alpha^c] (\mathcal{L}^{\dagger}_s)^{\dagger}[\alpha^c, \alpha^c]^{-1} \bm{p}[\alpha^c] ) \\
  &= \bm{1}[\alpha]^{\top} \bm{p}[\alpha] + \bm{1}[\alpha^c]^{\top} (\mathcal{L}^{\dagger}_s)^{\dagger}[\alpha^c, \alpha^c] (\mathcal{L}^{\dagger}_s)^{\dagger}[\alpha^c, \alpha^c]^{-1} \bm{p}[\alpha^c] \quad (\because \bm{1}^{\top} {(\mathcal{L}^{\dagger}_s)^{\dagger}} = \bm{0}^{\top}) \\
  &= \bm{1}[\alpha]^{\top} \bm{p}[\alpha] + \bm{1}[\alpha^c]^{\top} \bm{p}[\alpha^c] \\
  &= 1.
\end{align*}
As a simple case, consider $\alpha = [n-1]$. From \cref{eq:change_resistance_curvature/radius}, we have 
\begin{equation}\label{eq:change_p_sigma2_KR_one_node}
  p_{\text{red}.i} = p_i - \frac{(\mathcal{L}^{\dagger}_s)^{\dagger}_{in}}{(\mathcal{L}^{\dagger}_s)^{\dagger}_{nn}} p_n, \quad
  \sigma_{\text{red}}^2 = \sigma^2 - \frac{p_n^2}{(\mathcal{L}^{\dagger}_s)^{\dagger}_{nn}} \leq \sigma^2 
  \quad  \text{for all } i \in [n-1].
\end{equation}
We note that $(\mathcal{L}^{\dagger}_s)^{\dagger}_{nn} = \bm{1}_n^{\top} (\mathcal{L}^{\dagger}_s)^{\dagger} \bm{1}_n > 0$ holds since $(\mathcal{L}^{\dagger}_s)^{\dagger}$ is PSD with its kernel being $\operatorname{span}(\bm{1})$.
When the resistance curvature at node $n$ is nonzero, the resistance radius strictly decreases in the Kron-reduced graph $\mathcal{G}(\mathcal{L}/ \{n\} )$. 
Therefore, the resistance radius is non-decreasing when viewed as a function on the node set, as in the case of unsigned undirected graphs (Def.I\hspace{-1.2pt}I\hspace{-1.2pt}I.37 of \cite{Devriendt2022graph}). 

\begin{definition}[resistance radius]
  \label{def:resistance_radius_for_SCWB}
  The resistance radius (set) function for an SCWB-directed graph is defined by 
  \begin{equation}
          \sigma^2(\mathcal{V}) 
          := \dfrac{1}{2}(\bm{1}^{\top}\Omega[\mathcal{V}, \mathcal{V}]^{-1} \bm{1})^{-1} 
          \text{ for all } \mathcal{V} \subseteq [n] \text{ and } |\mathcal{V}| \geq 2
  \end{equation}
 and we define $\sigma^2 (\emptyset) = \sigma^2 (\{i \}) = 0$ for all $i \in [n]$.
\end{definition}

Note that the effective resistance in the SCWB-directed case is given by \cref{eq:ER_SCWB}. Under this definition, the following properties hold:

\begin{theorem}
    \label{thm:property_resistance_radius_set_function}
    The following properties hold for the resistance radius set function of an SCWB-directed graph:
    \begin{enumerate} 
    \item 
    For any node sets $\mathcal{U}, \mathcal{V} \subseteq [n]$ satisfying 
    $\mathcal{U} \subsetneq \mathcal{V}$ and $| \mathcal{U}| \geq 2$, 
    \begin{equation*}
        0 < \sigma^2(\mathcal{U}) \leq \sigma^2(\mathcal{V}).
    \end{equation*}
    Moreover, the equality holds if and only if $\bm{p}(\mathcal{V})[\mathcal{V}\backslash\mathcal{U}] = \bm{0}$
    where $\bm{p}(\mathcal{V})$ is the resistance curvature of the Kron-reduced graph $\mathcal{G}(\mathcal{L}/ \mathcal{V}^c)$. 
    \item 
    For any pair of nodes $i, j \in [n]$ satisfying $i \neq j$, 
    \begin{equation*}
        \sigma^2(\{i,j\}) = R_{\mathcal{G}}(i,j)/4 > 0.
    \end{equation*}
    \end{enumerate}
\end{theorem}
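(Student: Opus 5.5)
The plan is to reduce everything to the Kron-reduced graph and then reuse the one-node elimination formula \cref{eq:change_p_sigma2_KR_one_node}.

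\textbf{Identifying $\sigma^2(\mathcal{V})$.} For $|\mathcal{V}|\ge 2$, Kron reduction preserves effective resistance (Lem.I\hspace{-1.2pt}V.4 of \cite{Sugiyama2023graph}). Hence $\Omega[\mathcal{V},\mathcal{V}]$ is exactly the resistance matrix of $\mathcal{G}(\mathcal{L}/\mathcal{V}^c)$. When $\mathcal{V}=[n]$ we simply take the original graph. The reduced Laplacian is again an SCWB-directed Laplacian, since Kron reduction keeps both the WB and SC properties. So \cref{thm:FBI_generalized} and \cref{cor:FBI_generalized} apply to it. In particular, $\Omega[\mathcal{V},\mathcal{V}]$ is invertible and \cref{eq:FBI_generalized_relations_3} gives $\sigma^2(\mathcal{V})=\tfrac12(\bm{1}^\top\Omega[\mathcal{V},\mathcal{V}]^{-1}\bm{1})^{-1}$. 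This is the resistance radius of the reduced graph, and it is $>0$ by the proof of \cref{thm:FBI_generalized}. That settles the strict positivity $0<\sigma^2(\mathcal{U})$.

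\textbf{Monotonicity in item 1.} Remove the nodes of $\mathcal{V}\setminus\mathcal{U}$ one at a time. Schur complements compose (the quotient property recalled in the Notation), so $\mathcal{L}/\mathcal{U}^c=(\mathcal{L}/\mathcal{V}^c)/(\mathcal{V}\setminus\mathcal{U})$. Each single-node step applied to an SCWB Laplacian yields the inequality in \cref{eq:change_p_sigma2_KR_one_node}, namely $\sigma^2_{\mathrm{red}}=\sigma^2-p_k^2/(\mathcal{L}^{\dagger}_s)^{\dagger}_{kk}\le\sigma^2$. Chaining these steps gives $\sigma^2(\mathcal{U})\le\sigma^2(\mathcal{V})$.

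\textbf{Equality condition.} The cleaner route is to eliminate $\mathcal{V}\setminus\mathcal{U}$ in one step using the block formula \cref{eq:change_resistance_curvature/radius}, applied to the graph $\mathcal{G}(\mathcal{L}/\mathcal{V}^c)$ with $\beta:=\mathcal{V}\setminus\mathcal{U}$ and $P:=((\mathcal{L}/\mathcal{V}^c)^{\dagger}_s)^{\dagger}$. This gives
\[
\sigma^2(\mathcal{U})=\sigma^2(\mathcal{V})-\bm{p}(\mathcal{V})[\beta]^\top P[\beta,\beta]^{-1}\bm{p}(\mathcal{V})[\beta].
\]
Here $P$ is PSD with kernel $\operatorname{span}(\bm{1})$. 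Because $\mathcal{U}$ is nonempty, $\beta$ is a proper subset, so $P[\beta,\beta]$ is positive definite: if $\bm{y}^\top P[\beta,\beta]\bm{y}=0$, the zero-padded vector would lie in $\operatorname{span}(\bm{1})$ while vanishing on $\mathcal{U}$, forcing $\bm{y}=\bm{0}$. Hence the subtracted term is $\ge 0$, with equality iff $\bm{p}(\mathcal{V})[\beta]=\bm{0}$. This yields both the inequality and the stated equality characterization.

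The main obstacle is making sure \cref{eq:change_resistance_curvature/radius} is justified for a general block $\alpha^c$ and not just a single node. I would rederive it from the Schur-complement computation in the proof of \cref{thm:com_Ldagger_s_dagger_KR}, reading off the top-left and off-diagonal blocks of the displayed Schur complement. Those blocks give $4\sigma^2_{\mathrm{red}}=4\sigma^2-4\bm{p}[\alpha^c]^\top P[\alpha^c,\alpha^c]^{-1}\bm{p}[\alpha^c]$ and the corresponding formula for $\bm{p}_{\mathrm{red}}$.

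\textbf{Item 2.} For $\mathcal{V}=\{i,j\}$ we have $\Omega[\mathcal{V},\mathcal{V}]=\bigl[\begin{smallmatrix}0&R\\R&0\end{smallmatrix}\bigr]$, where $R=R_{\mathcal{G}}(i,j)$ is symmetric in the SCWB case by \cref{eq:ER_SCWB}. Its inverse has off-diagonal entries $1/R$, so $\bm{1}^\top\Omega[\mathcal{V},\mathcal{V}]^{-1}\bm{1}=2/R$ and therefore $\sigma^2=R/4$. Positivity holds because $R>0$: $\bm{1}_i-\bm{1}_j\notin\ker\mathcal{L}^{\dagger}_s$.
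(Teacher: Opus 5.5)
Your proposal is correct and follows the paper's proof. The heart of item 1 is the same step the paper uses: apply the block Schur-complement formula for $\sigma^2_{\mathrm{red}}$ to $\mathcal{G}(\mathcal{L}/\mathcal{V}^c)$ with $\beta=\mathcal{V}\setminus\mathcal{U}$, then use the positive-definiteness of $((\mathcal{L}/\mathcal{V}^c)^{\dagger}_s)^{\dagger}[\beta,\beta]$. Your one-node chaining is redundant but harmless.

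The remaining differences are cosmetic. For item 2 you invert the $2\times 2$ block $\Omega[\{i,j\},\{i,j\}]$ directly from the definition, whereas the paper goes through the reduced two-node Laplacian $c\bigl[\begin{smallmatrix}1&-1\\-1&1\end{smallmatrix}\bigr]$. You also explicitly justify $\sigma^2(\mathcal{U})>0$ via the FBI on the reduced SCWB graph, which the paper leaves implicit.
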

\begin{proof}
    Let $\beta := \mathcal{V}\backslash \mathcal{U}$. Consider the Kron-reduced graph $\mathcal{G}(\mathcal{L} / \mathcal{V}^c)$. Reducing this graph further onto $\mathcal{U}$ amounts to eliminating the node set $\beta$. From \cref{eq:change_resistance_curvature/radius}, we have 
    \begin{equation}
        \label{eq:resistance_radius_non_decresing_function}
        \sigma^2(\mathcal{U}) = \sigma^2(\mathcal{V}) - \bm{p}(\mathcal{V})[\beta]^\top ((\mathcal{L}/\mathcal{V}^c)^{\dagger}_s)^\dagger [\beta,\beta]^{-1} \bm{p}(\mathcal{V})[\beta]. 
    \end{equation}
    Here, the matrix $((\mathcal{L}/\mathcal{V}^c)^{\dagger}_s)^{\dagger}$ is PSD with its kernel being $\operatorname{span}(\bm{1})$. Since $\beta$ is a proper subset of $\mathcal{V}$, the principal submatrix $((\mathcal{L}/\mathcal{V}^c)^{\dagger}_s)^\dagger [\beta,\beta]$ is positive-definite. Indeed, for any $\bm{x} \in \mathbb{R}^{|\beta|} \backslash \{\bm{0} \}$, let $\tilde{\bm{x}} \in \mathbb{R}^{|\mathcal{V}|}$ be the vector obtained by extending $\bm{x}$ by zeros on $\mathcal{U}$. Then $\tilde{\bm{x}} \notin \operatorname{span}(\bm{1})$, and hence 
    \begin{equation*}
        \bm{x}^\top ((\mathcal{L}/\mathcal{V}^c)^{\dagger}_s)^{\dagger} [\beta, \beta] \bm{x} = \tilde{\bm{x}}^\top ((\mathcal{L}/\mathcal{V}^c)^{\dagger}_s)^{\dagger} \tilde{\bm{x}} > 0. 
    \end{equation*}
    Therefore, the quadratic term on the right-hand side of \cref{eq:resistance_radius_non_decresing_function} is non-negative, which implies 
    $\sigma^2(\mathcal{U}) \leq \sigma^2(\mathcal{V})$. 
    Moreover, since $((\mathcal{L}/\mathcal{V}^c)^{\dagger}_s)^{\dagger} [\beta,\beta]$ is positive-definite, the equality holds if and only if $\bm{p}(\mathcal{V})[\beta]=\bm{0}$.
    
    Regarding the second point, the Laplacian of a Kron-reduced graph with only two remaining nodes $i, j \in [n] \ (i \neq j)$ can be written as $c \bigl[\begin{array}{cc}\begin{smallmatrix}1 & -1 \\ -1 & 1 \end{smallmatrix}\end{array}\bigr]$, 
    where $c$ is some positive constant. The pseudoinverse can be calculated directly, and we find that 
    $\mathcal{L}^{\dagger} = \frac{1}{4c} \bigl[\begin{array}{cc}\begin{smallmatrix}1 & -1 \\ -1 & 1 \end{smallmatrix}\end{array}\bigr]$. 
    Therefore, the effective resistance between $i$ and $j$ is $R_{\mathcal{G}}(i, j) = 1/c$, and we obtain that $\sigma^2(\{i,j \}) = 1/4c = R_{\mathcal{G}}(i, j) /4 > 0$. This completes the proof.
\end{proof}

This monotonicity and the positivity of the resistance radius are consistent with the geometric interpretation of the resistance radius as a size-related global graph invariant. However, this monotonicity is specific to the SCWB-directed setting and does not extend to general SC-directed cases.

\begin{figure}[t]
  \centering
  \includegraphics[width=0.65\textwidth]{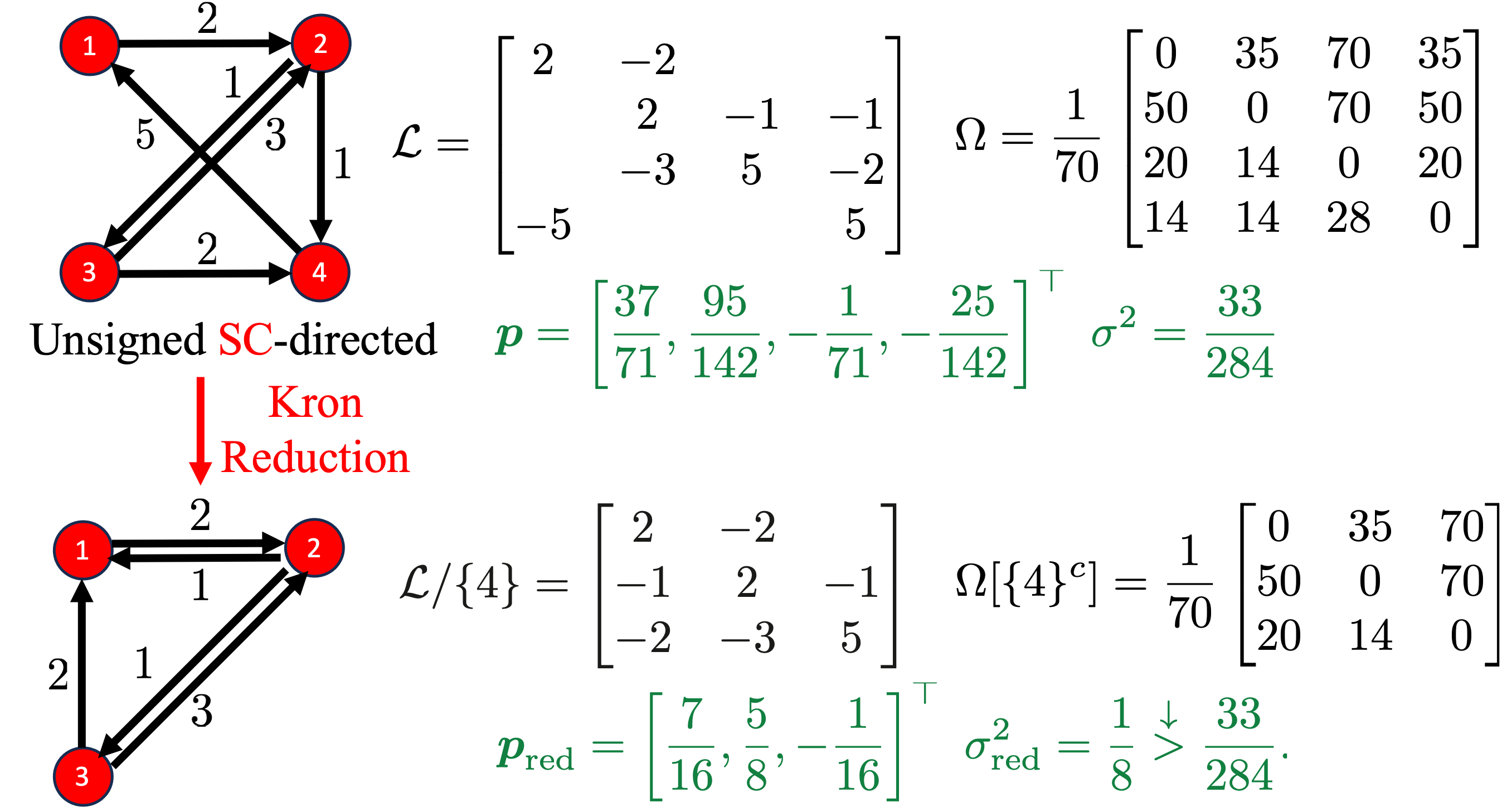}
  \caption{An example of an SC-directed, but not weight-balanced, graph for which Kron reduction increases the resistance radius.}
  \label{fig:example_sigma2_increase}
\end{figure}

\begin{example}[Failure of monotonicity of the resistance radius]
Consider the SC-directed, but not weight-balanced, graph shown in \cref{fig:example_sigma2_increase}. 
In this example, applying Kron reduction to node 4 increases the resistance radius. 
This illustrates that, in the general SC-directed case, the resistance radius is affected by the direction-dependent imbalances, so it is not necessarily a monotone size-related measure under Kron reduction.
\end{example}


This section is summarized in \cref{fig:commutative_diagram}: 
the commutativity of two algebraic operations on the SCWB-directed Laplacian; the calculation of resistance curvature and resistance radius in SC-directed graphs; the change in resistance curvature and resistance radius before and after Kron reduction; and the non-increasing property of the resistance radius in SCWB-directed graphs with respect to Kron reduction.

\begin{figure}[t]
  \centering
  \includegraphics[width=0.9\textwidth]{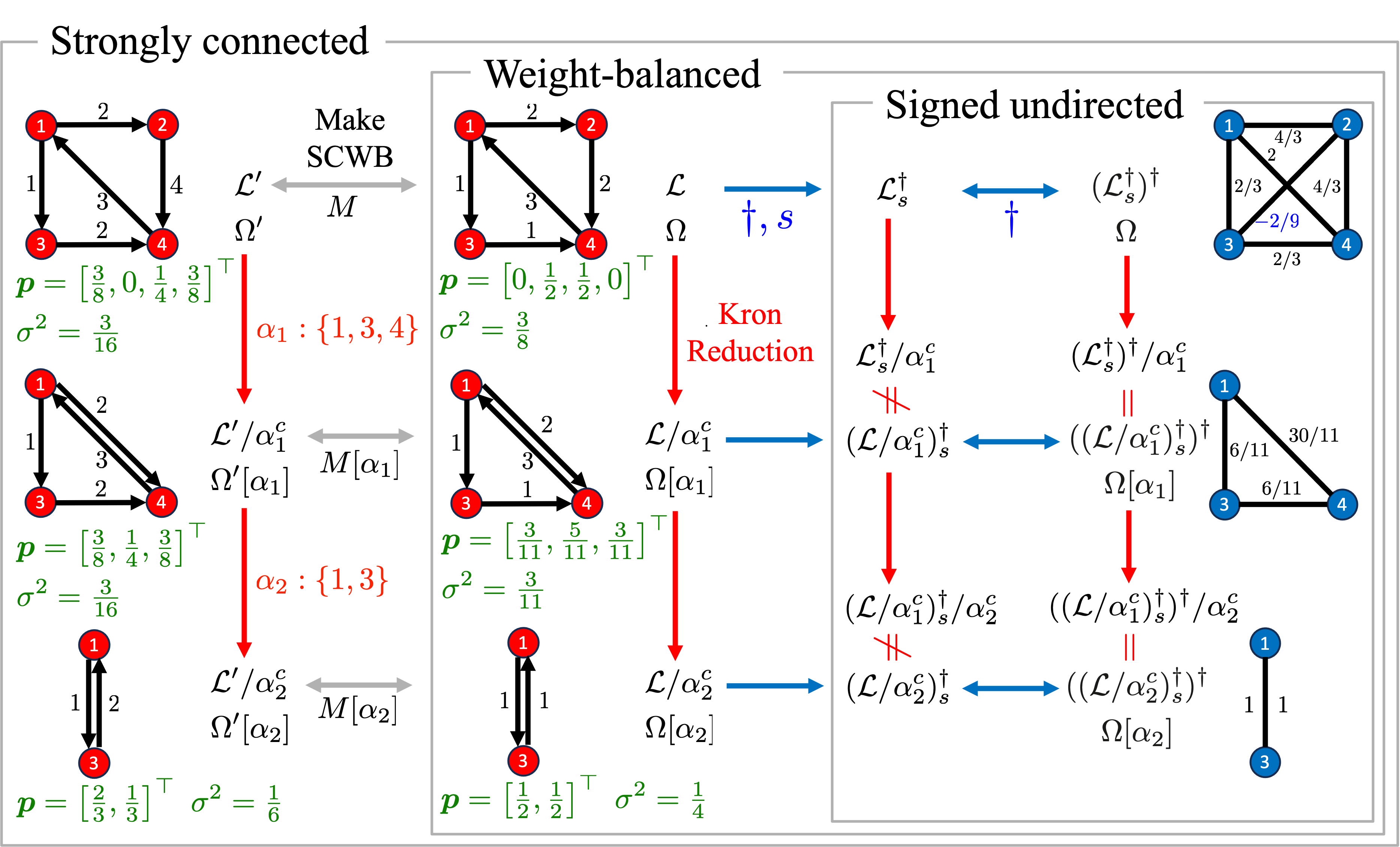}
  \caption{Summary of Section~3. The blue arrows represent the map $\mathcal{L} \mapsto (\mathcal{L}^{\dagger}_s)^{\dagger}$, and the red arrows represent Kron reduction. The gray arrows represent the process of making an SC-directed Laplacian weight-balanced. In the SCWB-directed case, these two operations commute. The figure also shows how the resistance curvature $\bm{p}$ and resistance radius $\sigma^2$ change under weight-balancing and Kron reduction.}
  \label{fig:commutative_diagram}
\end{figure}

\section{Extension of the resistance geometry framework}
\label{sec:extension_of_the_resistance_geometry_framework} 
We see that the matrix $(\mathcal{L}^{\dagger}_s)^{\dagger}$ constructed from an unsigned SCWB-directed Laplacian $\mathcal{L}$ has the Laplacian structure of a signed undirected graph (Thm.I\hspace{-1.2pt}V.4 of \cite{Fontan2023pseudoinverse}), that is, 
\begin{equation*}
    \begin{gathered}
        \mathcal{L}: \text{unsigned SCWB-directed Laplacian} \\
        \mapsto (\mathcal{L}^{\dagger}_s)^{\dagger} \succeq O,\ \ker( (\mathcal{L}^{\dagger}_s)^{\dagger}) = \operatorname{span}(\bm{1}): \text{signed undirected Laplacian}.
    \end{gathered}
\end{equation*}
Note also that the two graphs, $\mathcal{G}(\mathcal{L})$ and $\mathcal{G}((\mathcal{L}^{\dagger}_s)^{\dagger})$, have the same effective resistances, meaning that these resistances form a metric on the nodes in the latter signed graph as well. This motivates us to study resistance geometry of signed undirected Laplacians in relation to the spectral conditions of $(\mathcal{L}^{\dagger}_s)^{\dagger}$. 

Therefore, we consider the following signed undirected Laplacian classes:
\begin{align*}
    &
    \mathfrak{L}^{\pm}(n)
    := 
    \{\mathcal{Q} \in \mathbb{R}^{n \times n} : \mathcal{Q} = \mathcal{Q}^{\top},\ \ker({\mathcal{Q}}) \supseteq  \operatorname{span}(\bm{1}) \}, \\
    &
    \mathfrak{L}_{\mathrm{simplex}}^{\pm}(n)
    := 
    \{ \mathcal{Q} \in \mathfrak{L}^{\pm}(n) : \mathcal{Q} \succeq O,\ \ker(\mathcal{Q}) = \operatorname{span}(\bm{1}) \}.
\end{align*}

For every $\mathcal{Q}\in\mathfrak{L}^{\pm}_{\rm simplex}(n)$, each node of $\mathcal{G}(\mathcal{Q})$ has positive degree because $\mathcal{Q}_{ii} = \bm{1}_i^{\top} \mathcal{Q} \bm{1}_i  > 0 \ (\forall i \in [n])$. In this class, analogously to \cref{eq:ER_SCWB}, the effective resistance and the resistance matrix can be defined as follows:
\begin{equation}
    \label{eq:ER_for_Q_simplex}
    \begin{gathered}
        R_{\mathcal{G}}(a,b) := (\bm{1}_a-\bm{1}_b)^{\top}\mathcal{Q}^{\dagger}(\bm{1}_a-\bm{1}_b) 
        \qquad \forall\, a,b\in[n],\\
        \Omega := (R_{\mathcal{G}}(a,b))_{a,b\in[n]}
        = \bm{1}\bm{\zeta}^{\top}+\bm{\zeta}\bm{1}^{\top}-2\mathcal{Q}^{\dagger} \ \text{ where } \ \bm{\zeta}:=\operatorname{diagvec}(\mathcal{Q}^{\dagger}).
    \end{gathered}
\end{equation}
Under this definition, $R_{\mathcal{G}}$ is non-negative, symmetric, and satisfies $R_{\mathcal{G}}(a, b) = 0 \Leftrightarrow a = b$. Moreover, Kron reduction preserves the effective resistance (Lem.V\hspace{-1.2pt}I.3 of \cite{Fontan2023pseudoinverse}). 

In addition, based on the expressions in \cref{cor:FBI_generalized}, we can define the resistance curvature and resistance radius for any subset of nodes $\mathcal{V} \subseteq [n] \ (|\mathcal{V}| \geq 2)$ as
\begin{equation}
    \label{eq:resistance_curvature_radius_for_Q}
    \begin{aligned}
        \bm{p}(\mathcal{V}) := \frac{\Omega[\mathcal{V}, \mathcal{V}]^{-1} \bm{1}}{ \bm{1}^{\top} \Omega[\mathcal{V}, \mathcal{V}]^{-1} \bm{1}}, \quad
        \sigma^2(\mathcal{V}) := \frac{1}{2} (\bm{1}^{\top} \Omega[\mathcal{V}, \mathcal{V}]^{-1}\bm{1})^{-1}.
    \end{aligned}
\end{equation}
Note that the resistance matrix $\Omega$ and every principal submatrix $\Omega[\mathcal{V},\mathcal{V}]$ with $|\mathcal{V}|\geq 2$ are invertible.
Moreover, since $\mathcal{Q}/\mathcal{V}^c\in \mathfrak{L}_{\mathrm{simplex}}^{\pm}(|\mathcal{V}|)$ and its resistance matrix is $\Omega[\mathcal{V},\mathcal{V}]$, the same argument as in the proof of \cref{thm:FBI_generalized} yields $\bm{1}^{\top} \Omega[\mathcal{V},\mathcal{V}]^{-1} \bm{1} > 0$.

However, the effective resistance defined above \cref{eq:ER_for_Q_simplex} does not in general define a metric. For instance, consider the counterexample below:

\begin{example}
        \label{ex:Laplacian_Lsimplex_not_Lmetric}
        Consider the following signed undirected Laplacian in $\mathfrak{L}_{\mathrm{simplex}}^{\pm}(3)$:
        \begin{equation*}
            \mathcal{Q}
            =
            \frac{8}{63}
            \begin{bmatrix}
                8 & \mathbf{1} & -9 \\
                \mathbf{1} & 8 & -9 \\
                -9 & -9 & 18
            \end{bmatrix} \in  \mathfrak{L}_{\mathrm{simplex}}^{\pm}(3)
            \implies
            \Omega
            =
            \dfrac{1}{4}
            \begin{bmatrix}
                0 & 9 & 4 \\
                9 & 0 & 4 \\
                4 & 4 & 0
            \end{bmatrix}
        \end{equation*}
        In this example, $\mathcal{Q} \in \mathfrak{L}_{\mathrm{simplex}}^{\pm}(3)$ as its spectrum is $\operatorname{sp}(\mathcal{Q}) = \{0, 8/9, 24/7\}$. However, the resulting effective resistances, $R_{\mathcal{G}}(1,2) = 9/4,\ R_{\mathcal{G}}(1,3) = R_{\mathcal{G}}(2,3) = 1$, violate the triangle inequality: $R_{\mathcal{G}}(1,2) > R_{\mathcal{G}}(1,3) + R_{\mathcal{G}}(3, 2)$. Thus, the effective resistance in the class $\mathfrak{L}_{\mathrm{simplex}}^{\pm}(n)$ is not necessarily a metric between nodes.
\end{example}

Therefore, within the class $\mathfrak{L}_{\mathrm{simplex}}^{\pm}(n)$, we introduce a class in which the effective resistance serves as a metric between nodes, as follows.
\begin{align*}
    \mathfrak{L}_{\mathrm{metric}}^{\pm}(n)
    := 
    \{ \mathcal{Q} \in \mathfrak{L}_{\mathrm{simplex}}^{\pm}(n) : R_{\mathcal{G}}(a, b) + R_{\mathcal{G}}(b, c) \geq R_{\mathcal{G}}(a,c) \quad \forall a, b, c \in [n] \}.
\end{align*}
The effective resistance defined in the class $\mathfrak{L}_{\mathrm{metric}}^{\pm}(n)$ is a metric between nodes; hereafter, we refer to this metric as the \textit{generalized resistance metric}. This can be viewed as a generalized definition of the resistance metric for unsigned undirected graphs (Chaps.I\hspace{-1.2pt}V and V of \cite{Devriendt2022graph}). 

Moreover, Kron reduction preserves membership in $\{\mathfrak{L}_{\mathrm{metric}}^{\pm}(m)\}_{m \geq 2}$. Indeed, if $\mathcal{Q}\in\mathfrak{L}_{\mathrm{metric}}^{\pm}(n)$ and $\alpha\subset[n]$ with $|\alpha|\geq2$, then the resistance matrix induced by $\mathcal{Q}/\alpha^c$ is $\Omega[\alpha,\alpha]$. Since every principal submatrix of a metric matrix is again a metric matrix, we obtain $\mathcal{Q}/\alpha^c \in \mathfrak{L}_{\mathrm{metric}}^{\pm}(|\alpha|)$.

\begin{remark}[Geometric characterization of metric subclass]
    Let $\mathcal{Q} \in \mathfrak{L}_{\mathrm{simplex}}^{\pm}(n)$ and let $\mathcal{Q}^{\dagger}=B^{\top}B$, where
    $B=[\bm{b}_1\ \cdots\ \bm{b}_n] \in \mathbb{R}^{(n-1) \times n}$. Then, $R_{\mathcal{G}}(i, j) = \| \bm{b}_i - \bm{b}_j \|^2$ for all $a, b \in [n]$. Hence, for every $a,b,c \in[n]$,
    \begin{equation*}
        R_{\mathcal{G}}(i, j) + R_{\mathcal{G}}(j, k) - R_{\mathcal{G}}(i, k)
        =
        2(\bm{b}_i - \bm{b}_j)^{\top} (\bm{b}_k - \bm{b}_j).
    \end{equation*}
    Therefore,
    $\mathcal{Q}\in\mathfrak{L}_{\mathrm{metric}}^{\pm}(n)$ 
    if and only if every triangular face of the simplex generated by
    $\{\bm{b}_1,\ldots,\bm{b}_n\}$ is non-obtuse.
\end{remark}

Based on the preceding discussion, we have the following hierarchy of classes of signed undirected graph Laplacian matrices:
\begin{equation*}
    \mathfrak{L}^{\pm}(n) 
    \supsetneq 
    \mathfrak{L}_{\mathrm{simplex}}^{\pm}(n)
    \supsetneq 
    \mathfrak{L}_{\mathrm{metric}}^{\pm}(n)
    \ni 
    (\mathcal{L}^{\dagger}_s)^{\dagger} \mapsfrom \mathcal{L}: \text{unsigned SCWB-directed Laplacian}.
\end{equation*}

As reviewed in \cref{subsec:classification}, a metric matrix $D$ is of strict negative type if
$\bm{f}^{\top}D\bm{f}<0$ for all $\bm{f}\in\operatorname{span}(\bm{1})^{\perp} \backslash \{\bm{0}\}$.
We next show that the class of generalized resistance metrics $\mathfrak{L}_{\mathrm{metric}}^{\pm}(n)$ coincides with the class of strict negative type metrics.


\begin{theorem}
    [Strict negative type $\equiv$ generalized resistance metric]
    \label{thm:strict_negative_is_equivalent_to_generalized_resistance_metric}
    The class of strict negative type metric matrices coincides with the class
    of generalized resistance matrices induced by Laplacians in
    $\mathfrak{L}_{\mathrm{metric}}^{\pm}(n)$.
    More precisely,
    \begin{equation}
        \label{eq:strict_negative_generalized_resistance_equivalence}
        \begin{aligned}
            &
            \{ D=[d(a,b)]_{a,b\in[n]}: D \text{ is a strict negative type metric matrix} \}
            \\
            &\qquad =
            \{
                \Omega_{\mathrm{metric}}
                := \bm{1}\bm{\zeta}^{\top} + \bm{\zeta}\bm{1}^{\top} - 2\mathcal{Q}^{\dagger}
                :
                \mathcal{Q}\in\mathfrak{L}_{\mathrm{metric}}^{\pm}(n),\ \bm{\zeta} := \operatorname{diagvec}(\mathcal{Q}^{\dagger})
            \}.
        \end{aligned}
    \end{equation}
\end{theorem}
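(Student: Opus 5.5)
We prove the two inclusions separately. Both rest on the double-centering relations \cref{eq:gram_euclid}, with $G$ replaced by $\mathcal{Q}^{\dagger}$ and $E$ replaced by $\Omega$.

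\emph{Inclusion ``$\supseteq$''.} Let $\mathcal{Q}\in\mathfrak{L}_{\mathrm{metric}}^{\pm}(n)$ and $\Omega_{\mathrm{metric}}=\bm{1}\bm{\zeta}^{\top}+\bm{\zeta}\bm{1}^{\top}-2\mathcal{Q}^{\dagger}$.
\begin{itemize}
\item By the definition of $\mathfrak{L}_{\mathrm{metric}}^{\pm}(n)$, together with the facts noted after \cref{eq:ER_for_Q_simplex} (non-negativity, symmetry, $R_{\mathcal{G}}(a,b)=0\Leftrightarrow a=b$), $\Omega_{\mathrm{metric}}$ is a metric matrix.
\item Since $\mathcal{Q}\succeq O$ with $\ker(\mathcal{Q})=\operatorname{span}(\bm{1})$, the pseudoinverse $\mathcal{Q}^{\dagger}$ is PSD with kernel $\operatorname{span}(\bm{1})$. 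Hence it is positive definite on $\operatorname{span}(\bm{1})^{\perp}$.
\item For $\bm{f}\in\operatorname{span}(\bm{1})^{\perp}\setminus\{\bm{0}\}$ the rank-one terms vanish, giving $\bm{f}^{\top}\Omega_{\mathrm{metric}}\bm{f}=-2\bm{f}^{\top}\mathcal{Q}^{\dagger}\bm{f}<0$. This is the same computation as in the proof of \cref{thm:FBI_generalized}.
\end{itemize}
So $\Omega_{\mathrm{metric}}$ is a strict negative type metric matrix.

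\emph{Inclusion ``$\subseteq$''.} Let $D$ be a strict negative type metric matrix and set $J:=I_n-\bm{1}\bm{1}^{\top}/n$.
\begin{itemize}
\item Define $G:=-\tfrac12 JDJ$. It is symmetric with $G\bm{1}=\bm{0}$. For any $\bm{x}$, $\bm{x}^{\top}G\bm{x}=-\tfrac12 (J\bm{x})^{\top}D(J\bm{x})\geq 0$, with equality only if $J\bm{x}=\bm{0}$, i.e. $\bm{x}\in\operatorname{span}(\bm{1})$. Hence $G\succeq O$ with $\ker G=\operatorname{span}(\bm{1})$.
\item Set $\mathcal{Q}:=G^{\dagger}$. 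Pseudoinversion of a symmetric PSD matrix preserves symmetry, PSD-ness and the kernel, so $\mathcal{Q}\in\mathfrak{L}_{\mathrm{simplex}}^{\pm}(n)$ and $\mathcal{Q}^{\dagger}=G$.
\item It remains to recover $D$ from $G$, namely $D=\bm{1}\bm{\xi}^{\top}+\bm{\xi}\bm{1}^{\top}-2G$ with $\bm{\xi}=\operatorname{diagvec}(G)$. This is the second relation of \cref{eq:gram_euclid}, but that was stated for Euclidean distance matrices. Here I would verify it directly, using only that $D$ is symmetric with zero diagonal.
\end{itemize}

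The direct verification is the main step. Write $D=\bm{1}\bm{d}^{\top}+\bm{d}\bm{1}^{\top}-2H$ for an unknown symmetric $H$ and unknown vector $\bm{d}$.
\begin{itemize}
\item Choose $\bm{d}:=\tfrac1n D\bm{1}-\tfrac{1}{2n^2}(\bm{1}^{\top}D\bm{1})\bm{1}$.
\item With this choice one checks that $H\bm{1}=\bm{0}$.
\item Then $JDJ=-2JHJ=-2H$, so $H=G$.
\item Finally, the zero diagonal of $D$ gives $0=D_{ii}=2d_i-2G_{ii}$, hence $\bm{d}=\operatorname{diagvec}(G)=\bm{\xi}$.
\end{itemize}
Alternatively, one can invoke Prop.IV.11/13 of \cite{Devriendt2022graph}: a negative type $D$ is a squared Euclidean distance matrix with centered Gram matrix $G$, so \cref{eq:gram_euclid} applies verbatim.

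This identity gives $R_{\mathcal{G}(\mathcal{Q})}(a,b)=D_{ab}$ for all $a,b$. Since $D$ is a metric, the triangle inequality holds, so $\mathcal{Q}\in\mathfrak{L}_{\mathrm{metric}}^{\pm}(n)$ and $D=\Omega_{\mathrm{metric}}$. This completes the second inclusion.
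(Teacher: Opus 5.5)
Your proof is correct and follows the same two-inclusion structure as the paper. In one direction you form the centered Gram matrix $G=-\tfrac12 JDJ$, show it is PSD with kernel $\operatorname{span}(\bm{1})$, set $\mathcal{Q}:=G^{\dagger}$, and inherit the triangle inequality from $D$; in the other you compute $\bm{f}^{\top}\Omega_{\mathrm{metric}}\bm{f}=-2\bm{f}^{\top}\mathcal{Q}^{\dagger}\bm{f}<0$.

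The one difference is how you get $D=\bm{1}\bm{\xi}^{\top}+\bm{\xi}\bm{1}^{\top}-2G$. You verify it directly from the symmetry and zero diagonal of $D$, and your choice of $\bm{d}$ does give $H\bm{1}=\bm{0}$. The paper instead takes this identity from \cref{eq:gram_euclid}, which implicitly relies on the Euclidean embedding of negative type metrics, so your version is slightly more self-contained.
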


\begin{proof}
    We show the two inclusions separately.

    \noindent
    \textbf{(1) Strict negative type $\subseteq$ generalized resistance metric.}
    Let $D = [d(a,b)]$ be a strict negative type metric matrix and let $G$ be its corresponding centered Gram matrix.
    From \cref{eq:gram_euclid}, we obtain $\bm{f}^{\top}D\bm{f} = -2\bm{f}^{\top}G\bm{f} < 0$ for all $\bm{f} \in \operatorname{span}(\bm{1})^{\perp} \backslash \{\bm{0}\}$, and $G\bm{1}=\bm{0}$. In particular, $G$ is PSD with its kernel being $\operatorname{span}(\bm{1})$, so $G$ is a centered Gram matrix of a simplex (Prop.I\hspace{-1.2pt}V.15 of \cite{Devriendt2022graph}).

    Here, we define $\mathcal{Q}:=G^{\dagger}$. Hence, $\mathcal{Q}\in\mathfrak{L}_{\mathrm{simplex}}^{\pm}(n)$. Let $\bm{\xi}:=\operatorname{diagvec}(G)$. From \cref{eq:gram_euclid}, the distance matrix $D$ satisfies $D = \bm{1}\bm{\xi}^{\top} + \bm{\xi}\bm{1}^{\top} - 2G$. 
    Since $G = \mathcal{Q}^{\dagger}$ and $\bm{\xi}=\operatorname{diagvec}(\mathcal{Q}^{\dagger}) =\bm{\zeta}$, we obtain
    \begin{equation*}
        D = \bm{1}\bm{\zeta}^{\top} + \bm{\zeta}\bm{1}^{\top} - 2\mathcal{Q}^{\dagger}
        =
        \Omega_{\mathrm{metric}}.
    \end{equation*}

    Moreover, since $D$ is a metric matrix, its components satisfy the triangle inequality. Since $d(a,b) = R_{\mathcal{G}}(a,b)$, it follows that $R_{\mathcal{G}}(a,c) \leq R_{\mathcal{G}}(a,b) + R_{\mathcal{G}}(b,c)$ for all $a, b, c \in [n]$. 
    Therefore, we have $\mathcal{Q} \in \mathfrak{L}_{\mathrm{metric}}^{\pm}(n)$. This implies that every strict negative type metric matrix
    corresponds to a generalized resistance matrix induced by a Laplacian in $\mathfrak{L}_{\mathrm{metric}}^{\pm}(n)$.

    \noindent
    \textbf{(2) Generalized resistance metric
    $\subseteq$ strict negative type.}
    Conversely, let $\Omega_{\mathrm{metric}} = \bm{1}\bm{\zeta}^{\top} + \bm{\zeta}\bm{1}^{\top} - 2\mathcal{Q}^{\dagger}$
    be a resistance matrix induced by $\mathcal{Q} \in \mathfrak{L}_{\mathrm{metric}}^{\pm}(n)$, where $\bm{\zeta}:=\operatorname{diagvec}(\mathcal{Q}^{\dagger})$. 
    For any $\bm{f}\in\operatorname{span}(\bm{1})^{\perp} \backslash \{\bm{0}\}$, we have
    \begin{equation*}
        \bm{f}^{\top}\Omega_{\mathrm{metric}}\bm{f}
        = \bm{f}^{\top} ( \bm{1}\bm{\zeta}^{\top} + \bm{\zeta}\bm{1}^{\top} - 2\mathcal{Q}^{\dagger})\bm{f}
        = -2\bm{f}^{\top}\mathcal{Q}^{\dagger}\bm{f}
        < 0.
    \end{equation*}
    Thus, $\Omega_{\mathrm{metric}}$ is a strict negative type metric matrix.

    From (1) and (2), we conclude that the class of strict negative type metric matrices and the class of generalized resistance matrices
    induced by Laplacians in $\mathfrak{L}_{\mathrm{metric}}^{\pm}(n)$ are identical. This relationship can be expressed as follows:

    \[
        \begin{tikzcd}[column sep=normal,row sep=normal]
            \mathcal{Q}\in\mathfrak{L}_{\mathrm{metric}}^{\pm}(n)
              \arrow[r, "\dagger"]
              \arrow[d, equals]
            &
            \mathcal{Q}^{\dagger}
              \arrow[r]
              \arrow[d, equals]
            &
            \Omega_{\mathrm{metric}}
            =
            \bm{1}\bm{\zeta}^{\top}
            +
            \bm{\zeta}\bm{1}^{\top}
            -
            2\mathcal{Q}^{\dagger}
              \arrow[d, equals]
            \\
            G^{\dagger}
            &
            G
              \arrow[l, "\dagger"']
              \arrow[r]
            &
            D
            =
            \bm{1}\bm{\xi}^{\top}
            +
            \bm{\xi}\bm{1}^{\top}
            -
            2G
            :
            \text{strict negative type}
        \end{tikzcd}
    \]
\end{proof}

This theorem and the strict hierarchy of metric spaces can be summarized as 
\begin{align*}
    &\text{Negative} 
    \supsetneq 
    \text{Strict negative} \equiv \text{Generalized resistance} 
    \supsetneq 
    \text{Resistance}.
\end{align*}
This hierarchical structure enables us to combine results on Euclidean embeddings and graph-variance for the strict negative type metric spaces (Thm.I.2.4 of \cite{Fiedler2011} and Thm.I\hspace{-1.2pt}V.16, Prop.V.19 of \cite{Devriendt2022graph}) with our Laplacian-based resistance framework in $\mathfrak{L}_{\mathrm{metric}}^{\pm}(n)$.


\subsection{Maximum-variance problem under the generalized resistance metric}
\label{subsec:maxium_variance_problem_for_generalized_resistance_metric}
In the unsigned undirected setting, Devriendt \cite{Devriendt2022graph} showed that, when the underlying metric is the effective resistance, the resistance curvature characterizes the maximum variance distribution, while the resistance radius gives the optimal value.

Here, we consider the analogous problem for generalized resistance metrics induced by Laplacians in $\mathfrak{L}_{\mathrm{metric}}^{\pm}(n)$:
\begin{equation}
    \label{eq:maximum_varaance_prob_for_generalized_resistance_metric}
    \underset{\bm{f} \in \Delta_n}{\text{maximize}} \quad  \frac{1}{2} \bm{f}^{\top} \Omega_{\mathrm{metric}} \bm{f}.
\end{equation}
Since $\Omega_{\mathrm{metric}}$ is of strict negative type, the objective function is strictly concave on $\Delta_n$ and thus it has a unique maximizer $\bm{f}^* \in \Delta_n$ (Prop.V.19 of \cite{Devriendt2022graph}). 

We obtain a generalization of the corresponding result for the (classical) resistance metric discussed in Prop.V.21 of \cite{Devriendt2022graph}.

\begin{theorem}
\label{thm:maximum_variance_for_generalized_resistance_metric}
    Let $\bm{f}^*$ be the unique maximizer of the maximum graph-variance problem under the generalized resistance metric, and let $\mathcal{V}^*:=\operatorname{supp}(\bm{f}^*)$. Then $|\mathcal{V}^*|\geq 2$, and the following statements hold:
    \begin{enumerate}
        \item \textbf{Maximum variance:} The maximum variance is equal to the resistance radius of the maximum variance support: $\operatorname{var}^*_R = \sigma^2(\mathcal{V}^*)$.
        \item \textbf{Optimal distribution:} 
        The restriction of $\bm{f}^*$ to its support coincides with the resistance curvature of the Kron-reduced graph: $\bm{f}^{\ast}[\mathcal{V}^{\ast}] =\bm{p}(\mathcal{V}^{\ast})$.
    \end{enumerate}
\end{theorem}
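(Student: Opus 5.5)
The plan is to combine the first-order optimality condition \cref{eq:necesaary_condition_2} with the definitions \cref{eq:resistance_curvature_radius_for_Q} applied to the Kron-reduced Laplacian $\mathcal{Q}/(\mathcal{V}^*)^c$. Its resistance matrix is the principal submatrix $\Omega_{\mathrm{metric}}[\mathcal{V}^*,\mathcal{V}^*]$, which is invertible whenever $|\mathcal{V}^*|\geq 2$.

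\textbf{Support size.} First we show $|\mathcal{V}^*|\geq 2$. If $\bm{f}^*=\bm{1}_i$, the objective equals $\tfrac12 R_{\mathcal{G}}(i,i)=0$. However, $\bm{f}=(\bm{1}_i+\bm{1}_j)/2$ with $j\neq i$ gives variance $R_{\mathcal{G}}(i,j)/4>0$, since $R_{\mathcal{G}}(a,b)=0$ only when $a=b$. So a vertex of $\Delta_n$ cannot be optimal.

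\textbf{Optimal distribution.} Write $\Omega^*:=\Omega_{\mathrm{metric}}[\mathcal{V}^*,\mathcal{V}^*]$. Rather than quote \cref{eq:necesaary_condition_2} with its pseudoinverse, I would rederive the KKT conditions directly. The maximizer restricted to its support is an interior point of the face $\{\bm{g}\geq\bm{0},\ \bm{1}^{\top}\bm{g}=1\}$ supported on $\mathcal{V}^*$. Stationarity of $\tfrac12\bm{g}^{\top}\Omega^*\bm{g}$ under the constraint $\bm{1}^{\top}\bm{g}=1$ gives $\Omega^*\bm{f}^*[\mathcal{V}^*]=\lambda\bm{1}$ for some multiplier $\lambda$. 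Since $\Omega^*$ is invertible (it is the resistance matrix of an element of $\mathfrak{L}_{\mathrm{simplex}}^{\pm}(|\mathcal{V}^*|)$, by the argument in \cref{thm:FBI_generalized}), we get $\bm{f}^*[\mathcal{V}^*]=\lambda(\Omega^*)^{-1}\bm{1}$. The normalization $\bm{1}^{\top}\bm{f}^*[\mathcal{V}^*]=1$ then forces $\lambda=(\bm{1}^{\top}(\Omega^*)^{-1}\bm{1})^{-1}$. This is well defined because $\bm{1}^{\top}(\Omega^*)^{-1}\bm{1}>0$, as noted after \cref{eq:resistance_curvature_radius_for_Q}. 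Hence $\bm{f}^*[\mathcal{V}^*]=\bm{p}(\mathcal{V}^*)$, which is item 2. This step also shows that \cref{eq:necesaary_condition_2} is consistent here, since $D^\dagger=D^{-1}$.

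\textbf{Maximum variance.} Since $\bm{f}^*$ vanishes off its support,
\[
\operatorname{var}^*_R=\tfrac12\bm{f}^*[\mathcal{V}^*]^{\top}\Omega^*\bm{f}^*[\mathcal{V}^*]=\tfrac12\lambda\,\bm{1}^{\top}\bm{f}^*[\mathcal{V}^*]=\tfrac12(\bm{1}^{\top}(\Omega^*)^{-1}\bm{1})^{-1}=\sigma^2(\mathcal{V}^*).
\]
This is item 1, and it mirrors $\sigma^2=\tfrac12\bm{p}^{\top}\Omega\bm{p}$ from \cref{cor:FBI_generalized}.

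\textbf{Main obstacle.} The delicate point is justifying the stationarity condition on the support, namely that the multiplier structure holds with equality on $\mathcal{V}^*$. I would handle this by the standard KKT argument for maximizing a quadratic over the simplex: on active coordinates the gradient components $(\Omega\bm{f}^*)_i$ must all equal a common value. Alternatively, I would argue by perturbation $\bm{f}^*+t(\bm{1}_i-\bm{1}_j)$ for $i,j\in\mathcal{V}^*$ with $|t|$ small; the first derivative must vanish, which yields $(\Omega\bm{f}^*)_i=(\Omega\bm{f}^*)_j$. Uniqueness of the maximizer, from strict concavity via strict negative type (\cref{thm:strict_negative_is_equivalent_to_generalized_resistance_metric}), is already given and is not needed further.
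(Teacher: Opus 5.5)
Your proof is correct and follows essentially the same route as the paper. You use the same midpoint comparison for $|\mathcal{V}^*|\geq 2$, then solve $\Omega[\mathcal{V}^*,\mathcal{V}^*]\bm{f}^*[\mathcal{V}^*]=\lambda\bm{1}$ via invertibility and normalization to obtain $\bm{p}(\mathcal{V}^*)$ and $\sigma^2(\mathcal{V}^*)$. The only difference is that you derive this first-order condition yourself by perturbing along $\bm{1}_i-\bm{1}_j$ for $i,j\in\mathcal{V}^*$, whereas the paper cites it as \cref{eq:necesaary_condition_2} and replaces the pseudoinverse by the inverse; this makes your version self-contained but otherwise identical.
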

\begin{proof}
    Let $\mathcal{Q} \in \mathfrak{L}_{\mathrm{metric}}^{\pm}(n)$, and let $\Omega$ be its resistance matrix. We first show that $|\mathcal{V}^*| \geq 2$. If $\bm{f} \in \Delta_n$ is supported on a single node, then $\bm{f}=\bm{1}_i$ for some $i\in[n]$, and hence $\frac{1}{2}\bm{f}^{\top}\Omega\bm{f} = \frac{1}{2} R_{\mathcal{G}}(i,i) = 0$. On the other hand, for any distinct $i,j\in[n]$, let $\hat{\bm{f}} := \frac{1}{2}(\bm{1}_i+\bm{1}_j) \in\Delta_n$. Then, $\frac{1}{2} \hat{\bm{f}}^{\top} \Omega \hat{\bm{f}} = \frac{1}{4}R_{\mathcal{G}}(i,j) > 0$. Therefore, no distribution supported on a single node can be optimal, and thus $|\mathcal{V}^*|\geq2$.
  
    Let $\Omega_{\mathrm{red}}$ denote the resistance matrix of the Kron-reduced graph $\mathcal{G}/(\mathcal{V}^*)^c$. From the necessary condition for $\bm{f}^*$ to be the maximum variance distribution in \cref{eq:necesaary_condition_2}, the vector $\bm{f}^*$ satisfies the following equation: 
    \begin{align*}
        &\Omega_{\text{red}} \bm{f}^*[\mathcal{V}^*]
        = \Omega [\mathcal{V}^*, \mathcal{V}^*] \bm{f}^*[\mathcal{V}^*] 
        = (\bm{1}^{\top} \Omega [\mathcal{V}^*, \mathcal{V}^*]^{\dagger} \bm{1})^{-1} \bm{1} \\
        &
        = (\bm{1}^{\top} \Omega [\mathcal{V}^*, \mathcal{V}^*]^{-1} \bm{1})^{-1} \bm{1} 
        \underset{\text{\cref{eq:resistance_curvature_radius_for_Q}}}{=} 2 \sigma^2 (\mathcal{V}^*) \bm{1}.
    \end{align*}
    The third equality is based on the fact that the resistance matrix is invertible. 
    Comparing this with $\Omega[\beta, \beta] \bm{p} (\beta) = 2 \sigma^2(\beta) \bm{1} \ (\beta \subseteq [n],\ |\beta| \geq 2)$ from  \cref{eq:resistance_curvature_radius_for_Q}, we have 
    \begin{gather*}
        \bm{f}^*[\mathcal{V}^*] = \bm{p}(\mathcal{V}^*)
        \ \text{and} \
        \bm{f}^*[(\mathcal{V}^*)^c] = \bm{0}, \\
        \operatorname{var}_R^* = \dfrac{1}{2}{\bm{f}^*}^{\top}\Omega \bm{f}^* = \frac{1}{2} \bm{p}(\mathcal{V}^*)^{\top} \Omega_{\text{red}} \bm{p}(\mathcal{V}^*) = \sigma^2(\mathcal{V}^*).
    \end{gather*}
    In other words, the restriction of $\bm{f}^*$ to its support corresponds to the resistance curvature in the Kron-reduced graph $\mathcal{G}(\mathcal{Q} / (\mathcal{V}^*)^c)$. In addition, the maximum variance value is equal to the resistance radius in that graph. 
\end{proof}

For connected unsigned undirected graphs, it is known that $p_i([n]) \geq f^*_i > 0 \ (\forall i \in \mathcal{V}^*)$ (Lem.V.22 of \cite{Devriendt2022graph}). This property relies on the nonnegativity of the edge weights and leads to a greedy algorithm for finding the exact maximizer, which often outperforms general-purpose quadratic programming; see Chap.V of \cite{Devriendt2022graph}.

However, for a genuinely signed Laplacian $\mathcal{Q}$ in $\mathfrak{L}_{\mathrm{metric}}^{\pm}(n)$, this property generally fails. Below, we provide a counterexample where this inequality fails.

\begin{example}
    \label{ex:Laplacian_Lmetric_not_satisfy}
    Consider the following signed undirected Laplacian on five nodes:
    \begin{equation*}
        \begin{gathered}
            \mathcal{Q} 
            =
            \frac{2}{609}
            \begin{bmatrix}
                 104 & -97 & \mathbf{8} & \mathbf{95} & -110 \\
                -97 & 272 & -148 & -235 & \mathbf{208} \\
                \mathbf{8} & -148 & 188 & \mathbf{101} & -149 \\
                \mathbf{95} & -235 & \mathbf{101} & 362 & -323 \\
                -110 & \mathbf{208} & -149 & -323 & 374
            \end{bmatrix}
            ,\
            \Omega =
            \begin{bmatrix}
                0 & 5 & 5 & 8 & 6 \\
                5 & 0 & 3 & 4 & 6 \\
                5 & 3 & 0 & 5 & 4 \\
                8 & 4 & 5 & 0 & 2 \\
                6 & 6 & 4 & 2 & 0
            \end{bmatrix}, \\
            \bm{p} ([5])
            =
            \frac{1}{203}
            \begin{bmatrix}
                88 \\
                -4 \\
                38 \\
                96 \\
                -15
            \end{bmatrix}
            ,\
            \sigma^2 ([5]) = \frac{424}{203}.
        \end{gathered}
    \end{equation*}
    The eigenvalues of $\mathcal{Q}$ are numerically given by $\operatorname{sp}(\mathcal{Q}) = \{0, 0.22, 0.35, 0.55, 3.16\}$, i.e., $\mathcal{Q} \in \mathfrak{L}_{\mathrm{simplex}}^{\pm}(5)$. A direct inspection of the resistance matrix shows that all triangle inequalities are satisfied. Hence, $\mathcal{Q}\in\mathfrak{L}_{\mathrm{metric}}^{\pm}(5)$. Solving the quadratic optimization problem yields the unique maximizer:
    \begin{equation*} 
        \bm{f}^*
        = 
        \frac{1}{374}
        \begin{bmatrix}
            154 \\
            8 \\
            59 \\
            153 \\ 
            0
        \end{bmatrix}
        ,\ 
        \operatorname{var}^*_R = \frac{1559}{748}.
    \end{equation*}
    The restriction of $\bm{f}^*$ to its support satisfies $ \bm{f}^*[\{5\}^c] = \bm{p}(\{5\}^c)$, where $\bm{p}(\{5\}^c)$ is the resistance curvature of the Kron-reduced graph $\mathcal{G} / \{5\}$. Moreover, $\operatorname{var}^*_R = \sigma^2(\{5\}^c)$.

    However, for node 2, we observe that $f^*_2 = p_2(\{5\}^c) = 4/187 > 0$ despite $p_2 ([5]) = -4/203 < 0$. This cannot occur for connected unsigned undirected Laplacians. 
\end{example}


\begin{remark}
    We note that the characterization in \cref{thm:maximum_variance_for_generalized_resistance_metric}
    remains valid for $\mathcal{Q} \in \mathfrak{L}_{\mathrm{simplex}}^{\pm}(n)$, since its proof does not use the triangle inequality. If
    $\mathcal{Q} \in \mathfrak{L}_{\mathrm{simplex}}^{\pm}(n) \backslash \mathfrak{L}_{\mathrm{metric}}^{\pm}(n)$, however, the resulting effective resistance is not a metric, so the objective can no longer be interpreted as graph-variance induced by a node metric. 
\end{remark}


\subsection{Euclidean simplex realization and generalized resistive embedding}
\label{subsec:generalized_resistive_embedding}

\textit{Resistive embedding} is a method of realizing the $n$ nodes of a graph as $n$ points in $\mathbb{R}^{n-1}$ such that the squared Euclidean distance between each pair of points equals the corresponding effective resistance. For the unsigned undirected setting, see Chap.~IV of \cite{Devriendt2022graph}.

In this subsection, we first consider the Euclidean simplex associated with a Laplacian $\mathcal{Q}\in\mathfrak{L}^{\pm}_{\mathrm{simplex}}(n)$. The construction requires only that $\mathcal{Q}$ is PSD with its kernel being $\operatorname{span}(\bm{1})$; it does not require the resulting effective resistances to satisfy the triangle inequality. For every $\mathcal{Q} \in \mathfrak{L}^{\pm}_{\mathrm{simplex}}(n)$, the effective resistances can be realized as the squared Euclidean distances between the vertices of a centered simplex.

When $\mathcal{Q} \in \mathfrak{L}^{\pm}_{\mathrm{metric}}(n)$, the effective resistances additionally form a metric on the node set, and we refer to the resulting realization as a \textit{generalized resistive embedding}. When $\mathcal{Q} \in \mathfrak{L}^{\pm}_{\mathrm{simplex}}(n) \backslash \mathfrak{L}^{\pm}_{\mathrm{metric}}(n)$, the same construction is instead interpreted as a squared Euclidean simplex realization, rather than as an embedding of a node metric.

Let $\lambda_1\geq\cdots\geq\lambda_{n-1}>\lambda_n=0$ be the eigenvalues of $\mathcal{Q}$, with corresponding orthonormal eigenvectors $\bm{v}_1,\ldots,\bm{v}_{n-1},\bm{v}_n=\bm{1}/\sqrt{n}$. The pseudoinverse of $\mathcal{Q}$ is given by $\mathcal{Q}^{\dagger}=\sum_{k=1}^{n-1} \lambda_k^{-1} \bm{v}_k \bm{v}_k^{\top}$. For each $i \in [n]$, define $\bm{b}_i \in \mathbb{R}^{n-1}$ by $(\bm{b}_i)_k:=\lambda_k^{-1/2}(\bm{v}_k)_i$ for all $k \in [n-1]$. Then, for all $i,j\in[n]$, we have 
\begin{equation}
    \label{eq:generalized_resistive_embedding_property_1}
    \bm{b}_i^{\top} \bm{b}_j
    = \sum_{k=1}^{n-1} \lambda_k^{-1} (\bm{v}_k)_i (\bm{v}_k)_j
    = \mathcal{Q}^{\dagger}_{ij}, \quad
    \|\bm{b}_i - \bm{b}_j \|^2
    = \mathcal{Q}^{\dagger}_{ii} + \mathcal{Q}^{\dagger}_{jj} - 2 \mathcal{Q}^{\dagger}_{ij}
    \underset{\cref{eq:ER_for_Q_simplex}}{=} R_{\mathcal{G}}(i, j).
    \end{equation}
It follows that $\mathcal{Q}^{\dagger} = B^{\top} B$, where $B= [\bm{b}_1 \cdots \bm{b}_n ]$. Combining this with $\ker(\mathcal{Q}^{\dagger}) = \operatorname{span}(\bm{1})$, we see that $\mathcal{Q}^{\dagger}$ is a centered Gram matrix of a simplex (Prop.I\hspace{-1.2pt}V.15 of \cite{Devriendt2022graph}). 
Since $\| B \bm{1} \|^2 = \bm{1}^{\top} \mathcal{Q}^{\dagger} \bm{1} = 0$, we know that $B \bm{1} = \bm{0}$\footnote{
    Although the factorization $\mathcal{Q}^{\dagger} = B^{\top} B$ is not unique, every such full-row-rank factorization satisfies $B \bm{1} = \bm{0}$. The corresponding simplices therefore differ only by an orthogonal transformation and have their centroid at the origin.
}.
Assume that there exists a vector $\bm{f} \in \mathbb{R}^n \setminus \{ \bm{0}\}$ such that $\bm{f}^{\top} \bm{1} = 0$ and $B\bm{f} = \bm{0}$. 
Then, $\| B \bm{f} \|^2  = \bm{f}^{\top} \mathcal{Q}^{\dagger} \bm{f} = 0$, which contradicts the fact that 
$\ker(\mathcal{Q}^{\dagger}) = \operatorname{span}(\bm{1})$. Hence, $\ker(B) = \operatorname{span}(\bm{1})$. 
Let $\bm{u}_i := B \bm{v}_i / \| B \bm{v}_i \| = \lambda_i^{1/2} B \bm{v}_i \ (\forall i \in [n-1])$. 
Then, we have $\bm{u}_i^{\top} \bm{u}_j = \delta_{ij} \ (i, j \in [n-1])$ and 
\begin{gather}
    \begin{aligned}
        \label{eq:SVD_B_Bdagger}
        B &= \sum_{i=1}^{n-1} \lambda_i^{-1/2}\,\bm u_i \bm v_i^{\top}, 
        \quad
        B^{\dagger} = \sum_{i=1}^{n-1} \lambda_i^{1/2}\,\bm v_i \bm u_i^{\top},
    \end{aligned}\\
    \label{eq:SVD_B_Bdagger_2}
    \begin{aligned}
        B^{\dagger}(B^{\dagger})^{\top}
        &= \sum_{i=1}^{n-1} \lambda_i\,\bm v_i \bm v_i^{\top}
        = \mathcal{Q}
        = (B^{\top}B)^{\dagger}.
    \end{aligned}
\end{gather}

Based on the above, we define the associated Euclidean simplex realization.

\begin{definition}[Euclidean simplex realization and generalized resistive embedding]
    \label{def:generalized_resistive_embedding}
    Let $\mathcal{Q} \in \mathfrak{L}^{\pm}_{\mathrm{simplex}}(n)$, and let $\mathcal{Q}^{\dagger} = B^{\top}B$.
    Define $\varphi: \{\bm{f}\in\mathbb{R}^n:\bm{1}^{\top}\bm{f}=1\} \to \mathbb{R}^{n-1}$ by
    $\varphi(\bm{f}):=B\bm{f}$. We call $\varphi$ the
    \textit{Euclidean simplex realization} associated with $\mathcal{Q}$. If $\mathcal{Q} \in \mathfrak{L}^{\pm}_{\mathrm{metric}}(n)$, we call
    $\varphi$ the \textit{generalized resistive embedding}.
\end{definition}

This construction extends the resistive embedding for connected unsigned undirected graphs (Def.I\hspace{-1.2pt}V.25 of \cite{Devriendt2022graph}). For any $i,j\in[n]$, \cref{eq:generalized_resistive_embedding_property_1} implies $\|\varphi(\bm{1}_i) -\varphi(\bm{1}_j)\|^2 =R_{\mathcal{G}}(i,j)$. The points $\{\varphi(\bm{1}_1),\ldots, \varphi(\bm{1}_n)\}$ form a simplex, denoted by $\mathcal{S}$. 
In other words, 
\begin{equation*}
      \mathcal{S} 
      := \operatorname{conv}\{ \varphi(\bm{1}_1), \dots, \varphi(\bm{1}_n)\} 
      = \left\{ \textstyle\sum_{i=1}^n g_i \varphi(\bm{1}_i) : \bm{g}^{\top} \bm{1} = 1,\ \bm{0} \leq \bm{g} \in \mathbb{R}^n \right\}.
\end{equation*}
In addition, for a subset of nodes $\mathcal{V} \subsetneq [n],\ |\mathcal{V}|\geq 2$, the $\mathcal{V}$-face of $\mathcal{S}$ is defined as follows (see \cite{Fiedler2011_2} and Chap.I\hspace{-1.2pt}V of \cite{Devriendt2022graph}):
\begin{equation*}
    \mathcal{V}\text{-face} 
    := \operatorname{conv}\{ \varphi(\bm{1}_i) : i \in \mathcal{V}\} 
    = \left\{ \textstyle\sum_{i \in \mathcal{V}} g_i \varphi (\bm{1}_i) : \bm{g}^{\top} \bm{1} = 1,\ \bm{0} \leq \bm{g} \in \mathbb{R}^{|\mathcal{V}|} \right\}.
\end{equation*}
Moreover, for the center of gravity of $\mathcal{S}$, we have
\begin{equation}
    \label{eq:generalized_resistive_embedding_1}
    \varphi(\bm{1}/n) = B \bm{1}/n = \bm{0}, \quad \|\varphi(\bm{1}_i) \|^2 
    = \|B \bm{1}_i \|^2 = \bm{1}_i^{\top} \mathcal{Q}^{\dagger} \bm{1}_i = \mathcal{Q}^{\dagger}_{ii} > 0. 
\end{equation}
This states that the squared distance between the center of gravity $\varphi(\bm{1}/n)$ of the simplex, which is the origin, and each vertex $\varphi(\bm{1}_i)$ is expressed by the corresponding diagonal element of the pseudoinverse Laplacian.

Next, we introduce the geometric properties of $\bm{p}$ and $\sigma^2$ through the Euclidean simplex realization.

\begin{proposition}[circumsphere interpretation, a generalization of Prop.I\hspace{-1.2pt}V.32 in \cite{Devriendt2022graph}]
     Consider a signed undirected graph specified by Laplacian $\mathcal{Q}$ in $\mathfrak{L}_{\mathrm{simplex}}^{\pm}(n)$. 
     The circumcenter of the simplex $\mathcal{S}$ is $\varphi(\bm{p})$, and its circumradius is equal to $\sigma$. 
\end{proposition}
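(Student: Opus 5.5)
Since the realization $\varphi$ is affine on the hyperplane $\{\bm{f}:\bm{1}^{\top}\bm{f}=1\}$ and the points $\varphi(\bm{1}_i)$ are affinely independent ($\ker B=\operatorname{span}(\bm{1})$), the affine hull of $\mathcal{S}$ is all of $\mathbb{R}^{n-1}$. Hence the circumcenter is the unique point of $\mathbb{R}^{n-1}$ equidistant from all $n$ vertices, and it can be written as $\varphi(\bm{g})$ for a unique $\bm{g}$ with $\bm{1}^{\top}\bm{g}=1$. My plan is to check directly that $\bm{g}=\bm{p}$ has this equidistance property with common squared distance $\sigma^2$; uniqueness then identifies $\varphi(\bm{p})$ as the circumcenter. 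Throughout, $\bm{p}=\bm{p}([n])$ and $\sigma^2=\sigma^2([n])$ are given by \cref{eq:resistance_curvature_radius_for_Q}. These are well defined because $\Omega$ is invertible and $\bm{1}^{\top}\Omega^{-1}\bm{1}>0$, as noted after \cref{eq:resistance_curvature_radius_for_Q}. They satisfy $\bm{1}^{\top}\bm{p}=1$ and $\Omega\bm{p}=2\sigma^2\bm{1}$.

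The key computation: for any $\bm{g}$ with $\bm{1}^{\top}\bm{g}=1$, set $\bm{x}=\bm{1}_i-\bm{g}$, so that $\bm{1}^{\top}\bm{x}=0$. Using $\mathcal{Q}^{\dagger}=B^{\top}B$,
\begin{equation*}
\|\varphi(\bm{1}_i)-\varphi(\bm{g})\|^2=\bm{x}^{\top}\mathcal{Q}^{\dagger}\bm{x}=-\tfrac{1}{2}\bm{x}^{\top}\Omega\bm{x},
\end{equation*}
where the last equality follows from \cref{eq:ER_for_Q_simplex}, since the terms $\bm{1}\bm{\zeta}^{\top}+\bm{\zeta}\bm{1}^{\top}$ vanish on $\bm{x}\perp\bm{1}$. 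Expanding with $\Omega_{ii}=0$ gives
\begin{equation*}
\|\varphi(\bm{1}_i)-\varphi(\bm{g})\|^2=(\Omega\bm{g})_i-\tfrac{1}{2}\bm{g}^{\top}\Omega\bm{g}.
\end{equation*}
Substituting $\bm{g}=\bm{p}$ and using $\Omega\bm{p}=2\sigma^2\bm{1}$ and $\bm{p}^{\top}\Omega\bm{p}=2\sigma^2$, we get $2\sigma^2-\sigma^2=\sigma^2$ for every $i$. So $\varphi(\bm{p})$ is equidistant from all vertices at distance $\sigma$. Here $\sigma^2>0$ holds because $\bm{1}^{\top}\Omega^{-1}\bm{1}>0$.

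Uniqueness also follows from the displayed formula. Equidistance of $\varphi(\bm{g})$ means $\Omega\bm{g}=c\bm{1}$ for some $c$, so $\bm{g}=c\,\Omega^{-1}\bm{1}$, and the normalization $\bm{1}^{\top}\bm{g}=1$ forces $\bm{g}=\bm{p}$. This route uses only invertibility of $\Omega$, which the paper already established for $\mathfrak{L}_{\mathrm{simplex}}^{\pm}(n)$, so the triangle inequality is never needed.

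The only delicate step is the polarization identity relating $\mathcal{Q}^{\dagger}$ to $-\tfrac12\Omega$ on $\operatorname{span}(\bm{1})^{\perp}$, together with the observation that $\bm{p}$ may have negative entries. The latter is why the circumcenter can lie outside $\mathcal{S}$; this is harmless because $\varphi$ is defined on the whole affine hyperplane, not just on $\Delta_n$.
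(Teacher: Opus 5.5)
Your proposal is correct and takes essentially the same route as the paper. Both arguments compute $\|\varphi(\bm{p})-\varphi(\bm{1}_i)\|^2=(\bm{p}-\bm{1}_i)^{\top}\mathcal{Q}^{\dagger}(\bm{p}-\bm{1}_i)=\bm{1}_i^{\top}\Omega\bm{p}-\tfrac12\bm{p}^{\top}\Omega\bm{p}=\sigma^2$, using $\mathcal{Q}^{\dagger}=B^{\top}B$, the expression of $\Omega$ in terms of $\mathcal{Q}^{\dagger}$, and $\Omega\bm{p}=2\sigma^2\bm{1}$. Your explicit uniqueness step (equidistance forces $\Omega\bm{g}\in\operatorname{span}(\bm{1})$, hence $\bm{g}=\bm{p}$) is a small but worthwhile addition, since the paper leaves uniqueness implicit in its phrase ``in other words.''
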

\begin{proof}
    The squared distance between $\varphi(\bm{p})$ and each vertex $\varphi(\bm{1}_i)$ is
    \begin{align*}
        &\|\varphi(\bm{p}) - \varphi(\bm{1}_i)\|^2 
        = \|B (\bm{p} - \bm{1}_i) \|^2 
        = (\bm{p} - \bm{1}_i)^{\top} \mathcal{Q}^{\dagger} (\bm{p} - \bm{1}_i) \\
        &\underset{\cref{eq:ER_for_Q_simplex}}{=} -\frac{1}{2} \bm{p}^{\top} \Omega \bm{p} + \bm{1}_i^{\top} \Omega \bm{p} 
        \underset{\cref{eq:resistance_curvature_radius_for_Q}}{=} -\sigma^2 + \bm{1}_i^{\top} (2\sigma^2 \bm{1}) 
        =\sigma^2.
    \end{align*}
    Since this holds for any $i \in [n]$, the distances from $\varphi(\bm{p})$ to each vertex $\varphi(\bm{1}_i)$ of $\mathcal{S}$ are all equal and $\sigma > 0$. In other words, the circumcenter of $\mathcal{S}$ is $\varphi(\bm{p})$, and its circumradius is $\sigma$.
\end{proof}

The next proposition gives a geometric interpretation of the entries of $\mathcal{Q}$ and $\mathcal{Q}^\dagger$. Since any Laplacian $\mathcal{Q}$ in $\mathfrak{L}_{\mathrm{simplex}}^{\pm}(n)$ defines a Euclidean simplex via $\mathcal{Q}^{\dagger}$, Fiedler's classical angle formulas (Chap.I\hspace{-1.2pt}I\hspace{-1.2pt}I of \cite{Fiedler2011_2}) apply to our setting, even when the underlying graph has signed edges.

\begin{proposition}[Fiedler-type angle formulas \cite{Fiedler2011_2}]
    \label{prop:angles_fiedler_type}
    Let $n \geq 3$, and let $\mathcal{Q}\in\mathfrak{L}^{\pm}_{\mathrm{simplex}}(n)$ be a signed undirected Laplacian.
    Let $\theta_{ij}$ denote the interior dihedral angle between the $i^c$-face and $j^c$-face of $\mathcal{S}$, and let $\phi_{ij}$ be the angle subtended by the vertices $\varphi(\bm{1}_i)$ and $\varphi(\bm{1}_j)$ at the center of gravity of $\mathcal{S}$. 
    Then, the following formulas hold: 
    \begin{equation*}
        \cos\theta_{ij} = -\frac{\mathcal{Q}_{ij}}{\sqrt{\mathcal{Q}_{ii} \mathcal{Q}_{jj}}},\quad
        \cos\phi_{ij} = \frac{\mathcal{Q}^{\dagger}_{ij}}{\sqrt{\mathcal{Q}^{\dagger}_{ii} \mathcal{Q}^{\dagger}_{jj}}}.
    \end{equation*}
\end{proposition}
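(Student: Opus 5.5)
The second formula is the easier one and I would handle it first. By \cref{eq:generalized_resistive_embedding_1}, the center of gravity of $\mathcal{S}$ is the origin. Each vertex is $\varphi(\bm{1}_i)=\bm{b}_i$ with $\|\bm{b}_i\|^2=\mathcal{Q}^{\dagger}_{ii}>0$. By \cref{eq:generalized_resistive_embedding_property_1}, $\bm{b}_i^{\top}\bm{b}_j=\mathcal{Q}^{\dagger}_{ij}$. Hence the angle $\phi_{ij}$ between $\bm{b}_i$ and $\bm{b}_j$ satisfies $\cos\phi_{ij}=\bm{b}_i^{\top}\bm{b}_j/(\|\bm{b}_i\|\|\bm{b}_j\|)$, which is exactly the claimed expression.

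For the dihedral angles I would work with the rows of $B^{\dagger}$, that is, with the vectors $\bm{w}_i:=(B^{\dagger})^{\top}\bm{1}_i\in\mathbb{R}^{n-1}$.

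\textbf{Step 1 (normals).} From \cref{eq:SVD_B_Bdagger} and $\ker(B)=\operatorname{span}(\bm{1})$ we get $B^{\dagger}B=I_n-\bm{1}\bm{1}^{\top}/n$. Therefore
\begin{equation*}
\bm{w}_i^{\top}\bm{b}_k=(B^{\dagger}B)_{ik}=\delta_{ik}-1/n.
\end{equation*}
For $k,l\neq i$ this gives $\bm{w}_i^{\top}(\bm{b}_k-\bm{b}_l)=0$, so $\bm{w}_i$ is orthogonal to the affine hull of the $i^c$-face. The same identity gives $\bm{w}_i^{\top}\bm{b}_i-\bm{w}_i^{\top}\bm{b}_k=1>0$ for $k\ne i$. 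So $\bm{w}_i$ is a normal of the facet opposite vertex $i$, and it points from that facet toward vertex $i$, i.e.\ it is an inward normal. Also $\bm{w}_i\neq\bm{0}$, because its inner products with the vertices are not all equal.

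\textbf{Step 2 (Gram matrix of the normals).} By \cref{eq:SVD_B_Bdagger_2},
\begin{equation*}
\bm{w}_i^{\top}\bm{w}_j=(B^{\dagger}(B^{\dagger})^{\top})_{ij}=\mathcal{Q}_{ij}.
\end{equation*}

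\textbf{Step 3 (angle relation).} For two facets with inward unit normals $\hat{\bm{w}}_i,\hat{\bm{w}}_j$, the interior dihedral angle satisfies $\theta_{ij}=\pi-\angle(\hat{\bm{w}}_i,\hat{\bm{w}}_j)$. Hence $\cos\theta_{ij}=-\hat{\bm{w}}_i^{\top}\hat{\bm{w}}_j=-\mathcal{Q}_{ij}/\sqrt{\mathcal{Q}_{ii}\mathcal{Q}_{jj}}$.

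The assumption $n\geq3$ is what makes the two facets meet in a common $(n-3)$-dimensional face, namely the face spanned by the vertices in $\{i,j\}^c$, so that $\theta_{ij}$ is defined at all.

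I expect Step 3 to be the main obstacle. It amounts to justifying the orientation convention: the interior angle equals $\pi$ minus the angle between the inward normals. I would verify this in the two-dimensional plane orthogonal to the common face $\{i,j\}^c$. Both normals $\bm{w}_i,\bm{w}_j$ lie in that plane, since each is orthogonal to $\bm{b}_k-\bm{b}_l$ for $k,l\in\{i,j\}^c$. In that plane the two facets project to two rays from a common point, and the simplex projects to the wedge between them. Elementary planar geometry then shows that the inward normals of the two rays make an angle of $\pi-\theta_{ij}$.

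None of these steps uses the sign of the edge weights. They use only that $\mathcal{Q}$ is PSD with kernel $\operatorname{span}(\bm{1})$, so the argument is valid on all of $\mathfrak{L}^{\pm}_{\mathrm{simplex}}(n)$.
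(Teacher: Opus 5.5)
Your proposal is correct and follows essentially the same route as the paper. The paper also takes $(B^{\dagger})^{\top}\bm{1}_i$ as the inward normal of the $i^c$-face via $B^{\dagger}B = I_n - \bm{1}\bm{1}^{\top}/n$, reads off $\cos(\pi-\theta_{ij})$ from $B^{\dagger}(B^{\dagger})^{\top}=\mathcal{Q}$, and gets $\cos\phi_{ij}$ from the vertex Gram entries (through the cosine rule with $R_{\mathcal{G}}(i,j)$, which is equivalent to your direct inner-product computation). Your planar reduction orthogonal to the common $\{i,j\}^c$-face adds a justification of the $\pi-\theta_{ij}$ convention, which the paper only asserts with reference to a figure.
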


\begin{proof}
    The linear subspace parallel to the $i^c$-face is $\mathcal{H}_i := \{B\bm{f}:\bm{f}^{\top}\bm{1}=0,\ f_i=0\}$.
    Indeed, $\mathcal{H}_i$ is generated by the differences $\bm{b}_j-\bm{b}_k$ with $j,k\neq i$.
    Thus, for any $\bm{f}\in\mathbb{R}^n$ such that $\bm{f}^{\top}\bm{1}=0$ and $f_i=0$, the vector $B\bm{f}$ lies in $\mathcal{H}_i$, the direction space parallel to the $i^c$-face. Then,
    \begin{equation*}
        ((B^{\dagger})^{\top} \bm{1}_i )^{\top} B \bm{f} 
        = \bm{1}_i^{\top} B^{\dagger}B \bm{f}
        \underset{\cref{eq:SVD_B_Bdagger}}{=} \bm{1}_i^{\top}(I_n - \bm{1}\bm{1}^{\top} / n) \bm{f}
        =0.
    \end{equation*}
    Therefore, $(B^{\dagger})^{\top} \bm{1}_i$ is a normal vector of the $i^c$-face. Since $((B^{\dagger})^{\top} \bm{1}_i )^{\top} \bm{b}_i = 1 -1/n >0$ and $((B^{\dagger})^{\top}\bm{1}_i)^{\top}\bm{b}_j = -1/n \ (\forall j\neq i)$ from \cref{eq:SVD_B_Bdagger}, $(B^{\dagger})^{\top} \bm{1}_i$ points to the same side of the $i^c$-face as the vertex $\bm{b}_i $. 
    As illustrated in \cref{fig:picture_for_angles}, the dihedral angle $\theta_{ij}$ can be calculated from the angle between the two vectors $(B^{\dagger})^{\top} \bm{1}_i$ and $(B^{\dagger})^{\top} \bm{1}_j$ as follows:
    \begin{equation*}
        \cos(\pi-\theta_{ij})
        = 
        -\cos\theta_{ij}
        = 
        \frac{ \bm{1}_i^{\top} B^{\dagger} (B^{\dagger})^{\top}\bm{1}_j } {\|(B^{\dagger})^{\top}\bm{1}_i \| \|(B^{\dagger})^{\top}\bm{1}_j \|} 
        \underset{\cref{eq:SVD_B_Bdagger_2}}{=}
        \frac{\bm{1}_i^{\top}\mathcal{Q}\bm{1}_j}{\sqrt{\bm{1}_i^{\top}\mathcal{Q}\bm{1}_i  \bm{1}_j^{\top}\mathcal{Q}\bm{1}_j}} 
        = 
        \frac{\mathcal{Q}_{ij}}{\sqrt{\mathcal{Q}_{ii}\mathcal{Q}_{jj}}}.
    \end{equation*}

    Next, we calculate $\phi_{ij}$. 
    Applying the cosine rule to the triangle formed by the center of gravity and the two points $\varphi(\bm{1}_i)$ and $\varphi(\bm{1}_j)$, we have 
    \begin{equation*} 
        \cos{\phi_{ij}} 
        = 
        \frac{\| \varphi(\bm{1}_i) \|^2 + \| \varphi(\bm{1}_j) \|^2 - R_{\mathcal{G}}(i, j)}{2 \| \varphi(\bm{1}_i) \| \| \varphi(\bm{1}_j) \|}
        \underset{\cref{eq:generalized_resistive_embedding_1}}{=} 
        \frac{\mathcal{Q}^{\dagger}_{ii} + \mathcal{Q}^{\dagger}_{jj} - R_{\mathcal{G}}(i, j)}{2\sqrt{\mathcal{Q}^{\dagger}_{ii}  \mathcal{Q}^{\dagger}_{jj}}}
        \underset{\cref{eq:ER_for_Q_simplex}}{=} 
        \frac{\mathcal{Q}^{\dagger}_{ij}}{\sqrt{\mathcal{Q}^{\dagger}_{ii}  \mathcal{Q}^{\dagger}_{jj}}}.
    \end{equation*}
    This completes the proof.
\end{proof}

\begin{figure}[t]
    \centering
    \label{fig:picture_for_angles}
    \includegraphics[width = 0.6\textwidth]{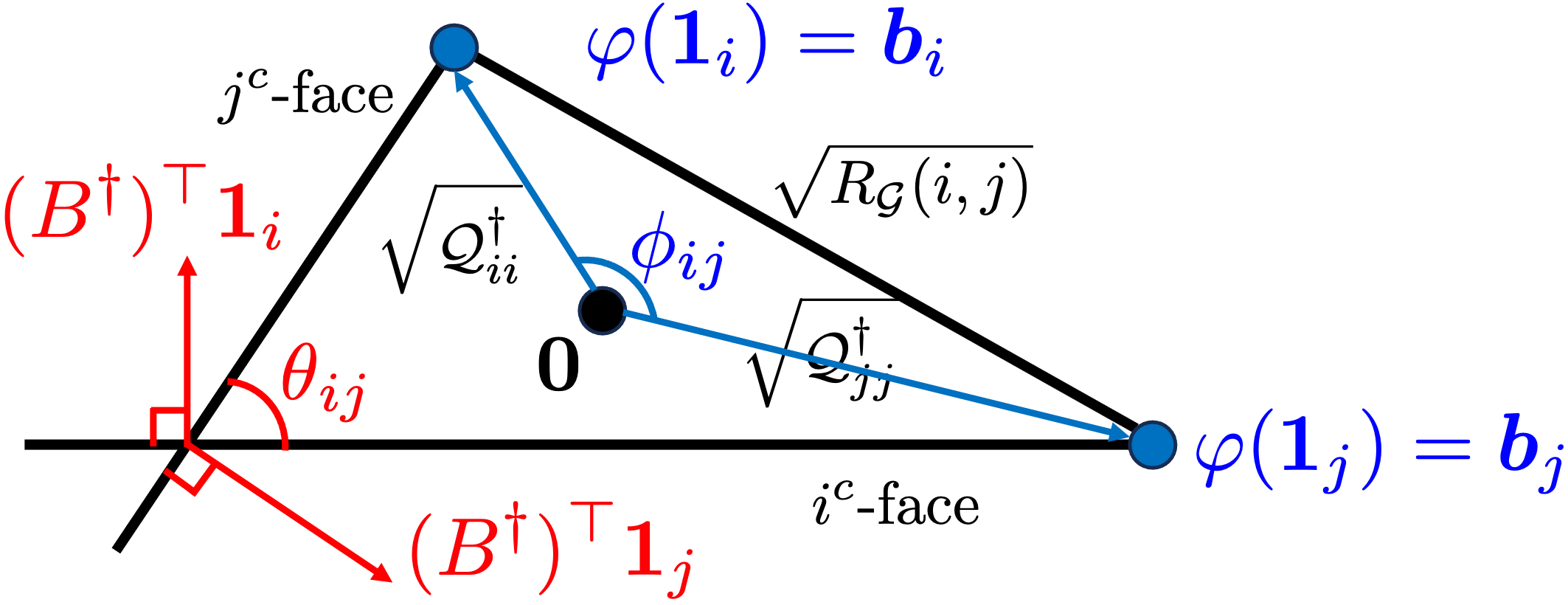}
    \caption{The inward normal vectors of the $i^c$- and $j^c$-faces are given by $(B^\dagger)^\top\bm{1}_i$ and $(B^\dagger)^\top\bm{1}_j$, respectively. The angle between these normals is $\pi-\theta_{ij}$, where $\theta_{ij}$ is the interior dihedral angle between the two faces.}
\end{figure}

\cref{prop:angles_fiedler_type} is classical, goes back to Fiedler’s simplex interpretation of Laplacians \cite{Fiedler2011_2}, and has been revisited in the modern Laplacian--simplex literature. The contribution here is that the same geometric correspondence applies to any matrix in $\mathfrak{L}_{\mathrm{simplex}}^{\pm}(n)$ arising from our generalized resistance framework, even in the presence of signed edges. 

In this interpretation, the sign pattern of $\mathcal{Q}$ directly reflects the geometry of the associated simplex: positive off-diagonal entries correspond to
obtuse dihedral angles between faces, while $\mathcal{Q}^\dagger$ encodes the pairwise angles between embedded vertices as seen from the center of gravity.
See \cref{fig:example_generalized_resistive_embedding} for a concrete example of this correspondence.

\begin{figure}[t]
    \centering
    \includegraphics[width = \textwidth]{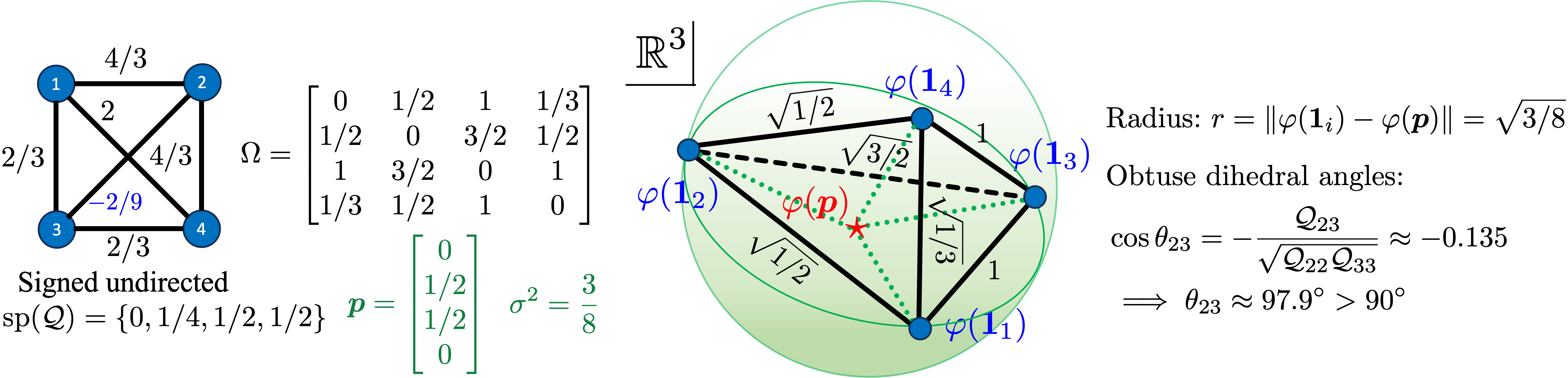}
    \caption{An example of the generalized resistive embedding in the metric subclass $\mathfrak{L}^{\pm}_{\mathrm{metric}}(4)$. The set of vertices, $\{\varphi(\bm{1}_1), \dots, \varphi(\bm{1}_4) \}$, forms a simplex, and the squared distances between pairs of vertices equal the corresponding effective resistances. The circumcenter of the simplex is $\varphi(\bm{p})$, and its circumradius is $\sigma = \sqrt{3/8}$. Since $\mathcal{Q}_{23} > 0$, the angle between the $2^c$-face and the $3^c$-face is obtuse.}
    \label{fig:example_generalized_resistive_embedding}
\end{figure}

\section{Conclusions}
\label{sec:conclusion}

Building on Kron reduction and effective resistance, we developed a generalized resistance geometry for SCWB-directed graphs and signed
undirected graphs. We established a generalized Fiedler--Bapat identity for SCWB-directed graphs, providing a foundation for defining the resistance curvature $\bm{p}$ and resistance radius $\sigma^2$. We also derived the directionality decomposition $(\mathcal{L}^{\dagger}_s)^{\dagger} = \mathcal{L}_s + \mathcal{K}(\mathcal{L}_s)^{\dagger}\mathcal{K}^{\top}$, and showed that it yields a pairwise effective-resistance comparison between
the associated signed undirected graph and the symmetrized graph. In the unsigned setting, this comparison extends a previous result obtained
under a normality assumption to all SCWB-directed Laplacians. Furthermore, we proved that the undirecting map $\mathcal{L} \mapsto (\mathcal{L}^{\dagger}_s)^{\dagger}$ commutes with Kron reduction. 
We introduced generalized resistance metrics for signed undirected graphs and showed that their resistance matrices are precisely the strict negative type metrics. Moreover, we characterized the unique optimaizer of maximum graph-variance problem, and developed generalized resistive embeddings. 

Future work includes extending the framework to broader classes of directed and signed Laplacians and clarifying the conditions under which $\bm{p}$ and $\sigma^2$ are well-defined for general SC-directed graphs. Another direction is graph comparison based on the proposed geometric invariants. 
For graphs with the same labeled node set, one may compare their resistance curvatures using $\|\bm{p}-\bm{p}'\|_2$ whenever both resistance curvatures are well-defined. For unlabeled graphs, node permutations should also be considered, while graphs with different numbers of nodes may be compared after Kron reduction onto suitable node subsets.

\bibliographystyle{siamplain}
\bibliography{Generalized_resistance_geometry_references}

\end{document}

%% file: Generalized_resistance_geometry_shared.tex
\usepackage{lipsum}
\usepackage{amsfonts}
\usepackage{graphicx}
\usepackage{epstopdf}
\usepackage{algorithmic}
\ifpdf
  \DeclareGraphicsExtensions{.eps,.pdf,.png,.jpg}
\else
  \DeclareGraphicsExtensions{.eps}
\fi

\newsiamremark{remark}{Remark}
\newsiamremark{hypothesis}{Hypothesis}
\crefname{hypothesis}{Hypothesis}{Hypotheses}
\newsiamthm{claim}{Claim}

\headers{Generalized Resistance Geometry from KR and ER}{Y. Kajiura and K. Sato}

\title{Generalized Resistance Geometry from Kron reduction and Effective Resistance
    \thanks{Submitted to the editors DATE.\funding{
    This work was supported by Japan Society for the Promotion of Science KAKENHI under 26K03232.}
    }
}

\author{%
  Yosuke Kajiura%
  \and
  Kazuhiro Sato%
  \thanks{%
    Department of Mathematical Informatics, Graduate School of Information Science and Technology,
    The University of Tokyo, 7-3-1 Hongo, Bunkyo-ku, Tokyo 113-8656, Japan
    (\email{kyh140207@g.ecc.u-tokyo.ac.jp}, \email{kazuhiro@mist.i.u-tokyo.ac.jp}).%
  }%
}

\usepackage{amsopn}
\DeclareMathOperator{\diag}{diag}

\makeatletter
\newcommand*{\addFileDependency}[1]{
  \typeout{(#1)}
  \@addtofilelist{#1}
  \IfFileExists{#1}{}{\typeout{No file #1.}}
}
\makeatother

\newcommand*{\myexternaldocument}[1]{%
    \externaldocument{#1}%
    \addFileDependency{#1.tex}%
    \addFileDependency{#1.aux}%
}

%% file: Generalized_resistance_geometry_references.bib
@inproceedings{Altafini2019Investigating,
  author    = {Altafini, Claudio},
  title     = {Investigating Stability of {L}aplacians on Signed Digraphs via Eventual Positivity},
  booktitle = {2019 IEEE 58th Conference on Decision and Control (CDC)},
  year      = {2019},
  pages     = {5044--5049},
  doi       = {10.1109/CDC40024.2019.9030125}
}

@book{Blumenthal1953theory,
  author    = {Blumenthal, Leonard M.},
  title     = {Theory and Applications of Distance Geometry},
  publisher = {Clarendon Press},
  address   = {Oxford},
  year      = {1953}
}

@article{Dawkins2024NodeRC,
  author  = {Dawkins, Aleyah and
             Gupta, Vishal and
             Kempton, Mark and
             Linz, William and
             Quail, Jeremy and
             Richman, D. Harry and
             Stier, Zachary},
  title   = {Node Resistance Curvature in {C}artesian Graph Products},
  journal = {Journal of Combinatorics},
  volume  = {16},
  number  = {4},
  pages   = {575--589},
  year    = {2025},
  doi     = {10.4310/JOC.250923003413}
}

@phdthesis{Devriendt2022graph,
  author = {Devriendt, Karel},
  title  = {Graph Geometry from Effective Resistances},
  school = {University of Oxford},
  type   = {{DPhil} thesis},
  year   = {2022},
  doi    = {10.5287/ora-j1p4rornb}
}

@article{Devriendt2024resistancedistance,
  author  = {Devriendt, Karel and
             Ottolini, Andrea and
             Steinerberger, Stefan},
  title   = {Graph Curvature via Resistance Distance},
  journal = {Discrete Applied Mathematics},
  volume  = {348},
  pages   = {68--78},
  year    = {2024},
  doi     = {10.1016/j.dam.2024.01.012}
}

@book{Doyle_Snell_1984,
  author    = {Doyle, Peter G. and Snell, J. Laurie},
  title     = {Random Walks and Electric Networks},
  publisher = {Mathematical Association of America},
  series    = {Carus Mathematical Monographs},
  volume    = {22},
  address   = {Washington, DC},
  year      = {1984}
}

@article{Dorfler2013,
  author  = {D{\"o}rfler, Florian and Bullo, Francesco},
  title   = {{K}ron Reduction of Graphs with Applications to Electrical Networks},
  journal = {IEEE Transactions on Circuits and Systems I: Regular Papers},
  volume  = {60},
  number  = {1},
  pages   = {150--163},
  year    = {2013},
  doi     = {10.1109/TCSI.2012.2215780}
}

@article{Fitch2019EffectiveResistance,
  author  = {Fitch, Katherine},
  title   = {Effective Resistance Preserving Directed Graph Symmetrization},
  journal = {SIAM Journal on Matrix Analysis and Applications},
  volume  = {40},
  number  = {1},
  pages   = {49--65},
  year    = {2019},
  doi     = {10.1137/18M1172892}
}

@article{Fiedler1962,
  author  = {Fiedler, Miroslav and Pt{\'a}k, Vlastimil},
  title   = {On Matrices with Non-Positive Off-Diagonal Elements and Positive Principal Minors},
  journal = {Czechoslovak Mathematical Journal},
  volume  = {12},
  number  = {3},
  pages   = {382--400},
  year    = {1962},
  doi     = {10.21136/CMJ.1962.100526}
}

@incollection{Fiedler2011,
  author    = {Fiedler, Miroslav},
  title     = {Special Simplexes},
  booktitle = {Matrices and Graphs in Geometry},
  publisher = {Cambridge University Press},
  series    = {Encyclopedia of Mathematics and its Applications},
  volume    = {139},
  pages     = {64--113},
  address   = {Cambridge},
  year      = {2011},
  isbn      = {978-0-521-46193-1}
}

@book{Fiedler2011_2,
  author    = {Fiedler, Miroslav},
  title     = {Matrices and Graphs in Geometry},
  publisher = {Cambridge University Press},
  series    = {Encyclopedia of Mathematics and its Applications},
  volume    = {139},
  address   = {Cambridge},
  year      = {2011},
  isbn      = {978-0-521-46193-1}
}

@inproceedings{Fontan2021pseudoinverse,
  author    = {Fontan, Angela and Altafini, Claudio},
  title     = {On the Properties of {L}aplacian Pseudoinverses},
  booktitle = {2021 60th IEEE Conference on Decision and Control (CDC)},
  year      = {2021},
  pages     = {5538--5543},
  doi       = {10.1109/CDC45484.2021.9683525}
}

@article{Fontan2023pseudoinverse,
  author  = {Fontan, Angela and Altafini, Claudio},
  title   = {Pseudoinverses of Signed {L}aplacian Matrices},
  journal = {SIAM Journal on Matrix Analysis and Applications},
  volume  = {44},
  number  = {2},
  pages   = {622--647},
  year    = {2023},
  doi     = {10.1137/22M1493392}
}

@book{GodsilRoyle2001AGT,
  author    = {Godsil, Chris and Royle, Gordon},
  title     = {Algebraic Graph Theory},
  publisher = {Springer},
  series    = {Graduate Texts in Mathematics},
  volume    = {207},
  address   = {New York, NY},
  year      = {2001},
  doi       = {10.1007/978-1-4613-0163-9},
  isbn      = {978-0-387-95241-3}
}

@book{Kron1939TAN,
  author    = {Kron, Gabriel},
  title     = {Tensor Analysis of Networks},
  publisher = {John Wiley \& Sons and Chapman \& Hall},
  address   = {New York and London},
  series    = {General Electric Series},
  year      = {1939}
}

@article{Klein1993,
  author  = {Klein, Douglas J. and Randi{\'c}, Milan},
  title   = {Resistance Distance},
  journal = {Journal of Mathematical Chemistry},
  volume  = {12},
  pages   = {81--95},
  year    = {1993},
  doi     = {10.1007/BF01164627}
}

@article{Sugiyama2023graph,
  author  = {Sugiyama, Tomohiro and Sato, Kazuhiro},
  title   = {{K}ron Reduction and Effective Resistance of Directed Graphs},
  journal = {SIAM Journal on Matrix Analysis and Applications},
  volume  = {44},
  number  = {1},
  pages   = {270--292},
  year    = {2023},
  doi     = {10.1137/22M1480823}
}

@article{Young2016,
  author  = {Young, George Forrest and
             Scardovi, Luca and
             Leonard, Naomi Ehrich},
  title   = {A New Notion of Effective Resistance for Directed Graphs---Part {II}: Computing Resistances},
  journal = {IEEE Transactions on Automatic Control},
  volume  = {61},
  number  = {7},
  pages   = {1737--1752},
  year    = {2016},
  doi     = {10.1109/TAC.2015.2481839}
}

@book{zhang2005schurComplement,
  author    = {Zhang, Fuzhen},
  title     = {The {S}chur Complement and Its Applications},
  publisher = {Springer},
  series    = {Numerical Methods and Algorithms},
  volume    = {4},
  address   = {New York, NY},
  year      = {2005},
  doi       = {10.1007/b105056},
  isbn      = {978-0-387-24271-2}
}
